\newcommand{\figsqueeze}{}
\newcommand{\ignore}[1]{}
\definecolor{darkgreen}{rgb}{0,0.5,0}
\definecolor{darkred}{rgb}{0.7,0,0}
\newcommand{\kibitz}[2]{\ifnum\Comments=1\textcolor{#1}{#2}\fi}
\newcommand{\jenn}[1]{\kibitz{magenta}{[JWV: #1]}}
\newcommand{\raf}[1]{\kibitz{darkgreen}{[RMF: #1]}}
\newcommand{\miro}[1]{\kibitz{Cerulean}{[MD: #1]}}
\newcommand{\argmax}{\mathop{\rm argmax}}
\newcommand{\argmin}{\mathop{\rm argmin}}
\newcommand{\Parens}[1]{\left(#1\right)}
\newcommand{\BigParens}[1]{\Bigl(#1\Bigr)}
\newcommand{\bigParens}[1]{\bigl(#1\bigr)}
\newcommand{\Bracks}[1]{\left[#1\right]}
\newcommand{\BigBracks}[1]{\Bigl[#1\Bigr]}
\newcommand{\bigBracks}[1]{\bigl[#1\bigr]}
\newcommand{\set}[1]{\{#1\}}
\newcommand{\Set}[1]{\left\{#1\right\}}
\newcommand{\reals}{\mathbb{R}}
\newcommand{\realsplusinf}{(-\infty,\infty]}
\newcommand{\ones}{\mathbb{I}}
\def\mythin{1.5mu minus 1mu}
\newcommand{\thinskips}{%
\thinmuskip=\mythin\relax
\medmuskip=\mythin\relax
\thickmuskip=\mythin\relax
}
\theoremstyle{plain}
\newtheorem{theorem}{Theorem}
\newtheorem{proposition}{Proposition}
\newtheorem{definition}{Definition}
\newtheorem{lemma}{Lemma}
\newtheorem{corollary}{Corollary}
\theoremstyle{definition}
\newtheorem{example}{Example}
\newcommand{\squishlist}{
   \begin{list}{$\bullet$}
    { \setlength{\itemsep}{0pt}      \setlength{\parsep}{1pt}
      \setlength{\topsep}{0pt}       \setlength{\partopsep}{0pt}
     \setlength{\leftmargin}{1em} \setlength{\labelwidth}{1.5em}
      \setlength{\labelsep}{0.5em} } }
\newcommand{\squishend}{  \end{list}  }
\newcommand{\romanbf}[1]{{\boldsymbol{#1}}}
\def\x{\romanbf{x}}
\def\p{\romanbf{p}}
\def\q{\romanbf{q}}
\def\r{\romanbf{r}}
\def\s{\romanbf{s}}
\def\1{\mathbf{1}}
\def\0{\mathbf{0}}
\def\a{\romanbf{a}}
\def\A{\mathbf{A}}
\def\E{\mathbb{E}}
\def\I{\mathcal{I}}
\def\M{\mathcal{M}}
\def\reals{\mathbb{R}}
\def\y{\romanbf{y}}
\def\rhob{\boldsymbol{\rho}}
\newcommand{\G}{\mathcal{G}}
\newcommand{\X}{\mathcal{X}}
\newcommand{\Y}{\mathcal{Y}}
\let\sectsymb\S
\renewcommand{\S}{\mathcal{S}}
\renewcommand{\b}{\romanbf{b}}
\newcommand{\KL}{\text{\upshape KL}}
\def\payb{\rhob}
\def\lub{\boldsymbol{\lambda}}
\def\mub{\boldsymbol{\mu}}
\def\hmu{\hat{\mu}}
\def\hmub{\boldsymbol{\hat{\mu}}}
\def\tmub{\boldsymbol{\tilde{\mu}}}
\def\hull{\mathcal{M}}
\newcommand{\nub}{\boldsymbol{\nu}}
\newcommand{\C}{\mathbf{C}}
\newcommand{\tC}{\tilde{C}}
\newcommand{\tR}{\tilde{R}}
\newcommand{\tD}{\tilde{D}}
\newcommand{\tqb}{\tilde{\romanbf{q}}}
\newcommand{\tpb}{\tilde{\romanbf{p}}}
\newcommand{\tsb}{\tilde{\romanbf{s}}}
\newcommand{\hb}{\hat{b}}
\newcommand{\hbb}{\romanbf{\hat{b}}}
\newcommand{\etab}{\boldsymbol{\eta}}
\newcommand{\deltab}{\boldsymbol{\delta}}
\newcommand{\omegab}{\boldsymbol{\omega}}
\newcommand{\Rx}{R_{\oplus}}
\newcommand{\Cx}{C_{\oplus}}
\newcommand{\Dx}{D_{\oplus}}
\newcommand{\tCx}{\tC_{\oplus}}
\newcommand{\tDx}{\tD_{\oplus}}
\newcommand{\CCx}{\mathbf{C}_{\oplus}}
\newcommand{\hullx}{\hull_{\oplus}}
\newcommand{\trb}{\tilde{\romanbf{r}}}
\newcommand{\tN}{{\tilde{N}}}
\newcommand{\ini}{\mathtt{ini}}
\newcommand{\fin}{\mathtt{fin}}
\newcommand{\event}{\mathcal{E}}
\DeclareMathOperator{\relint}{relint}
\DeclareMathOperator{\dom}{dom}
\DeclareMathOperator{\conv}{conv}
\DeclareMathOperator{\epi}{epi}
\DeclareMathOperator{\cl}{cl}
\newcommand{\inprod}{\cdot}
\newcommand{\wo}{\backslash}
\newcommand{\given}{\mathrel{\vert}}
\newcommand{\Sec}[1]{Sec.~\ref{sec:#1}}
\newcommand{\SEC}[1]{SECTION~\ref{sec:#1}}
\newcommand{\App}[1]{Appendix~\ref{app:#1}}
\newcommand{\Eq}[1]{Eq.~\eqref{eq:#1}}
\newcommand{\Prop}[1]{Prop.~\ref{prop:#1}}
\newcommand{\PROP}[1]{PROPOSITION~\ref{prop:#1}}
\newcommand{\Thm}[1]{Theorem~\ref{thm:#1}}
\newcommand{\THM}[1]{THEOREM~\ref{thm:#1}}
\newcommand{\Tab}[1]{Table~\ref{tab:#1}}
\newcommand{\Prot}[1]{Protocol~\ref{prot:#1}}
\newcommand{\Lem}[1]{Lemma~\ref{lem:#1}}
\newcommand{\LEM}[1]{LEMMA~\ref{lem:#1}}
\newcommand{\Fig}[1]{Fig.~\ref{fig:#1}}
\newcommand{\Ex}[1]{Example~\ref{ex:#1}}
\newcommand{\Def}[1]{Definition~\ref{def:#1}}
\newcommand{\defline}[1]{\smallskip\par\centerline{$#1$}}
\newcommand{\Slovakia}{Norway\xspace}
\newcommand{\CondPrice}{\textsc{CondPrice}\xspace}
\newcommand{\CondVal}{\textsc{Ex\-Util}\xspace}
\newcommand{\DecVal}{\textsc{DecUtil}\xspace}
\newcommand{\ZeroVal}{\textsc{Ze\-ro\-Util}\xspace}
\newcommand{\Price}{\textsc{Price}\xspace}
\newcommand{\WCL}{\ensuremath{\mathtt{WCLoss}}}
\newcommand{\AdvanceState}{\ensuremath{\mathtt{NewState}}}
\newcommand{\AdvanceCost}{\ensuremath{\mathtt{NewCost}}}
\newcommand{\Value}{\ensuremath{\mathtt{Util}}}
\newcommand{\tValue}{\ensuremath{\mathtt{\tilde{U}til}}}
\newcommand{\RBV}{\ensuremath{\mathtt{RBUtil}}}
\newcommand{\tRBV}{\ensuremath{\mathtt{\tilde{RB}Util}}}
\newcommand{\ttt}{\tilde{t}}
\renewcommand{\u}{\romanbf{u}}
\renewcommand{\v}{\romanbf{v}}
\newcommand{\talpha}{\alpha}
\newcommand{\ttC}{\tC}
\newcommand{\ttR}{\tR}
\newcommand{\ttD}{\tD}
\newcommand{\ttCx}{\tCx}
\newcommand{\ttDx}{\tDx}
\newcommand{\ttValue}{\tValue}
\newenvironment{protocol}{\begingroup%
\begin{algorithm}}{\end{algorithm}\endgroup}
\renewcommand{\=}{\!=\!}
\let\oldsection\section
\let\oldsubsection\subsection
\renewcommand{\section}[1]{\vspace{-5pt}\oldsection{#1}\vspace{-10pt}}
\renewcommand{\subsection}[1]{\vspace{-5pt}\oldsubsection{#1}\vspace{-5pt}}
\title{Market Making with Decreasing Utility for Information}
\author{Miroslav Dud\'ik
\\{Microsoft Research}
\And
Rafael Frongillo
\\{Microsoft Research}
\And
Jennifer Wortman Vaughan
\\{Microsoft Research}}
\begin{document}

\maketitle

\begin{abstract}
\jenn{Added back this weird "func\-tion" to
make the arxiv version match the camera ready.  We should remove
this for the long version.}
  We study information elicitation in cost-func\-tion-based
  combinatorial prediction markets when the market maker's utility for information
  decreases over time.  In the \emph{sudden
    revelation} setting, it is known that some piece of information
  will be revealed to traders, and the market maker wishes to prevent
  guaranteed profits for trading on the sure information.
  In the \emph{gradual decrease} setting,
  the market maker's utility for (partial) information decreases
  continuously over time. We design adaptive cost functions for both
  settings which: (1) preserve the information previously gathered in
  the market; (2) eliminate (or diminish) rewards to traders for the publicly
  revealed information; (3) leave the
  reward structure unaffected for other information; and (4)
  maintain the market maker's worst-case loss. Our constructions
  utilize mixed Bregman divergence, which matches our notion of
  utility for information.
\end{abstract}

\section{INTRODUCTION}
\label{sec:intro}

Prediction markets have been used to elicit information in a variety of domains, including business~\cite{Spa:03,Cowgill:08,C07,B13}, politics~\cite{Berg+08,U+12}, and entertainment~\cite{PenScience:01}.
In a prediction market, traders buy and sell \emph{securities} with values that depend on some unknown future outcome.  For example, a market might offer securities worth \$1 if \Slovakia wins a gold medal in Men's Moguls in the 2014 Winter Olympics and \$0 otherwise.
Traders are given an incentive to reveal their beliefs about the outcome by buying and selling securities, e.g., if the current price of the above security is \$0.15, traders who believe that the probability of \Slovakia winning 
is more than 15\% are incentivized to buy and those who believe that the probability is less than 15\% are incentivized to sell. The equilibrium price 
reflects the market consensus about the security's expected payout (which here coincides with the probability of \Slovakia winning the medal).

There has recently been a surge of research on the design of prediction markets operated by a centralized authority called a \emph{market maker}, an algorithmic agent that offers to buy or sell securities at some current 
price that depends on the history of trades in the market.
Traders in these markets can express their belief
whenever it differs from the current price
by either buying or selling, regardless of whether other traders are willing to act as a counterparty,
because the market maker always acts as a counterparty, thus ``providing the liquidity''
and subsidizing the information collection.
This 
is useful in situations when the lack of interested
traders would negatively impact the efficiency
in a traditional exchange.
%
 Of particular interest to us are \emph{combinatorial prediction markets}~\citep{H03,H07,CFNP07,CFLPW08,CGP08,GP09,PX11} which offer securities on various related events such as ``\Slovakia wins a total of 4 gold medals in the 2014 Winter Olympics'' and ``\Slovakia wins a gold medal in Men's Moguls.''  In combinatorial markets with large, expressive security spaces, such as an Olympics market with securities covering 88 nations participating in 98 events, the lack of an interested counterparty is a major concern. Only a single trader may be interested in trading the security associated with a specific event, but we would still like the market to incorporate this trader's information.
\miro{I am hoping that the reasoning in the previous two sentences is clearer given the previous prose. Do we need to be more precise about ``liquidity'' by saying something like ``interested counterparty''? Is it clear how the incorporation of trader's information could be hampered?}

Most market makers considered in the literature are implemented using a pricing function called the \emph{cost function}~\cite{Chen07}.
While such markets have many favorable properties~\cite{Abernethy11,Abernethy13},
the current approaches have several drawbacks that limit their applicability in real-world settings.
First, existing work implicitly assumes that the outcome is revealed all at once. When concerned about ``just-in-time arbitrage,''
in which traders closer to the information source
make last-minute guaranteed profits by trading on the sure information before the market maker can adjust prices,
the market maker can prevent such profits by closing the entire market just before the outcome is revealed.
%
%
This approach is undesirable when partial information about the outcome is revealed over time, as is often the case
in practice, including the Olympics market.
For instance, we may learn the results of Men's Moguls before Ladies' Figure Skating has taken place.  Closing a large combinatorial market whenever a small portion of the outcome is determined seems to be an unreasonably large intervention.


Second, in real markets, the information captured by the market's consensus prices often becomes less useful as the revelation of the outcome approaches.  Consider a market over the event ``Unemployment in the U.S.\ falls below 5.8\% by the end of 2015.''  Although there may be a particular moment when the unemployment rate is publicly revealed, this information becomes gradually less useful as that moment approaches; the government may be less able to act on the information as the end of the year draws near.
 In the Olympics market, the outcome of a particular competition is often more certain as the final announcement approaches, e.g., if one team is far ahead by the half-time of a hockey game, market forecasts become less interesting.
Existing market makers fail to take this diminishing utility for information into account, with the strength of the market incentives remaining constant over time.


To address these two shortcomings of existing markets, we consider two settings:
\squishlist
\item a \emph{sudden revelation} setting in which it is known that some piece of information (such as the winner of Men's Moguls) will be publicly revealed at a particular time,
    driving the market maker's utility for this information to zero; crucially, in this setting we assume that the market maker \emph{does not} have direct access to this information at the time it is revealed, which is realistic in the case of the Olympics where a human might not be available to input winners for all 98 events in real time;
\item a \emph{gradual decrease} setting in which the market maker has a diminishing utility for a piece of information (such as the unemployment rate for 2015) over time and therefore is increasingly unwilling to pay for this information even while other information remains valuable.
\squishend
The sudden revelation setting can be viewed as
a special case of the gradual decrease setting.
In both cases, we model the relevant information as a 
variable $X$, representing a partly determined outcome
such as the identity of the gold medal winner in a single
sports event.

We consider cost-function-based market makers in which the cost function switches one or many times,
and aim to design switching strategies such that:
(1) information previously gathered in the market is not lost at the time of the switch,
(2) a trader who knows the value of~$X$ but has no additional information
is unable to profit after the switch (for the sudden revelation setting) or
is able to profit less and less over
time (in the gradual decrease setting), and
(3) the market maker maintains the same reward structure for any other information that
traders may have. 
\jenn{In the long version we could consider adding bounded loss as an explicit desideratum since we'll have space to discuss it more.}
To formalize these objectives, we define the notion of the market maker's
utility (\Sec{formalism}) and show how it corresponds to
the \emph{mixed Bregman divergence}~\citep{DudikLaPe12,Gordon99} (\Sec{bregman}).

For the sudden revelation setting (\Sec{one-step}), we introduce a
generic cost function switching technique
which in many cases removes the rewards for ``just-in-time arbitragers'' who know only the value of $X$, while allowing traders with other information to profit, satisfying our objectives.

For the gradual decrease setting (\Sec{gradual}),
we focus on 
\emph{linearly constrained market makers} (LCMMs) \citep{DudikLaPe12}, proposing a time-sensitive market maker that
gradually decreases liquidity by employing the cost function of a different LCMM at each point in time, again
meeting our objectives.

Others have considered the design of cost-function-based markets with adaptive liquidity~\cite{LV13,OS11,OSPR10,OS12,A+14}.  That line of research has typically focused on the goal of slowing down price movement as more money enters the market.  In contrast, we adjust liquidity to reflect the current market maker's utility which can be viewed as something external to trading in the market.  Additionally, we change liquidity only in the ``low-utility'' parts of the market, whereas previous work considered market-wide liquidity shifts.  \citet{B+12} designed a Bayesian market maker that adapts to perceived increases in available information.  Our market maker does not try to infer high information periods, but assumes that a schedule of public revelations is given a priori.  Our market makers have guaranteed bounds on worst-case loss whereas those of \citet{B+12} do not.

\section{SETTING AND DESIDERATA}
\label{sec:formalism}

We begin by reviewing cost-function-based market making before
describing our desiderata.  Here and throughout the paper we make
use of many standard results from convex analysis,
summarized in \App{convex}. All of the proofs in this paper are
relegated to the appendix.\ \footnote{%
The full version of this paper
on arXiv includes the appendix.
}

\subsection{COST-FUNCTION-BASED MARKET MAKING}
\label{sec:formalism-cost-function-based}

Let $\Omega$ denote the \emph{outcome space}, a finite set of mutually
exclusive and exhaustive states of the world.  We are interested in
the design of cost-function-based market makers operating over a set
of $K$ \emph{securities} on $\Omega$ specified by a \emph{payoff
  function} $\payb: \Omega \rightarrow \reals^K$, where
$\payb(\omega)$ denotes the vector of security payoffs if the outcome
$\omega \in \Omega$ occurs.  Traders may purchase \emph{bundles} $\r
\in \reals^K$ of securities from the market maker, with $r_i$ denoting
the quantity of security $i$ that the trader would like to purchase;
negative values of $r_i$ are permitted and represent short selling.  A
trader who purchases a bundle $\r$ of securities pays a specified cost
for this bundle up front and receives a (possibly negative) payoff of
$\payb(\omega) \cdot \r$ if the outcome $\omega \in \Omega$ occurs.

Following \citet{Chen07} and \citet{Abernethy11,Abernethy13}, we
assume that the market maker initially prices securities using a
convex potential function $C:\reals^K\to\reals$, called the
\emph{cost function}.  The current state of the market is summarized
by a vector $\q \in \reals^K$, where $q_i$ denotes the total number of
shares of security $i$ that have been bought or sold so far.  If the
market state is $\q$ and a trader purchases the bundle $\r$, he must pay the market maker $C(\q + \r) -
C(\q)$.  The new market state is then $\q + \r$.  The
\emph{instantaneous price} of security $i$ is $\partial C(\q)
/ \partial q_i$ whenever well-defined; this
is the price per share of an infinitesimally small quantity
of security $i$, and is frequently interpreted as the traders'
collective belief about the expected payoff of this security.
Any expected payoff must lie
in the convex hull of the set $\{\payb(\omega)\}_{\omega \in \Omega}$,
called \emph{price space},
denoted $\hull$.

While our cost function might not be differentiable at all states $\q$,
it is always \emph{subdifferentiable} thanks to convexity,
i.e., its subdifferential $\partial C(\q)$
is non-empty for each $\q$ and, if it is a singleton, it
coincides with the gradient. Let $\p(\q)\coloneqq\partial C(\q)$
be called the \emph{price map}. The set $\p(\q)$ is always convex and can
be viewed as a multi-dimensional version of the ``bid-ask spread''.
In a state $\q$, a trader can make an expected profit
if and only if he believes that $\E[\rhob(\omega)]\not\in\p(\q)$.
If $C$ is differentiable at $\q$, we slightly abuse notation and
also use $\p(\q)\coloneqq\nabla C(\q)$.

We assume that the cost function satisfies two standard properties:
\emph{no arbitrage} and \emph{bounded loss}.
The former means that as long as all outcomes $\omega$ are possible, there
are no market transactions with a guaranteed profit for a trader.
The latter means that the worst-case loss of the market maker is a priori
bounded by a constant. Together, they imply that
the cost function $C$ can be written in the form
$C(\q) = \sup_{\mub \in \hull} [\mub \cdot \q - R(\mub)]$,
where $R$ is the convex conjugate of $C$, with $\dom R=\hull$.
See
\citet{Abernethy11,Abernethy13} for an analysis of the
properties of such markets.

\begin{example}\emph{Logarithmic market-scoring rule (LMSR).}
\label{ex:formalism-lmsr}
The LMSR of \citet{H03,H07}
is a cost function for a \emph{complete market} where traders can
express any probability distribution over $\Omega$. Here, for any $K\ge 1$,
$\Omega=[K]\coloneqq\{1,\dotsc,K\}$
and $\rho_i(\omega)=\1[i=\omega]$ where $\1[\cdot]$ is a 0/1 indicator, i.e.,
the security $i$ pays out \$1 if the outcome $i$ occurs and \$0 otherwise.
The price space $\hull$ is the simplex of probability distributions
in $K$ dimensions. The cost function is
$C(\q)=\ln\bigParens{\sum_{i=1}^K e^{q_i}}$,
which is differentiable and generates
prices
$p_i(\q)=e^{q_i} / \bigParens{\sum_{j=1}^K e^{q_j}}$. Here $R$ is the negative
entropy function, $R(\mub) = \sum_{i=1}^K \mu_i \ln \mu_i$.
\end{example}

\begin{example}\emph{Square.}
\label{ex:formalism-square}
The square market consists of two independent securities ($K=2$)
each paying out either \$0 or \$1. This can be encoded as
$\Omega=\set{0,1}^2$ with $\rho_i(\omegab)=\omega_i$ for $i=1,2$.
The price space is the unit square $\hull=[0,1]^2$. Consider the cost function
$C(\q)=\ln\bigParens{1+e^{q_1}}+\ln\bigParens{1+e^{q_2}}$,
which is differentiable and generates prices $p_i(\q)=e^{q_i}/(1+e^{q_i})$
for $i=1,2$. Using this cost function is equivalent to running two independent binary markets,
each with an LMSR cost function. We have $R(\mub) = \sum_{i=1}^2 \mu_i \ln \mu_i + (1-\mu_i)
\ln (1-\mu_i)$.
\end{example}

\begin{example}\emph{Piecewise linear cost.}
\label{ex:formalism-piecewise-linear}
Here we describe a non-differentiable cost function for a single
binary security ($K=1$). Let
$\Omega=\set{0,1}$ and $\rho(\omega)=\omega$, so $\hull=[0,1]$.
The cost function is
$C(q)=\max\set{0,q}$. It
gives rise to the price map such that $p(q)=0$ if $q<0$,
and $p(q)=1$ if $q>0$, but at $q=0$, we have $p(q)=[0,1]$,
i.e., because of non-differentiability we have a bid-ask spread
at $q=0$.  Here, $R(\mu)=\ones\bigBracks{\mu\in[0,1]}$ where $\ones[\cdot]$
is a $0/\infty$ indicator, equal to $0$ if true and $\infty$ if
false. This market is uninteresting on its own, but will be useful to
us in \Sec{one-step-examples}.
\end{example}

\begin{protocol}[t]
\caption{Sudden Revelation Market Makers}
\label{prot:switch}
\begin{tabbing}
\textbf{Input:\quad}%
  initial cost function $C$, initial state $\s^\ini$, switch time $t$,\\
\hphantom{\textbf{Input:\quad}}%
  update functions $\AdvanceCost(\q)$, $\AdvanceState(\q)$\\[4pt]
  Until time $t$:\\
\hphantom{---}\=
       sell bundles $\r^1,\dotsc,\r^N$ priced using $C$\\
\>\hphantom{---}\=
          for the total cost $C(\s^\ini\!\!+\!\r)-C(\s^\ini)$
          where $\r=\sum_{i=1}^N \!\r^i$\\
\>
       let $\s=\s^\ini\!+\r$\\
  At time $t$:\\
\>
       $\tC\gets\AdvanceCost(\s)$\\
\>
       $\tsb\gets\AdvanceState(\s)$\\
  After time $t$:\\
\>
       sell bundles $\trb^1,\dotsc,\trb^\tN$ priced using $\tC$\\
\>\>    for the total cost $\tC(\tsb+\trb)-\tC(\tsb)$
          where $\trb=\sum_{i=1}^{\tN} \trb^i$\\
\>
       let $\tsb^\fin=\tsb+\trb$\\
  Observe $\omega$\\
  Pay $(\r+\trb)\inprod\rhob(\omega)$ to traders\\[-15pt]
\end{tabbing}
  \raf{This protocol might confuse people if we don't also include the
    standard cost-function-based protocol (because for simplicity here
    we have to gloss over the telescoping cost and who gets paid
    what)}\jenn{I agree. Expand in long version?}
\end{protocol}

\begin{protocol}[t]
\caption{Gradual Decrease Market Makers}
\label{prot:time-sensitive}
\begin{tabbing}
\textbf{Input:\quad}%
  time-sensitive cost function $\C(\q;\,t)$,\\
\hphantom{\textbf{Input:\quad}}%
  initial state $\s^0$, initial time $t^0$,\\
\hphantom{\textbf{Input:\quad}}%
  update function $\AdvanceState(\q;\,t,t')$\\[4pt]
  For $i=1,\dotsc,N$ (where $N$ is an unknown number of trades):\\
\hphantom{---}\=
      at time $t^i\ge t^{i-1}$: receive a request for a bundle $\r^i$\\
\>
      $\tsb^{i-1}\gets\AdvanceState(\s^{i-1};\,t^{i-1}, t^{i})$\\
\>
      sell the bundle $\r^i$\\
\>\hphantom{---}\=
        for the cost $\C(\tsb^{i-1}+\r^i;\,t^i)-\C(\tsb^{i-1};\,t^i)$\\
\>
      $\s^i\gets\tsb^{i-1}+\r^i$\\
  Observe $\omega$\\
  Pay $\sum_{i=1}^N \r^i\inprod\rhob(\omega)$ to traders\\[-15pt]
\end{tabbing}
\end{protocol}

\subsection{OBSERVATIONS AND ADAPTIVE COSTS}

We study two settings. In the \emph{sudden revelation setting}, it is
known to both the market maker and the traders that at a particular
point in time (the observation time) some information about the
outcome (an observation) will be publicly revealed to the traders, but
not to the market maker.  More precisely, let any function on $\Omega$
be called a \emph{random variable} and its value called the
\emph{realization} of this random variable.  Given a random variable
$X:\Omega\to\X$, we assume that its realization is revealed to the
traders at the observation time.  For a random variable $X$ and a
possible realization $x$, we define the \emph{conditional outcome
  space} by $\Omega^x\coloneqq\set{\omega\in\Omega:\:X(\omega)=x}$.
After observing $X=x$ (where, using standard random variable shorthand, we
write $X$ for $X(\omega)$), the traders can conclude that
$\omega\in\Omega^x$.  Note that the sets
$\set{\Omega^x}_{x\in\X}$ form a partition of $\Omega$.

We design \emph{sudden revelation market makers} (\Prot{switch})
that replace
the cost function $C$ with a new cost function $\tC$, and
the current market state $\s$ (i.e., the current value of
$\q$ in the definition above) with a new market state $\tsb$
in order to reflect the decrease in the utility for information about
$X$. Such a switch would typically occur just before the observation time.
Note that we allow the new cost
function $\tC$ as well as the new state $\tsb$
to be chosen adaptively according to the last state $\s$ of the
original cost function $C$.

In the \emph{gradual decrease} setting, the
utility for information about a future observation $X$ is decreasing
continuously over time.
We use a \emph{gradual decrease market maker}
(\Prot{time-sensitive}) with a time-sensitive cost function $\C(\q;\,t)$ which
sells a bundle $\r$ for the cost $\C(\q+\r;\,t)-\C(\q;\,t)$ at
time~$t$, when the market is in a state $\q$.
We place no assumptions on $\C$ other than that for each $t$,
the function $\C(\cdot;\,t)$ should be
an arbitrage-free bounded-loss cost function.
The market maker may modify the state between the trades.

\Prot{time-sensitive}
alternates between trades and cost-function switches akin to those
in \Prot{switch}.
In each iteration $i$, the cost function
$\C(\cdot;\,t^{i-1})$ is replaced by the cost function $\C(\cdot;\,t^i)$ while
simultaneously replacing the state $\s^{i-1}$ by the state $\tsb^{i-1}$.
Crucially, unlike \Prot{switch}, the
cost-function switch here is \emph{state independent},
so any state-dependent adaptation happens through
the state update.~\footnote{%
This simplifying restriction
matches our
solution concept in \Sec{gradual}, but it
could be dropped for greater generality.}%
\jenn{Reading
  this makes me say ``why?'' Is there an easy reason we can give?}
  \miro{Does this explanation do the trick?}

At a high level, within each of the protocols, our goal is to
design switch strategies that satisfy the following criteria:

\squishlist

\item Any information that has already been gathered from traders
  about the relative likelihood of the outcomes in the conditional
  outcome spaces is preserved.

\item
  A trader who has information
  about the observation $X$ but has no additional information about
  the relative likelihood of outcomes in the conditional outcome
  spaces is unable to profit
  from this information
  (for sudden revelation), or the profits
  of such a trader are
  decreasing over time
  (for gradual decrease).

\item The market maker continues to reward traders for new information
  about the relative likelihood of outcomes in the conditional outcome
  spaces as it did before, with prices reflecting the market maker's utility
  for information within these sets of outcomes.

\squishend

To reason about these goals, it is necessary to define what
we mean by the information that has been gathered in the market and
the market maker's utility.

\subsection{MARKET MAKER'S UTILITY}
\label{sec:val:info}

By choosing a cost function, the market maker creates an incentive
structure for the traders. Ideally, this incentive structure should be
aligned with the market maker's subjective utility for information.
That is,
the amount the market maker is willing to pay out to traders
should reflect the market maker's utility for the information
that the traders have provided.  In this section, we study how the
traders are rewarded for various kinds of information, and use the
magnitude of their profits to define the market maker's implicit
``utility for information'' formally.

We start by defining the market maker's utility for a belief, where a
\emph{belief} $\mub\in\hull$ is a vector of expected
security payoffs $\E[\rhob(\omega)]$ for some distribution over
$\Omega$.


\begin{definition}
The market maker's \emph{utility for a belief} $\mub\in\hull$ relative
to the state $\q$ is the maximum expected
payoff achievable by a trader with belief $\mub$ when the current market
state is $\q$:
\defline{
   \Value(\mub;\q)\coloneqq \sup_{\r \in \reals^K}\bigBracks{\mub\inprod\r-C(\q+\r)+C(\q)}
\enspace.}
\end{definition}

Any subset $\event\subseteq\Omega$ is referred to as an \emph{event}.
Observations $X=x$ correspond to events $\Omega^x$.  Suppose that a
trader has observed an event, i.e., a trader knows that
$\omega\in\event$, but is otherwise uninformed.  The market maker's
utility for that event can then be naturally defined as follows.

\begin{definition}
\label{def:event-value}
The \emph{utility for a (non-null) event}
$\event\subseteq\Omega$
relative
to the market state $\q$ is the largest guaranteed payoff
that a trader who knows $\omega \in \event$ (and has only this information)
can achieve when the current market state is $\q$:
\defline{
   \Value(\event;\q)\coloneqq
   \!\adjustlimits\sup_{\r \in \reals^K}\min_{\omega\in\event}
   \BigBracks{
     \rhob(\omega)\inprod\r - C(\q+\r) + C(\q)
   }
\enspace.}
\end{definition}

Finally, consider the setting in which a trader has observed an event
$\event$, and also holds a belief $\mub$ consistent with
$\event$. Specifically, let $\hull(\event)$ denote the convex hull of
$\set{\rhob(\omega)}_{\omega\in\event}$, which is the set of beliefs
consistent with the event $\event$, and assume
$\mub\in\hull(\event)$. Then we can define the ``excess utility for
the belief $\mub$'' as the excess utility provided by $\mub$ over just
the knowledge of $\event$.

\begin{definition}
\label{def:formalism-cond-value}
Given an event $\event$ and
a belief $\mub\in\hull(\event)$,
the \emph{excess utility of $\mub$ over $\event$}, relative to the
state $\q$ is:
\defline{
  \Value(\mub\given\event;\q)
  =
  \Value(\mub;\q)
  -
  \Value(\event;\q)
\enspace.}
\end{definition}

Note that in these definitions a trader can always choose not to trade
($\r=\0$), so the utility for a belief and an event is
non-negative. Also it is not too difficult to see that
$\Value(\mub;\q)\ge\Value(\event;\q)$ for any $\mub \in \hull(\event)$, so
the excess utility for a belief is also non-negative.



In \Sec{bregman}, we show that given a state $\q$ and a non-null
event $\event$, there always exists a (possibly non-unique)
belief $\mub\in\event$
such that $\Value(\mub\given\event;\q)=0$. Thus, a trader
with such a ``worst-case'' belief is able to achieve
in expectation no reward beyond
what any trader that just observed $\event$ would receive.
We show that these worst-case beliefs correspond to certain kinds of
``projections'' of the current price $\p(\q)$ onto
$\M(\event)$. For LMSR, the projections are with respect
to KL divergence and correspond to the usual conditional
probability distributions. Moreover, for sufficiently smooth
cost functions (including LMSR) they correspond to
market prices that result when a trader is
optimizing his guaranteed profit from the information
$\omega\in\event$ as in \Def{event-value} (see \App{cond:price}).
Because of this motivation,
such beliefs are referred to as ``conditional price vectors.''

\begin{definition} A vector $\mub\in\M(\event)$
is called a \emph{conditional
price vector}, conditioned on $\event$, relative
to the state $\q$ if
$\Value(\mub;\q)=\Value(\event;\q)$. The set
of such conditional price vectors is denoted
\defline{
   \p(\event;\q)
   \coloneqq\set{\mub\in\M(\event):\:\Value(\mub;\q)=\Value(\event;\q)}
\enspace.}
\end{definition}



See Appendix~\ref{app:value} for additional motivation for our definitions of
utility and conditioning. With these notions
defined, we can now state our desiderata.

\subsection{DESIDERATA}
\label{sec:desiderata}

Recall that we aim to design mechanisms
which replace a cost function $C$ at a state $\s$,
with a new cost function $\tC$ at a state $\tsb$.
Let $\Value$ denote the utility for  information with respect
to $C$ and $\tValue$ with respect to $\tC$, and
let $\p$ and $\tpb$ be the respective price maps.
In our mechanisms, we attempt to satisfy
(a subset of) the conditions on
information structures as listed in \Tab{desiderata}.


\begin{table}
\caption{Information Desiderata}%
\label{tab:desiderata}%
\renewcommand{\arraystretch}{1.15}%
\renewcommand{\tabcolsep}{3pt}%
\centering%
\small%
\vspace{-10pt}%
\begin{tabular}{lp{2.4in}}
\toprule
\Price
&
\emph{Preserve prices:}\\&
 $\quad\tpb(\tsb)=\p(\s)$.
\\
\addlinespace
\CondPrice
&
\emph{Preserve conditional prices:}\\&
 $\quad\tpb(X{=}\,x;\tsb)=\p(X{=}\,x;\s)\quad\forall x\in\X$.
\\
\addlinespace
\DecVal
&
\emph{Decrease profits for uninformed traders:}\\&
$\quad\tValue(X{=}\,x;\tsb)\le\Value(X{=}\,x;\s)\quad\forall x\,{\in}\,\X,\!\!\!$\\&
 with sharp inequality if $\Value(X{=}\,x;\s)>0$.
\\
\addlinespace
\ZeroVal
&
\emph{No profits for uninformed traders:}\\&
$\quad\tValue(X{=}\,x;\tsb)=0\quad\forall x\in\X$.
\\
\addlinespace
\CondVal
&
\emph{Preserve excess utility:}\\&
$\quad
   \tValue(\mub{\given} X{=}\,x;\tsb)
   =
   \Value(\mub{\given} X{=}\,x;\s)
$\\&
  for all $x\in\X$ and $\mub\in\hull(X{=}\,x)$.
\\
\bottomrule
\end{tabular}
\figsqueeze
\end{table}

Conditions \Price and \CondPrice capture the requirement to preserve
the information gathered in the market. The current price $\p(\q)$ is
the ultimate information content of the market at a state $\q$
\emph{before} the observation time, but it is not necessarily the
right notion of information content \emph{after} the observation time.
When we do not know the realization $x$, we may wish to set up the
market so that any trader who has observed $X=x$ and would like to
maximize the guaranteed profit would move the market to the same
conditional price vector
as in the previous market. This is captured by \CondPrice.

\ignore{
In other words, we view the
current market state as representing a ``superposition'' of conditional
price vectors, and we require that the new market
preserve this superposition until a trader ``collapses'' it onto the
correct price in some $\hull^x$. For LMSR, this ``superposition''
corresponds to the conditional probability distribution. And the
``collapse'' is the posterior inference (conditioned on the observation).
}

\DecVal models a scenario in which the utility for information about
$X$ decreases over time, and \ZeroVal represents the extreme case in
which utility decreases to zero.  These conditions are in friction
with \CondVal, which aims to maintain the utility structure over the
conditional outcome spaces. A key challenge is to satisfy \CondVal and
\ZeroVal (or \DecVal) simultaneously.

Apart from the information desiderata of \Tab{desiderata},
we would like to maintain an important feature of
cost-function-based market makers:
their ability to bound
the worst-case loss to the market
maker. Specifically, we would like to show that there is some
\emph{finite} bound (possibly depending on the initial state) such
that no matter what trades are executed and which outcome $\omega$
occurs, the market maker will lose no more than the amount of the
bound. It turns out that the solution concepts introduced in this paper maintain the
same loss bound as guaranteed for using just the market's original cost
function $C$, but since the focus of the paper is on the information
structures, worst-case loss analysis is relegated to \App{wcl}.

\miro{Need to discuss how our work relates
to the desiderata of ACV, and also say something like:
Additional types of desiderata have been introduced
in literature to adjust the pricing to the trading
volume~\cite{LV13,OS12,A+14}. That line of work is
orthogonal to ours since we focus on the adaptation
to the information revelation, but we expect that
the two kinds of adjustments can be combined.}



In \Sec{one-step}, we study in detail
the sudden revelation setting
with the goal of instantiating \Prot{switch} in a way that achieves \ZeroVal while satisfying \CondPrice and \CondVal.
Our key result is a characterization and a geometric sufficient condition
for when this is possible.

In \Sec{gradual}, we examine instantiations of \Prot{time-sensitive} for the gradual decrease setting.
Our construction focuses on linearly-constrained
market makers (LCMM) \citep{DudikLaPe12}, which naturally decompose
into submarkets. We show how to achieve \Price, \CondPrice, \DecVal
and \CondVal in LCMMs. We also show that it is possible to
simultaneously decrease the utility for  information in each submarket
according to its own schedule, while maintaining \Price.

Before we develop these mechanisms, we introduce the machinery
of Bregman divergences, which helps us analyze notions
of utility for information.


\subsection{BREGMAN DIVERGENCE AND UTILITY}
\label{sec:bregman}

To analyze the market maker's utility for information,
we show how it corresponds to a specific
notion of distance built into the cost
function, the \emph{mixed (or generalized) Bregman divergence}~\citep{DudikLaPe12,Gordon99}.
Let $R$ be the conjugate of~$C$.~\footnote{The conjugate is also, less commonly, called the ``dual''.}
The mixed Bregman divergence between a belief $\mub$ and a state $\q$ is defined as
$
D(\mub \| \q) \coloneqq R(\mub) + C(\q) - \q \cdot \mub
$.
The conjugacy of $R$ and $C$ implies that
$D(\mub\|\q)\ge 0$
with equality iff $\mub\in\partial C(\q)=\p(\q)$,
i.e.,
if the price vector ``matches'' the state
(see \App{convex}).
The geometric interpretation of mixed Bregman
divergence is as a gap between a tangent and the graph
of the function $R$ (see \Fig{bregman}).

\ignore{
Mixed Bregman divergence has a natural
interpretation within the graph of $R$.
First note
that by conjugacy of~$R$ and~$C$, we can obtain any
non-vertical tangent to the graph of $R$ as a set of points
$(\mub,t(\mub))\in\reals^{K+1}$ where
$\mub\in\reals^K$ and $t(\mub)=\mub\inprod\q-C(\q)$, where
$\q\in\reals^K$ is a predetermined ``slope''
of the tangent.
The divergence $D(\mub\|\q)=R(\mub)-\bigBracks{\mub\inprod\q-C(\q)}$
then measures the gap between the tangent with slope $\q$
and the graph of~$R$.
~\footnote{%
A symmetric interpretation, not
used in this paper, is of course possible within
the graph of $C$.}
}

\begin{figure}
\hspace{35pt}\begin{tikzpicture}[y=0.80pt, x=0.8pt,yscale=-1, inner sep=0pt, outer sep=0pt]
\path[fill=black] (338.56897,224.05708) node[above right] (text2989) {\small $D(\mub\,\|\,\q)$};
\path[fill=black] (267.50586,213.59575) node[above right] (text2993) {\small $R$};
\path[fill=black] (311.86877,271.65604) node[above right] (text2997) {\small $\mub$};
\path[fill=black] (150.81657,260.25974) node[above right] (text3001) {\parbox{168pt}{{\small tangent $t$ with slope $\q$}}};
\path[draw=black,line join=round,line cap=round,miter limit=10.00,line width=0.575pt] (331.4944,171.4539) .. controls (305.3551,204.1240) and (261.4942,228.4980) .. (218.9190,243.5822) .. controls (185.9356,229.5686) and (157.2606,207.2829) .. (142.6293,177.2863);
\path[draw=black,line join=round,line cap=round,miter limit=10.00,line width=0.575pt] (140.3593,235.5850) -- (341.5648,256.7853);
\path[draw=black,dash pattern=on 1.74pt off 1.74pt,line join=round,line cap=round,miter limit=10.00,line width=0.581pt] (316.9530,171.3631) .. controls (316.9530,171.3631) and (316.9530,171.3631) .. (316.9530,171.3631) -- (316.9530,261.8808);
\path[fill=black,nonzero rule] (329.1189,245.3451) -- (329.1189,245.6089) -- (329.1119,245.8867) -- (329.0956,246.1643) -- (329.0711,246.4559) -- (329.0383,246.7475) -- (328.9973,247.0391) -- (328.9481,247.3446) -- (328.8825,247.6500) -- (328.8171,247.9555) -- (328.7433,248.2610) -- (328.6532,248.5664) -- (328.5466,248.8719) -- (328.4401,249.1635) -- (328.3171,249.4690) -- (328.1778,249.7605) -- (328.0304,250.0521) -- (327.8747,250.3298) -- (327.6945,250.5937) -- (327.5141,250.8574) -- (327.3093,251.1213) -- (327.0881,251.3573) -- (326.8587,251.5933) -- (326.6128,251.8016) -- (326.3424,252.0099) -- (326.0638,252.2043) -- (325.7688,252.3709) -- (325.4492,252.5375) -- (325.1214,252.6764) -- (324.9494,252.7320) -- (324.7692,252.7876) -- (324.5888,252.8431) -- (324.4004,252.8847) -- (324.2119,252.9263) -- (324.0153,252.9680) -- (323.8105,252.9957) -- (323.6056,253.0234) .. controls (323.4008,253.0928) and (323.3525,253.1446) .. (323.3525,253.5611) .. controls (326.2625,253.0076) and (327.5365,252.0373) .. (328.3868,250.9019) .. controls (329.5611,249.3335) and (329.6668,247.4499) .. (329.8145,245.9143) -- (329.8145,230.1409) .. controls (329.8145,227.4195) and (329.8145,225.2118) .. (331.4533,222.9347) .. controls (332.8872,220.9213) and (334.3852,220.4150) .. (335.2784,220.3456) .. controls (335.4834,220.2900) and (335.7060,220.3659) .. (335.7060,219.9633) .. controls (335.7060,219.6045) and (335.6800,219.7324) .. (335.2785,219.6630) .. controls (332.5745,219.3852) and (330.3881,216.4226) .. (329.9456,213.0347) .. controls (329.8145,212.2849) and (329.8145,212.1461) .. (329.8145,209.6606) -- (329.8145,195.9423) .. controls (329.8145,193.0542) and (329.8145,190.8187) .. (327.8561,188.1945) .. controls (326.4324,186.7597) and (324.6815,186.4560) .. (323.2450,186.3952) .. controls (323.2450,187.0894) and (323.4908,186.8337) .. (323.9006,186.9032) .. controls (326.4571,187.1531) and (328.4728,189.3886) .. (328.9973,192.9015) .. controls (329.1203,193.5402) and (329.1203,193.6652) .. (329.1203,196.1645) -- (329.1203,210.7021) .. controls (329.1203,213.8678) and (329.4398,215.0341) .. (330.7509,217.2696) .. controls (331.6030,218.6998) and (332.7993,219.3940) .. (333.9383,219.9633) .. controls (330.5952,221.5462) and (329.1203,224.7397) .. (329.1203,228.7525) -- (329.1199,245.3450) -- cycle;

\end{tikzpicture}

\caption{
The mixed Bregman divergence $D(\mub\|\q)$ derived from the
conjugate pair $C$ and $R$ measures the distance
between the tangent with slope $\q$ and
the value of $R$ evaluated at $\mub$. By conjugacy,
the tangent $t$ is described by $t(\mub)=\mub\inprod\q-C(\q)$.
Note that the divergence is well defined even when $R$ is
not differentiable, because each slope vector determines a unique
tangent.}
\label{fig:bregman}
\figsqueeze
\end{figure}

To see how the divergence relates to traders' beliefs,
consider a trader who believes that
$\E[\payb(\omega)] = \mub'$ and moves the market from state $\q$ to state $\q'$.  The expected payoff to this trader is
$(\q'-\q) \cdot \mub' - C(\q') + C(\q)
= D(\mub' \| \q) - D(\mub' \| \q')$.
\ignore{
\begin{align*}
&
(\q'-\q) \cdot \mub' - C(\q') + C(\q)
\\
&\quad{}
= R(\mub') + C(\q) - \q \cdot \mub
- R(\mub') - C(\q') + \q' \cdot \mub
\\
&\quad{}
= D(\mub' \| \q) - D(\mub' \| \q')
\enspace.
\end{align*}
}
This payoff increases as $D(\mub' \| \q')$ decreases. Thus,
subject to the trader's budget constraints, the trader is incentivized
to move to the state $\q'$ which is as ``close'' to his/her belief $\mub'$
as possible in the sense of a smaller value $D(\mub' \| \q')$,
with the largest expected payoff when $D(\mub' \| \q')=0$. This
argument shows that $D(\cdot\|\cdot)$ is an implicit measure of distance
used by traders.

The next theorem shows that the Bregman divergence also matches the
concepts defined in \Sec{val:info}.
Specifically, we show that
(1) the utility for a belief coincides with the Bregman divergence,
(2) the utility for an event $\event$
is the smallest divergence between the current market state
and $\hull(\event)$, and (3) the
conditional price vector
is the (Bregman) projection of the current market state on
$\hull(\event)$, i.e., it is a belief in $\hull(\event)$ that is ``closest to'' the current
market state.

\begin{theorem}
\label{thm:D}
Let $\mub\in\hull$, $\q\in\reals^K$ and $\emptyset\ne\event\subseteq\Omega$.
Then
\begin{align}
&
\label{eq:D:mu}
\Value(\mub;\q) = D(\mub\|\q)
\enspace,
\\
&\textstyle
\label{eq:D:E}
\Value(\event;\q) = \min_{\mub'\in\hull(\event)} D(\mub'\|\q)
\enspace,
\\
&\textstyle
\label{eq:p:E}
\p(\event;\q) = \argmin_{\mub'\in\hull(\event)} D(\mub'\|\q)
\enspace.
\end{align}
\end{theorem}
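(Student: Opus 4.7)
The plan is to establish the three identities in sequence, noting that (3) will be an immediate corollary of (1) and (2); so the real content lies in proving the first two.

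For (1), I would substitute $\q' = \q+\r$ in the definition of $\Value(\mub;\q)$ to rewrite it as
\begin{equation*}
\Value(\mub;\q) = C(\q) - \mub\cdot\q + \sup_{\q'\in\reals^K}\bigBracks{\mub\cdot\q' - C(\q')}.
\end{equation*}
The supremum is by definition the convex conjugate $R(\mub)$, so $\Value(\mub;\q) = R(\mub) + C(\q) - \q\cdot\mub = D(\mub\|\q)$. This step is essentially just a restatement of conjugacy and needs no additional machinery beyond the fact that $\dom R = \hull$ extends naturally by $+\infty$ outside of $\hull$ if we let $\mub$ range over $\reals^K$, but since we assume $\mub\in\hull$ we may ignore that subtlety.

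For (2), the key observation is that the minimum of the linear function $\rhob(\omega)\cdot\r$ over $\omega\in\event$ equals its minimum over the convex hull $\hull(\event)$ (attained at a vertex). Substituting this into the definition of $\Value(\event;\q)$ gives
\begin{equation*}
\Value(\event;\q) = \sup_{\r\in\reals^K}\;\min_{\mub'\in\hull(\event)}\bigBracks{\mub'\cdot\r - C(\q+\r) + C(\q)}.
\end{equation*}
The integrand is linear in $\mub'$ and concave in $\r$, so I would invoke Sion's minimax theorem to swap $\sup_{\r}$ and $\min_{\mub'}$. The resulting inner supremum is, by part (1), precisely $\Value(\mub';\q) = D(\mub'\|\q)$, yielding (2). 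Part (3) is then immediate: combining (1) and (2), a belief $\mub\in\hull(\event)$ satisfies $\Value(\mub;\q) = \Value(\event;\q)$ iff $D(\mub\|\q) = \min_{\mub'\in\hull(\event)} D(\mub'\|\q)$, i.e., iff $\mub$ is a Bregman projection of $\q$ onto $\hull(\event)$.

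The main obstacle is justifying the minimax exchange in (2), since Sion's theorem requires at least one of the feasible sets to be compact and the integrand to satisfy appropriate semicontinuity and quasi-convexity/concavity conditions. Here $\hull(\event)$ is compact and convex as the convex hull of finitely many points $\set{\rhob(\omega)}_{\omega\in\event}$, while $\r$ ranges over the unbounded set $\reals^K$; the integrand is linear (hence lower semicontinuous and quasi-convex) in $\mub'$ and continuous concave in $\r$ (using that $C$ is finite and convex on all of $\reals^K$, hence continuous). These are exactly Sion's hypotheses with $\hull(\event)$ as the compact side, so the swap goes through. I would also note that $\argmin_{\mub'\in\hull(\event)} D(\mub'\|\q)$ is non-empty because $D(\cdot\|\q)$ is lower semicontinuous on the compact set $\hull(\event)$, which justifies writing $\min$ (and $\argmin$) rather than $\inf$ in the statement.
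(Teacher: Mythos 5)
Your proposal is correct and follows essentially the same route as the paper's proof: the same conjugacy computation for \eqref{eq:D:mu}, the same relaxation of the minimum over $\set{\rhob(\omega)}_{\omega\in\event}$ to $\hull(\event)$ followed by Sion's minimax theorem for \eqref{eq:D:E}, and \eqref{eq:p:E} read off from the first two. Your added care in verifying Sion's hypotheses (compactness of $\hull(\event)$, continuity of $C$) and the attainment of the minimum is a fine supplement but not a different argument.
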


We finish this section by characterizing when \CondVal
is satisfied and showing that it implies \CondPrice. Recall
that $\Omega^x=\set{\omega:\:X(\omega)=x}$ and let
$\hull^x\coloneqq\hull(\Omega^x)$.

\begin{proposition}
\label{prop:condval}
\CondVal holds if and only if for all $x \in \X$,
there exists some $c^x$ such that for all $\mub\in\hull^x$,
%
 $D(\mub\|\s) - \tD(\mub\|\tsb) = c^x$.
Moreover, \CondVal implies \CondPrice.
\end{proposition}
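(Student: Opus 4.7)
The plan is to reduce both \CondVal and the claimed characterization to identities about mixed Bregman divergence using \Thm{D}, at which point everything becomes an algebraic rearrangement.

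First, I would unfold the definition of excess utility. By \Def{formalism-cond-value} combined with \eqref{eq:D:mu} and \eqref{eq:D:E}, for every $\mub\in\hull^x$,
\[
   \Value(\mub \given X{=}\,x;\s)
   = D(\mub\|\s) - \min_{\mub'\in\hull^x} D(\mub'\|\s),
\]
and an entirely analogous identity holds for $\tValue$ with $\tD$ and $\tsb$ in place of $D$ and $\s$. Thus \CondVal is equivalent to the pointwise equality
\[
   D(\mub\|\s) - \tD(\mub\|\tsb)
   = \min_{\mub'\in\hull^x} D(\mub'\|\s) - \min_{\mub'\in\hull^x} \tD(\mub'\|\tsb)
\]
for all $x\in\X$ and all $\mub\in\hull^x$.

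Second, the right-hand side depends only on $x$, not on $\mub$, so if \CondVal holds we may set $c^x$ equal to that right-hand side, obtaining the stated characterization. Conversely, if $D(\mub\|\s)-\tD(\mub\|\tsb)=c^x$ holds pointwise on $\hull^x$, then taking the minimum of each side over $\mub\in\hull^x$ yields the same constant difference between $\min D(\cdot\|\s)$ and $\min \tD(\cdot\|\tsb)$; subtracting this from the pointwise identity collapses the constants and recovers $\Value(\mub\given X{=}\,x;\s)=\tValue(\mub\given X{=}\,x;\tsb)$, which is \CondVal.

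Finally, for \CondVal $\Rightarrow$ \CondPrice, I would use the constant-difference form directly: since $\mub\mapsto D(\mub\|\s)$ and $\mub\mapsto \tD(\mub\|\tsb)$ differ by a constant on $\hull^x$, they attain their minima on the same set. By \eqref{eq:p:E}, those argmins are exactly $\p(X{=}\,x;\s)$ and $\tpb(X{=}\,x;\tsb)$, so the two conditional price sets coincide.

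There is no real obstacle here; the whole argument is bookkeeping once \Thm{D} is in hand. The only subtle point worth mentioning is that in the forward direction the minima must be attained so that $c^x$ is well defined, which is guaranteed by \eqref{eq:p:E} applied to $\event=\Omega^x$.
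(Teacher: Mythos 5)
Your proposal is correct and follows essentially the same route as the paper: both reduce \CondVal to a constant-difference identity for the Bregman divergences via \Thm{D}, and then obtain \CondPrice from the coincidence of the argmin sets given by \Eq{p:E}. The only cosmetic difference is that you manipulate the minimum values directly, whereas the paper phrases the same algebra through specific conditional price vectors $\hmub^x$ and $\tmub^x$ and routes the converse through \CondPrice first.
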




\section{SUDDEN REVELATION}
\label{sec:one-step}

In this section, we consider the design of sudden revelation market
makers (\Prot{switch}). In this setting, partial information in the
form of the realization of $X$ is revealed to market participants
(but 
not to the market maker) at a predetermined time, as
might be the case if the medal winners of an Olympic event are
announced but no human is available to input this information into
the automated market maker 
on behalf of the market organizer.
The random variable $X$ and the observation time are assumed to be
known, and the market maker wishes to ``close'' the submarket with
respect to $X$ just before the observation time, without knowing the
realization $x$, while leaving the rest of the market unchanged.

Stated in terms of our formalism, we wish to find functions
$\AdvanceState$ and $\AdvanceCost$ from Protocol~\ref{prot:switch}
such that the desiderata \CondPrice, \CondVal, and \ZeroVal from
Table~\ref{tab:desiderata} are satisfied. \jenn{Justify not wanting \Price?}  This implies that traders who know only that $X=x$ are not rewarded after the observation time, but traders with new information about the outcome space conditioned on $X=x$ are rewarded exactly as before.  As a result, trading immediately resumes in a ``conditional market'' on $\hull(\Omega^x)$ for the correct realization $x$, without the market maker needing to know $x$ and without any other human intervention.  We refer to the goal of simultaneously achieving \CondPrice, \CondVal, and \ZeroVal as achieving \emph{implicit submarket closing}.


 For convenience, throughout this section we write $\hull^x \coloneqq \hull(\Omega^x)$ to denote the conditional price space, and $\hull^\star\coloneqq\bigcup_{x\in\X} \hull^x$ to denote prices possible after the observation.

\subsection{SIMPLIFYING THE OBJECTIVE}

We first show that achieving implicit submarket closing can be reduced
to finding a function $\tR$ satisfying a simple set of constraints, and
defining $\AdvanceCost$ to return the conjugate $\tC$ of $\tR$. As a first
step, we observe that it is without loss of generality to let
$\AdvanceState$ be an identity map, i.e., to assume that $\tsb = \s$;
when this is not the case, we can obtain an equivalent market by
setting $\tsb = \s$ and shifting $\tC$ so that the Bregman divergence
is unchanged.
%
%
\begin{lemma}
Any desideratum of \Tab{desiderata} holds for $\tC$ and $\tsb$ if and only if it holds for
$   \tC'(\q)=\tC(\q+\tsb-\s)
\text{ and }
   \tsb'=\s$.
\label{lem:sequalsts}
\end{lemma}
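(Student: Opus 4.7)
The plan is to show that the translation $\tC'(\q) = \tC(\q + \tsb - \s)$, together with the compensating change of evaluation point from $\tsb$ to $\tsb' = \s$, leaves every quantity appearing in \Tab{desiderata} unchanged. Since each desideratum is an equation or inequality in terms of $\tpb(\tsb)$, $\tpb(X{=}\,x;\tsb)$, or $\tValue(\cdot;\tsb)$, it suffices to prove the three identities
\[
\tpb'(\s) = \tpb(\tsb),\qquad \tD'(\mub\,\|\,\s) = \tD(\mub\,\|\,\tsb)\text{ for all } \mub\in\hull,
\]
and then invoke \Thm{D} to transfer the Bregman identity to \Value-based quantities.

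First I would handle the price identity. Setting $\deltab \coloneqq \tsb - \s$ and using the definition of the subdifferential, $\partial \tC'(\q) = \partial \tC(\q+\deltab)$ for every $\q$, so at $\q = \s$ we immediately get $\tpb'(\s) = \partial \tC'(\s) = \partial \tC(\tsb) = \tpb(\tsb)$. This takes care of \Price. Next I would compute the conjugate $\tR'$ of $\tC'$: by a change of variables $\q \mapsto \q - \deltab$ in the definition $\tR'(\mub) = \sup_\q[\mub\inprod\q - \tC'(\q)]$, one obtains $\tR'(\mub) = \tR(\mub) - \mub\inprod\deltab$.

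Plugging this into the Bregman divergence gives
\[
\tD'(\mub\,\|\,\s) = \tR'(\mub) + \tC'(\s) - \mub\inprod\s = \tR(\mub) - \mub\inprod\deltab + \tC(\tsb) - \mub\inprod\s = \tR(\mub) + \tC(\tsb) - \mub\inprod\tsb = \tD(\mub\,\|\,\tsb),
\]
as wanted. By \Thm{D} this immediately yields $\tValue'(\mub;\s) = \tValue(\mub;\tsb)$, and taking minima (resp.\ argmins) over $\mub\in\hull(\event)$ with $\event = \Omega^x$ gives $\tValue'(X{=}\,x;\s) = \tValue(X{=}\,x;\tsb)$ and $\tpb'(X{=}\,x;\s) = \tpb(X{=}\,x;\tsb)$, so \CondPrice, \ZeroVal, \DecVal, and (via \Def{formalism-cond-value}) \CondVal all hold for $(\tC',\tsb')$ iff they hold for $(\tC,\tsb)$.

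I don't expect a real obstacle here; the only point to be careful about is the domain/closure issue when taking conjugates (so that $\tR' = \tR - \deltab\inprod(\cdot)$ genuinely holds on $\hull$), and the fact that the argmin characterization of $\p(\event;\q)$ in \Thm{D} transfers cleanly because adding the same constant $c^x = \mub'\inprod\deltab - \mub'\inprod\deltab$ to $\tD'(\mub\,\|\,\s)$ and $\tD(\mub\,\|\,\tsb)$ does not change which $\mub$'s achieve the minimum — indeed the two divergences are literally equal pointwise in $\mub$.
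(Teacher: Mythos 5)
Your proposal is correct and follows essentially the same route as the paper: compute $\tR'(\mub)=\tR(\mub)-(\tsb-\s)\inprod\mub$, conclude $\tD'(\mub\,\|\,\s)=\tD(\mub\,\|\,\tsb)$ pointwise, and note via \Thm{D} (and, for \Price, the zero-divergence characterization of the price map) that every desideratum depends only on this divergence. The only quibble is the final remark about adding a constant $c^x$, which is vacuous since you have already shown the two divergences are literally equal, but this does not affect correctness.
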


To simplify exposition, we assume that $\tsb = \s$ throughout the rest of the section as we search for conditions on $\AdvanceCost$ that achieve implicit submarket closing.  Under this assumption, Proposition~\ref{prop:condval} can be used to characterize our goal in terms of $\tR$.  Specifically, we show that \CondVal and \CondPrice hold if $\tR$ differs from $R$ by a (possibly different) constant on each conditional price space $\hull^x$.

\begin{lemma}
  \label{lem:one-step-condval}
  When $\tsb=\s$, \CondVal and \CondPrice hold together
  if and only if there exist constants $b^x$ for $x\in\X$ such that
$   \tR(\mub)=R(\mub)-b^x$
  for all  $x\in\X$ and $\mub\in\hull^x$.
%
\end{lemma}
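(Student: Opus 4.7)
The plan is to derive the lemma directly from \Prop{condval} together with the definition of mixed Bregman divergence, specializing to the case $\tsb=\s$. Since \Prop{condval} already tells us that \CondVal implies \CondPrice, I need only show that \CondVal alone is equivalent to the stated form for $\tR$; the \CondPrice half of the ``if and only if'' will then be automatic.

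First I would expand $D(\mub\|\s)-\tD(\mub\|\s)$ using the definition of mixed Bregman divergence. With $\tsb=\s$, the linear terms $\s\inprod\mub$ cancel, leaving
\[
  D(\mub\|\s)-\tD(\mub\|\s)
  = \bigl[R(\mub)-\tR(\mub)\bigr] + \bigl[C(\s)-\tC(\s)\bigr].
\]
The second bracket is a constant in $\mub$, so the condition from \Prop{condval} that this difference equals some $c^x$ throughout $\hull^x$ reduces to the condition that $R(\mub)-\tR(\mub)$ is constant on $\hull^x$. Defining $b^x \coloneqq c^x + C(\s) - \tC(\s)$ gives exactly $\tR(\mub)=R(\mub)-b^x$ on $\hull^x$, and the mapping $c^x \leftrightarrow b^x$ is a bijection, so the two conditions are equivalent.

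For the forward direction, I would thus start from ``\CondVal and \CondPrice hold,'' invoke the forward direction of \Prop{condval} to obtain the constants $c^x$, and convert them into $b^x$ as above. For the reverse direction, I would start from the existence of $b^x$, reverse the computation to obtain $c^x$, apply the reverse direction of \Prop{condval} to conclude \CondVal, and then invoke the ``moreover'' clause of \Prop{condval} to conclude \CondPrice.

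There isn't really a hard part here; the lemma is essentially a bookkeeping translation of \Prop{condval} through the definition of $D$ under the simplifying assumption $\tsb=\s$ (which is itself justified by \Lem{sequalsts}). The only thing to be careful about is that both $C(\s)$ and $\tC(\s)$ are independent of $\mub$ and of $x$, so a single global shift absorbs them uniformly across all of the $\hull^x$, meaning the per-class constants $b^x$ and $c^x$ differ only by a fixed offset and the ``exists constants'' quantifier structure is preserved.
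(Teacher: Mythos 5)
Your proposal is correct and follows essentially the same route as the paper's proof: both reduce the lemma to \Prop{condval}, cancel the common linear term $\s\inprod\mub$ in $D(\mub\|\s)-\tD(\mub\|\s)$ so that the condition becomes ``$R(\mub)-\tR(\mub)$ plus the $\mub$- and $x$-independent constant $C(\s)-\tC(\s)$ is constant on each $\hull^x$,'' and then note that $c^x$ and $b^x$ determine each other by this fixed offset, with the \CondPrice half handled by the ``moreover'' clause of \Prop{condval}. The only blemish is a harmless sign slip in your bookkeeping: the offset should read $b^x = c^x - C(\s) + \tC(\s)$, which is exactly the translation used in the paper's proof.
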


This suggests parameterizing our search for $\tR$
by vectors $\b=\set{b^x}_{x\in\X}$.
For $\b\in\reals^\X$, define a function
\defline{
   R^\b(\mub)
   =
\begin{cases}
   R(\mub)-b^x&\text{if $\mub\in\hull^x, x \in \X$,}
\\
   \infty&\text{otherwise.}
\end{cases}
}
If the sets $\hull^x$ overlap, $R^\b$ is not well defined for all~$\b$. Whenever we write $R^\b$, we assume that $\b$ is such that $R^\b$ is well defined.
To satisfy Lemma~\ref{lem:one-step-condval} with a specific $\b$, it suffices to find a convex function $\tR$ ``consistent with'' $R^\b$ in the following sense.
\begin{definition}
We say that a function $\tR$ is \emph{consistent} with $R^\b$ if
$\tR(\mub)=R^\b(\mub)$ for all $\mub\in \hull^\star$.
\end{definition}




\ignore{
In the next few sections, we show how to simplify our objective further by proving that whenever implicit submarket closing is achievable, it suffices to consider setting $\tC$ to the conjugate of a function $\tR$ of a certain form consistent with $R^\hbb$ for a particular vector $\hbb$, which will be defined below. In particular, we will show that $\tR$ can be obtained by extending $R^\hbb$ beyond $\hull^\star$ by taking the largest convex function consistent with $R^\hbb$ if such a function exists.
}

We next simplify our objective further by proving that whenever implicit submarket closing is achievable, it suffices to consider functions $\AdvanceCost$ that set $\tC$ to be the conjugate of the largest convex function consistent with $R^\b$ for some  $\b\in\reals^\X$.  To establish this, we examine properties of the \emph{convex roof} of $R^\b$, the largest convex function that lower-bounds (but is not necessarily consistent with) $R^\b$.


\ignore{
Lemma~\ref{lem:one-step-condval} restricts
the values of function $\tR$ on the set $\hull^\star$
to coincide with some $R^\b$.
In this section, we show that with these values fixed,
we can extend function $\tR$ beyond $\hull^\star$ by
taking the largest convex function which is consistent
with these values, i.e., which is consistent with $R^\b$,
\emph{whenever such a function exists}.
}


\begin{definition}
  \label{def:roof}
  Given a function $f:\reals^K\to\realsplusinf$,
  the \emph{convex roof} of $f$, denoted $(\conv f)$,  is the largest convex function lower-bounding
  $f$, defined by
  \defline{
  (\conv f)(\x) \coloneqq \sup\Set{g(\x):\:
      g \in \mathcal{G},\,
      g \leq f
  }}
  where $\mathcal{G}$ is the set of convex functions
  $g:\reals^K\to\realsplusinf$, and the condition $g\le f$ holds
  pointwise.
\end{definition}
The convex roof is analogous to a convex hull, and the epigraph of $(\conv f)$ is the convex hull of the epigraph of $f$. See~\citet[{\sectsymb}B.2.5]{urruty2001fundamentals} for details.

\begin{example}
\label{ex:one-step-square}
Recall the square market of \Ex{formalism-square}.
%
%
Let $X(\omegab) = \omega_1$, so traders observe the payoff of the first security at observation time.
Then $\hull^x = \{x\}\times[0,1]$ for $x \in \{0,1\}$.  For
simplicity, let $\b =\0$.  We have
$R^\b (\mub)
  = \mu_2\ln\mu_2 + (1-\mu_2)\ln(1-\mu_2)$ for $\mub \in \hull^1 \cup
  \hull^2$ and $R^\b(\mub) = \infty$ for all other $\mub$.
Examining the convex hull of the epigraph of $R^\b$ gives us that for all $\mub\in[0,1]^2$,
we have $(\conv R^\b) (\mub) =\mu_2\ln\mu_2 + (1-\mu_2)\ln(1-\mu_2)$.
%
\end{example}

\jenn{I cleaned the text slightly to make it less confusing but I
  still think this section desperately cries out for a picture.  Add
  for long version?}  As this example illustrates, the roof of $R^\b$
is the ``flattest'' convex function lower-bounding $R^\b$.
Given the geometric interpretation of Bregman divergence
(\Fig{bregman}), a ``flatter'' $\tR$ yields a smaller utility for information.  This flatness plays a key role in achieving \ZeroVal.
Assume that $\tR$ is consistent with $R^\b$,
so \CondPrice and \CondVal hold by Lemma~\ref{lem:one-step-condval}.
Following the intuition in \Fig{bregman},
to achieve \ZeroVal, i.e., $\tD(\hmub^x\|\s)=0$ across all $x\in\X$
and $\hmub^x\in\p(\Omega^x;\s)$,
it must be the case that for all $x$ and $\hmub^x$,
the function values $\tR(\hmub^x)$ lie on the tangent of $\tR$ with
slope $\s$.
That is, the graph of $\tR$ needs to be \emph{flat} across the points $\hmub^x$. This suggests that the roof might be a good candidate for $\tR$.
This intuition is formalized in the following lemma,
which states that instead of considering arbitrary convex $\tR$ consistent with $R^\b$,
we can consider $\tR$ which take
the form of a convex roof.

\begin{lemma}
  \label{lem:roof-optimality}
  If any convex function $\tR$ is consistent with $R^\b$ then so is the convex roof $\tR'=(\conv R^\b)$. Furthermore,
  if $\tR$ satisfies \ZeroVal or \DecVal then so does $\tR'$.
\end{lemma}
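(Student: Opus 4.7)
The plan is to exploit two order-relations: (i) the convex roof $\tR' = (\conv R^\b)$ is, by construction, the pointwise largest convex function dominated by $R^\b$, so any consistent convex $\tR$ automatically satisfies $\tR \leq \tR' \leq R^\b$; (ii) conjugation is order-reversing, so $\tR \leq \tR'$ yields $\tC \geq \tC'$ for the respective conjugates. These two facts together pin down $\tR' = R^\b$ on $\hull^\star$ (which gives consistency) and force $\tD'(\mub\|\s) \leq \tD(\mub\|\s)$ on $\hull^\star$ (from which the \ZeroVal and \DecVal preservation both follow via Theorem~\ref{thm:D}).

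For the consistency claim, note that consistency of $\tR$ ($\tR = R^\b$ on $\hull^\star$) combined with $R^\b \equiv \infty$ outside $\hull^\star$ implies $\tR \leq R^\b$ pointwise. Hence $\tR \leq \tR'$ by the defining supremum of the roof, while $\tR' \leq R^\b$ by the same definition. Restricting to $\hull^\star$, the chain $R^\b = \tR \leq \tR' \leq R^\b$ collapses to equality, so $\tR'(\mub) = R^\b(\mub)$ for all $\mub \in \hull^\star$, as desired.

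For the second part, I invoke the standing assumption $\tsb = \s$ and let $\tC'$ denote the conjugate of $\tR'$. From $\tR \leq \tR'$, conjugation reverses the inequality: $\tC \geq \tC'$. Then for every $\mub \in \hull^\star$,
\[
\tD(\mub\|\s) - \tD'(\mub\|\s) = [\tR(\mub) - \tR'(\mub)] + [\tC(\s) - \tC'(\s)] \geq 0,
\]
because the first bracket vanishes (both $\tR$ and $\tR'$ coincide with $R^\b$ on $\hull^\star$) and the second is non-negative by the conjugate inequality. Applying \eqref{eq:D:E} gives $\tValue'(X{=}x;\s) = \min_{\mub\in\hull^x} \tD'(\mub\|\s) \leq \min_{\mub\in\hull^x} \tD(\mub\|\s) = \tValue(X{=}x;\s)$ for each $x \in \X$. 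If $\tR$ satisfies \ZeroVal, the right-hand side is zero and non-negativity of the Bregman divergence squeezes $\tValue'(X{=}x;\s)$ to zero. If $\tR$ satisfies \DecVal, the bound $\tValue'(X{=}x;\s) \leq \tValue(X{=}x;\s) \leq \Value(X{=}x;\s)$ is immediate, and whenever $\Value(X{=}x;\s) > 0$ the strict inequality $\tValue(X{=}x;\s) < \Value(X{=}x;\s)$ transfers directly to $\tValue'(X{=}x;\s)$.

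I do not anticipate a deep obstacle: the only subtlety is tracking the correct direction of inequality at each step, since consistency fixes the $\tR$-contribution to the divergence while order-reversal of conjugates handles the $\tC$-contribution. The one technical caveat is ensuring that $\tR'$ admits a well-defined conjugate $\tC'$ so that Theorem~\ref{thm:D} applies to the primed quantities; this follows because the existence of a consistent convex $\tR$ sandwiches the roof between finite functions on $\hull^\star$, keeping it proper.
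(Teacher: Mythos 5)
Your proposal is correct and follows essentially the same route as the paper's proof: the sandwich $\tR\le\tR'\le R^\b$ obtained from the roof's defining supremum gives consistency, and the order-reversal of conjugation ($\tC'\le\tC$) combined with $\tR'=\tR=R^\b$ on $\hull^\star$ yields $\tD'(\mub\|\s)\le\tD(\mub\|\s)$, from which \ZeroVal and \DecVal follow exactly as in the paper (the paper states the divergence inequality for all $\q$, you only at $\q=\s$, which suffices).
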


\subsection{IMPLICIT SUBMARKET CLOSING}

We now have the tools to answer the central question of this section: When can we achieve implicit submarket closing?  Lemma~\ref{lem:sequalsts} implies that we can assume that $\AdvanceState$ is the identity function, and Lemmas~\ref{lem:one-step-condval} and~\ref{lem:roof-optimality} imply that it suffices to consider functions $\AdvanceCost$ that set $\tC$ to the conjugate of $\tR=(\conv R^\b)$ for some $\b\in\reals^\X$. What remains is to find the vector $\b$ that guarantees \ZeroVal. As mentioned above, \ZeroVal is satisfied if and only if $\bigParens{\hmub^x,\tR(\hmub^x)}$ lies on the tangent of $\tR$ with the slope $\s$ for all $x \in \X$ and $\hmub^x\in\p(\Omega^x;\s)$. This implies that
$\tR(\hmub^x)=\hmub^x\inprod\s - c$
for all $x$ and $\hmub^x$ and some constant $c$. The specific choice of $c$
does not matter since $\tD$ is unchanged by vertical shifts
of the graph of $\tR$. For convenience, we set $c=C(\s)$, which
makes the tangents of $R$ and $\tR$ with the slope $\s$ coincide. This
and \Lem{one-step-condval}
then yield the choice of $\b = \hbb$, with
\begin{equation}
  \label{eq:one-step-b-0}
  \hb^x\coloneqq R(\hmub^x)+C(\s)-\hmub^x\inprod\s=D(\hmub^x\|\s)
\end{equation}
for all $x$ and any choice of $\hmub^x\in\p(\Omega^x;\s)$.
The resulting construction of $\tR=(\conv R^\hbb)$ can be described using geometric intuition.
First, consider the tangent of $R$ with slope equal to the current
market state $\s$. For each $x \in \X$, take the subgraph of $R$ over
the set $\hull^x$ and let it ``fall'' vertically until it touches this
tangent at the point $\hmub^x$. The set of fallen graphs for all $x$ together describes $R^\hbb$ and the convex hull of the fallen epigraphs yields $\tR=(\conv R^\hbb)$.

\ignore{
\begin{itemize}
\item Take the tangent of $R$ with the slope equal to the current state $\s$.
\item Consider subgraphs of $R$ over sets $\hull^x$ and let them ``fall'' vertically
      until they touch the tangent. Note that they
      will touch the tangent at the points $\hmub^x$.
\item The fallen graphs together describe $R^\hbb$. The convex hull of the fallen epigraphs yields $\tR=(\conv R^\hbb)$.
\end{itemize}
}

\raf{the following sentence seemed like a leap when I read it; maybe we should add justification?}
Defining $\AdvanceCost$ using this construction guarantees \ZeroVal, but \CondPrice and \CondVal are achieved only when $\tR$ is consistent with $R^\hbb$.  Conversely, whenever the three properties are achievable, this construction produces a function $\tR$ consistent with $R^\hbb$. This yields a full characterization of when implicit submarket closing is achievable.

\begin{theorem}
  \label{thm:one-step-0-profits}
Let $\hbb$ be defined as in \Eq{one-step-b-0}.
  \CondPrice, \CondVal, and \ZeroVal can be satisfied using \Prot{switch}
  if and only if $(\conv R^\hbb)$ is consistent with $R^\hbb$. In this case,
  they can be achieved with $\AdvanceState$ as the identity and
  $\AdvanceCost$ outputting the conjugate of $\tR=(\conv R^\hbb)$.
\end{theorem}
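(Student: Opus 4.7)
The plan is to apply \Lem{sequalsts} to assume $\tsb = \s$, then characterize both directions of the iff via \Lem{one-step-condval} and \Lem{roof-optimality} together with a tangent-based computation. Let $\tR$ denote the conjugate of $\tC$. By \Lem{one-step-condval}, \CondPrice and \CondVal together are equivalent to the existence of constants $\set{b^x}_{x\in\X}$ such that $\tR(\mub) = R(\mub) - b^x$ for $\mub \in \hull^x$, i.e., $\tR$ agrees with $R^\b$ on $\hull^\star$. The key affine object I use repeatedly is $t(\mub) := \mub\inprod\s - C(\s)$, the tangent to $R$ with slope $\s$, which satisfies $R(\mub) - t(\mub) = D(\mub\|\s) \geq 0$.

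For the ($\Rightarrow$) direction, suppose some $\tC$ achieves the three properties with corresponding constants $b^x$. By \ZeroVal and \Thm{D}, for each $x$ there is $\hmub^x \in \hull^x$ with $\tD(\hmub^x \| \s) = 0$, and \CondPrice then forces $\hmub^x \in \p(\Omega^x;\s)$, so $D(\hmub^x\|\s) = \hb^x$. Expanding $\tD(\hmub^x\|\s) = \tR(\hmub^x) + \tC(\s) - \s\inprod\hmub^x$ via $\tR(\hmub^x) = R(\hmub^x) - b^x$ yields $b^x = \hb^x + c$, where $c := \tC(\s) - C(\s)$ is independent of $x$. Hence $\tR + c$ is a convex function consistent with $R^\hbb$, so by \Lem{roof-optimality}, $(\conv R^\hbb)$ is also consistent with $R^\hbb$.

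For the ($\Leftarrow$) direction, assume $(\conv R^\hbb)$ is consistent with $R^\hbb$, set $\tR := (\conv R^\hbb)$, and let $\tC$ be its conjugate. Consistency gives $\tR = R^\hbb$ on $\hull^\star$, so \Lem{one-step-condval} (with $b^x = \hb^x$) yields \CondPrice and \CondVal. For \ZeroVal, note that since $D(\mub\|\s) \geq \hb^x$ on $\hull^x$ with equality precisely at $\hmub^x \in \p(\Omega^x;\s)$, we have $R^\hbb \geq t$ on each $\hull^x$ with equality at the $\hmub^x$. Combined with $R^\hbb = \infty$ off $\hull^\star$, the affine function $t$ lower-bounds $R^\hbb$ globally, so $t \leq \tR \leq R^\hbb$. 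Evaluating at $\hmub^x$ sandwiches $\tR(\hmub^x) = t(\hmub^x)$. Taking conjugates then gives $\tC(\s) = C(\s)$ (since $\s\inprod\mub - \tR(\mub) \leq \s\inprod\mub - t(\mub) = C(\s)$ with equality at $\hmub^x$), so $\tD(\hmub^x\|\s) = 0$ and \ZeroVal follows.

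The main obstacle is the tangent computation in the backward direction: establishing the sandwich $t \leq \tR \leq R^\hbb$ and then forcing equality at the $\hmub^x$ is what converts the convex-roof construction into the zero-utility guarantee. The rest is bookkeeping via the preceding lemmas.
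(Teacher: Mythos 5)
Your proof is correct, and its skeleton matches the paper's: reduce to $\tsb=\s$ via \Lem{sequalsts}, use \Lem{one-step-condval} to parameterize by $\b$, derive $b^x=\hb^x+c$ from \ZeroVal via the expansion of $\tD(\hmub^x\|\s)$ (this is exactly the paper's computation leading to $\b=\hbb+c\1$), and invoke the first part of \Lem{roof-optimality} to pass to the roof. The one place where you take a genuinely different route is the backward direction's verification of \ZeroVal: the paper gets $\tC(\s)=C(\s)$ by citing \Prop{one-step-roof-dual}, whose proof requires showing that $(\conv R^\hbb)$ is closed (so that $\tR=(R^\hbb)^{**}$ and $\tC=(R^\hbb)^*=\max_{x}\bigBracks{\hb^x+C^x(\cdot)}$), whereas you obtain $\tC(\s)=C(\s)$ directly from the sandwich $t\le\tR\le R^\hbb$, where $t(\mub)=\mub\inprod\s-C(\s)$ is the tangent with slope $\s$: since $\hb^x=\min_{\mub\in\hull^x}D(\mub\|\s)$, the affine $t$ lower-bounds $R^\hbb$ and hence the roof dominates it, and equality at each $\hmub^x\in\p(\Omega^x;\s)$ forces the conjugate at $\s$ to equal $C(\s)$. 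Your argument is more elementary and self-contained—it avoids the closedness/biconjugate machinery and makes the geometric picture of \Fig{bregman} (flattening onto the tangent) explicit—while the paper's detour through \Prop{one-step-roof-dual} buys the explicit formula for $\tC$ that it needs anyway in \Sec{one-step-examples} and in the worst-case-loss analysis. Both are valid; yours proves only what the theorem needs.
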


\subsection{CONSTRUCTING THE COST FUNCTION}
\label{sec:one-step-examples}

Theorem~\ref{thm:one-step-0-profits} describes how to achieve implicit submarket closing by defining the cost function $\tC$ output by $\AdvanceCost$ implicitly via its conjugate $\tR$.  In this section, we provide an explicit construction of the resulting cost function, and illustrate the construction through examples.

Fixing $R$, for each $x \in \X$ define a function $C^x(\q)\coloneqq \sup_{\mub\in\hull^x} \bigBracks{ \q\inprod\mub - R(\mub) }$. Each function $C^x$ can be viewed as a bounded-loss and arbitrage-free cost function for outcomes in $\Omega^x$.  The conjugate of each $C^x$ coincides with $R$ on $\hull^x$ (and is infinite outside $\hull^x$).  The explicit expression for $\tC$ is described in the following proposition.
\begin{proposition}
  \label{prop:one-step-roof-dual}
  For a given $C$ with conjugate $R$, define $\hbb$ as in \Eq{one-step-b-0} and let $\tR = (\conv R^\hbb)$.  The conjugate $\tC$ of $\tR$ can be written
 $\tC(\q) = \max_{x\in\X} \bigBracks{\hb^x + C^x(\q)}$.
  Furthermore, for each $x \in \X$, $\hb^x=C(\s)-C^x(\s)$.
\end{proposition}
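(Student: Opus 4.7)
The plan is to handle the two claims separately: first derive the explicit formula for $\tC$ by direct conjugation, and then manipulate the Bregman-divergence formula for $\hb^x$ to express it in the desired form.

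For the explicit formula, I would begin with the standard convex-analysis observation that $(\conv f)^* = f^*$ for any function $f:\reals^K\to\realsplusinf$. This holds because the conjugate depends only on the supremum of a linear functional over the epigraph, and such a supremum is invariant under taking the convex hull of the epigraph; since $\epi(\conv f)=\conv(\epi f)$ (see the reference to Hiriart-Urruty--Lemar\'echal already cited), the conjugates coincide. Applying this to $f=R^\hbb$ gives $\tC(\q)=(R^\hbb)^*(\q)=\sup_{\mub}\bigBracks{\q\inprod\mub - R^\hbb(\mub)}$. Since $R^\hbb(\mub)=\infty$ off $\hull^\star=\bigcup_x \hull^x$ and equals $R(\mub)-\hb^x$ on $\hull^x$, I can split the supremum over $x\in\X$:
\[
\tC(\q) = \max_{x\in\X}\sup_{\mub\in\hull^x}\bigBracks{\q\inprod\mub - R(\mub)+\hb^x} = \max_{x\in\X}\bigBracks{\hb^x + C^x(\q)},
\]
using the definition of $C^x$. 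A small technical point is that when $\hull^x$ overlap, $R^\hbb$ is only defined when the $\hb^x$ are consistent on overlaps; the standing assumption that $R^\hbb$ is well defined is exactly what is needed here.

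For the identity $\hb^x = C(\s)-C^x(\s)$, I would start from the definition in \Eq{one-step-b-0}, $\hb^x = D(\hmub^x\|\s) = R(\hmub^x)+C(\s)-\hmub^x\inprod\s$, for $\hmub^x\in\p(\Omega^x;\s)$. By \Thm{D} (specifically \eqref{eq:p:E}), $\hmub^x$ is a minimizer of $D(\mub\|\s)$ over $\mub\in\hull^x$. Since $D(\mub\|\s)=R(\mub)+C(\s)-\s\inprod\mub$ and $C(\s)$ does not depend on $\mub$, minimizing $D(\mub\|\s)$ on $\hull^x$ is equivalent to maximizing $\s\inprod\mub - R(\mub)$ on $\hull^x$. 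That maximum is by definition $C^x(\s)$, so $C^x(\s)=\s\inprod\hmub^x-R(\hmub^x)$. Substituting back, $\hb^x = C(\s) - C^x(\s)$, which also confirms that the value of $\hb^x$ does not depend on the particular choice of $\hmub^x\in\p(\Omega^x;\s)$.

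The main obstacle is really only the first step, namely justifying $(\conv R^\hbb)^* = (R^\hbb)^*$; the rest reduces to carefully unpacking definitions. Beyond that, I would include a brief sanity check that each $C^x$ is a well-defined, finite, convex function on $\reals^K$ (so that the $\max$ in the explicit formula is sensible); this follows from $\hull^x$ being a non-empty compact convex subset of $\hull$, so $C^x$ inherits arbitrage-freeness and bounded loss, as already noted in the paragraph preceding the proposition.
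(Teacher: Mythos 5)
Your proposal is correct, and the second half (the identity $\hb^x=C(\s)-C^x(\s)$ via minimizing $D(\cdot\|\s)$ over $\hull^x$, equivalently maximizing $\s\inprod\mub-R(\mub)$, i.e., evaluating $C^x(\s)$) is essentially the paper's own argument. Where you diverge is in justifying $\tC=(R^\hbb)^*$: you invoke the elementary fact that conjugation is insensitive to taking the convex roof, $(\conv f)^*=f^*$, which follows because an affine minorant's epigraph is a convex set containing $\epi f$ iff it contains $\conv(\epi f)=\epi(\conv f)$ (equivalently, the supremum of the linear functional $(\u,t)\mapsto \q\inprod\u-t$ is unchanged under convex hulls). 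The paper instead proves that $\tR=(\conv R^\hbb)$ is \emph{closed} — via continuity and boundedness of $R$ on the polyhedral compact $\hull$, truncating $\epi R^\hbb$ at an upper bound so that the relevant set is a finite union of compact sets, hence its convex hull is closed — and then uses $f^{**}=\cl(\conv f)$ to write $\tR=(R^\hbb)^{**}$ and $\tC=\tR^*=(R^\hbb)^*$. For the literal statement of the proposition your route is shorter and avoids the compactness argument entirely; what the paper's heavier route buys is the closedness of $\tR$, i.e., that $\tR$ and $\tC$ form a genuine conjugate pair ($\tC^*=\tR$), which is what licenses treating $\tD$ as the mixed Bregman divergence of that pair elsewhere (cf.\ \Prop{first:order}). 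If you only need the displayed formula for $\tC$ and the value of $\hb^x$, your argument suffices (your properness remark is easily discharged: $R^\hbb$ is finite on the nonempty set $\hull^\star$ and bounded below there, so it has an affine minorant); if the closedness of $\tR$ is wanted downstream, it would still have to be supplied separately as in the paper.
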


At any market state $\q$ with a unique $\hat{x} \coloneqq \argmax_{x\in\X} \bigBracks{\hb^x + C^x(\q)}$, the price
according to $\tC$ lies in the set $\hull^{\hat{x}}$.  When $\hat{x}$ is not unique, the market has a bid-ask spread.  The addition of $\hb^x$ ensures that the bid-ask spread at the market state $\s$ contains conditional prices $\hmub^x$ across all $x$. \miro{In the longer version:
As a corollary, we have $\tC(\s) = C(\s)$, so the cost function value does not jump at the observation time.}
To illustrate this construction, we return to the example of a square.

\ignore{
To begin the construction of
the cost function corresponding to the roof $(\conv R^\b)$, we first construct specific cost
functions for submarkets
with price spaces $\hull^x=\hull(\Omega^x)$ (i.e., price spaces conditional on $X=x$):
\begin{equation}
  \label{eq:one-step-submarket-cost}
  C^x(\q)\coloneqq \sup_{\mub\in\hull^x} \bigBracks{ \q\inprod\mub - R(\mub) }
\enspace.
\end{equation}
This is analogous to \Eq{conjugate}.
In words, $C^x$ is the bounded-loss and arbitrage-free cost function for outcomes in $\Omega^x$,
whose conjugate coincides with $R$ on $\hull^x$ (and is necessarily infinite
outside). The agreement with $R$ on $\hull^x$ aims to preserve the
original utility for information structure
(following the intuition of \Fig{bregman}).
} 

\ignore{
Using cost functions $C^x$, we can derive a natural expression for the new cost function
$\tC$. It is based on the intuition that we want to prevent just-in-time arbitrage of knowing the
correct $x$. We achieve this by choosing $\tC(\q)=\max_{x\in\X}[b^x+C^x(\q)]$ for some $b^x$.
The effect of taking the maximum is that we choose prices from $\hull^x$ when the maximum is at a unique $x$,
or create a bid-ask spread if $x$ is non-unique. Values of $b^x$ are selected to ensure
that there is a full bid-ask spread at the state $\tsb=\s$. It turns out that this is
achieved by $\hb^x$:
} 

\begin{example}
  \label{ex:one-step-square-2}
Consider again the square market from Examples~\ref{ex:formalism-square} and \ref{ex:one-step-square} with $X(\omegab) = \omega_1$.
One can verify that $C^x(\q) = x q_1 + \ln\bigParens{1+e^{q_2}}$
for $x\in\set{0,1}$. \Prop{one-step-roof-dual} gives
\\[4pt]$\tC(\q)
  = \max_{x\in\{0,1\}} \BigBracks{
        x (q_1 - s_1) + \ln(1+e^{q_2}) + \ln(1+e^{s_1})
     }$
\\$\phantom{\tC(\q)}= \max\{0,q_1-s_1\} + \ln(1+e^{s_1}) + \ln(1+e^{q_2}).$
\\[6pt]
In switching from $C$ to $\tC$ we have effectively changed the first
term of our cost from a basic LMSR cost for a single binary security
to the piecewise linear cost of
Example~\ref{ex:formalism-piecewise-linear}, introducing a bid-ask spread for security 1 when $q_1=s_1$; states $\q = (s_1,q_2)$ have $\tpb(\q)=[0,1]\times \{e^{q_2}/(1+e^{q_2})\}$.  The market for security 1 has thus implicitly closed; as the new market begins with $\q=\s$, any trader can switch the price of security 1 to 0 or 1 by simply purchasing an infinitesimal quantity of security 1 in the appropriate direction, at essentially no cost and with no ability to profit.
\end{example}

The example above illustrates our cost function construction, but does not show that $\tR$ is consistent with $R^\hbb$
as required by Theorem~\ref{thm:one-step-0-profits}.  In fact, it is consistent.
This follows from the sufficient condition proved in \App{extend-R-b}.
Briefly, the condition is that $\hull^\star$ does not contain any price vectors $\mub$ that can be expressed as nontrivial convex combinations of vectors from multiple $\hull^x$.

In \App{suff:examples}, we show that this sufficient condition applies
to many settings of interest such as arbitrary partitions
of simplex and submarket observations in
binary-payoff LCMMs (defined in \Sec{gradual}),
which were used to run a combinatorial market
for the 2012 U.S. Elections~\citep{DudikEtAl13}.

A case in which the sufficient condition is violated
is the square market with $X(\omegab) = \omega_1 + \omega_2 \in
\{0,1,2\}$, where $\hull^0 = (0,0)$ and $\hull^2 = (1,1)$ but $(\tfrac
1 2,\tfrac 1 2) = \tfrac 1 2 (0,0) +  \tfrac 1 2 (1,1) \in \hull^1$.
This particular
example also fails to satisfy Theorem~\ref{thm:one-step-0-profits}
(see \App{closing:impossible}),
but in general the sufficient condition is not necessary
(see Appendix~\ref{app:ex-roof-nec}).



\section{GRADUAL DECREASE}
\label{sec:gradual}

We now consider gradual decrease market makers
(\Prot{time-sensitive}) for the gradual decrease setting in which the
utility of information about a future observation $X$ is decreasing
continuously over time.  We focus on \emph{linearly
  constrained market makers} (LCMMs)~\cite{DudikLaPe12}, which
naturally decompose into submarkets.
%
%
Our proposed gradual decrease market maker employs a different LCMM
at each time step, and satisfies various desiderata of
\Sec{desiderata} between steps.

As a warm-up for the concepts introduced in this section,
we show how the ``liquidity parameter'' can be used to implement
a decreasing utility for information.

\begin{example}\emph{Homogeneous decrease in utility for information.}
\label{ex:decrease:0}
We begin with a differentiable
cost function
$C$ in a state $\s$. Let $\alpha\in(0,1)$, and
define $\tC(\q)=\alpha C(\q/\alpha)$, and $\tsb=\alpha\s$.
$\tC$ is parameterized by the ``liquidity parameter'' $\alpha$.
The transformation $\tsb$ guarantees the preservation of
prices, i.e.,
$  \tpb(\tsb)=\nabla\tC(\tsb)=\alpha \nabla C(\tsb/\alpha) / \alpha
  = \nabla C(\s)
  = \p(\s)$.
We can derive that
$\tR(\mub)=\alpha R(\mub)$, and
$\tD(\mub\|\q)=\alpha D(\mub\|\q/\alpha)$,
so, for all $\mub$,
$  \tD(\mub\|\tsb)=\alpha D(\mub\|\s)$.
In words, the utility for all beliefs $\mub$
with respect to the current state is
decreased according to the multiplier $\alpha$.
\end{example}

This idea
will be the basis of our construction.
We next define the components of our setup
and prove the desiderata.

\subsection{LINEARLY CONSTRAINED MARKETS}
\label{sec:linearly:constrained}

Recall that $\rhob:\Omega\to\reals^K$ is the payoff function. Let $\G$ be a system of non-empty
disjoint subsets $g\subseteq[K]$ forming a partition of coordinates of $\rhob$,
so $[K]=\bigcup_{g\in\G} g$. We use the notation
$\rhob_g(\omega)\coloneqq\Parens{\rho_i(\omega)}_{i\in g}$ for the
block of coordinates in $g$, and similarly $\mub_g$ and $\q_g$. Blocks
$g$ describe groups of securities that are treated as separate ``submarkets,''
but there can be logical dependencies among them.

\begin{example}\emph{Medal counts.}
  Consider a prediction market for the Olympics. Assume that \Slovakia
  takes part in $n$ Olympic events. In each, \Slovakia can win
  a gold medal or not. Encode this outcome space as
  $\Omega=\set{0,1}^n$.  Define random variables
  $X_i(\omegab)=\omega_i$ equal to 1 iff \Slovakia wins gold in the
  $i$th Olympic event. Also define a random variable $Y=\sum_{i=1}^n
  X_i$ representing the number of gold medals that \Slovakia wins in
  total. We create $K=2n+1$ securities, corresponding to 0/1 indicators of
  the form $\1[X_i=1]$ for $i\in[n]$ and $\1[Y=y]$ for $y\in\set{0,\dotsc,n}$.
  That is, $\rho_i=X_i$ for $i\in[n]$ and $\rho_{n+1+y}=\1[Y=y]$ for
  $y\in\set{0,\dotsc,n}$.  A natural
  block structure in this market is $ \G=\bigl\{
  \set{1},\,\set{2},\dotsc,\set{n},\,\set{n+1,\dotsc,2n+1} \bigr\}$
  with submarkets corresponding to the $X_i$ and $Y$.
\end{example}

Given the block structure $\G$,
the construction of a linearly constrained market begins with bounded-loss
and arbitrage-free convex cost functions $C_g:\reals^g\to\reals$ with
conjugates $R_g$ and divergences $D_g$ for each
$g \in \G$. These cost functions are assumed to be easy to compute
and give rise to a ``direct-sum'' cost
$  \Cx(\q)=\sum_{g\in\G} C_g(\q_g)$
with the conjugate $\Rx(\mub)=\sum_{g\in\G} R_g(\mub_g)$
and divergence $\Dx(\mub\|\q)=\sum_{g\in\G} D_g(\mub_g\|\q_g)$.

Since $\Cx$ decomposes, it can be calculated quickly.
However, the market maker $\Cx$ might
allow arbitrage due to
the lack of consistency among submarkets
since arbitrage opportunities
arise when prices fall outside $\hull$~\cite{Abernethy11}.
$\hull$ is always polyhedral, so it can be described as $\hull =
\Set{\mub\in\reals^K:\:\A^\top\mub\ge\b}$ for some matrix
$\A\in\reals^{K\times M}$ and vector $\b\in\reals^M$.
%
%
Letting $\a_m$ denote the $m$th column of~$\A$, arbitrage
opportunities open up if the price of the bundle $\a_m$ falls below
$b_m$. For any $\etab \in \reals^M_+$, the bundle $\A\etab$ presents
an arbitrage opportunity if priced below $\b \inprod \etab$.

A \emph{linearly constrained market maker} (LCMM) is
described by the cost function
%
 $ C(\q) = \inf_{\etab\in\reals^M_+}\bigBracks{\Cx(\q+\A\etab)-\b\inprod\etab}$.
%
While the definition of $C$ is slightly involved,
the conjugate $R$ has a natural
meaning as a restriction of the direct-sum market to the price space
$\hull$, i.e., $R(\mub)=\Rx(\mub) + \ones\Bracks{\mub\in\hull}$. Furthermore,
the infimum in the definition of $C$ is always attained
(see \App{lcmm}).
Fixing $\q$ and letting $\etab^\star$ be a minimizer in the definition,
we can think of the market maker as automatically charging traders for
the bundle $\A\etab^\star$, which would present an arbitrage
opportunity, and returning to them the guaranteed payout
$\b\inprod\etab$.  This benefits traders while maintaining the same
worst-case loss guarantee for the market maker as $\Cx$
\citep{DudikLaPe12}.

\ignore{
\footnote{\citet{DudikLaPe12} take
  advantage of the fact that when evaluating the cost, it is not
  necessary to optimize $\etab$ exactly.
\ignore{
  In fact, it is possible to
  alternate between two regimes: (i) executing trades while keeping
  $\etab$ fixed (and hence obtaining rapid pricing according to
  $\Cx$), and (ii) optimizing over $\etab$ (possibly imperfectly, to
  remove arbitrage).}
  Also, it is frequently more tractable to just require
  that the linear constraints describe a superset of $\hull$, leaving
  some computationally hard arbitrage to the traders. Here, we for
  simplicity assume that the constraints describe the
  convex hull exactly 
  and the cost function
  calculation follows \Eq{C:lcmm} exactly.}
} 

\ignore{
LCMM can be interpreted as implementing the ``arbitrager'' action in
the market $\Cx$ via the optimal choice of $\etab$. In particular, the
market maker can be viewed as having purchased $\eta_m$ shares of each
bundle $\a_m$, and guaranteed payoff of $\b \inprod \etab$ from these
securities is returned to traders. See \citet{DudikLaPe12} for a
thorough analysis of such markets.
}

\begin{example}\emph{LCMM for medal counts.}
Continuing the previous example, for submarkets
$X_i$, we can define LMSR costs
$C_i(q_i)=\ln\Parens{1+\exp(q_i)}$. For the submarket
for~$Y$, let $g=\set{n+1,\dotsc,2n+1}$ and use the LMSR
cost $C_g(\q_g)=\ln\bigParens{\sum_{y=0}^n \exp(q_{n+1+y})}$.
The submarkets for $X_i$ and $Y$ are linked. One example of a
linear constraint is based on the linearity of expectations:
for any distribution, we must have
$\E[Y]=\sum_{i=1}^n\E[X_i]$. This places an equality
constraint
$\sum_{y=0}^n\; y\cdot\mu_{n+1+y}\;=\;\sum_{i=1}^n\;\mu_i$
on the vector $\mub$, which can be expressed as two inequality constraints
(see \citet{DudikLaPe12,DudikEtAl13} for more
on constraint generation).
\end{example}

\subsection{DECREASING LIQUIDITY}

We now study the gradual decrease scenario in which the utility for information in each
submarket $g$ 
decreases over time. In the Olympics example, the market
maker may want to continuously
decrease the rewards for information about a particular event as the
event takes place.

We generalize the strategy from \Ex{decrease:0}
to LCMMs and extend them to time-sensitive cost functions
by introducing the ``information-utility schedule'' in the form of a differentiable
non-increasing
function $\beta_g:\reals\to(0,1]$ with $\beta_g(t^0)=1$.
The speed of decrease of $\beta_g$
controls the speed of decrease of the utility for information
in each submarket. (We make this statement more precise in
\Thm{time-sensitive}.)

\ignore{
We implement the gradual decrease by extending LCMMs to gradual decrease markets
as in \Prot{time-sensitive}.
In addition to the structure described
in the previous section, our extension requires that
each submarket be provided with the
``information-utility schedule'' in the form of a differentiable
non-increasing
function $\beta_g:\reals\to(0,1]$ with $\beta_g(t^0)=1$.
The speed of decrease of $\beta_g$
controls the speed of decrease of the utility for information
in each submarket. (We make this statement more precise in
\Thm{time-sensitive}.

We adopt the strategy used in \Ex{decrease:0}.}

We first define a gradual decrease direct-sum cost function
  $  \CCx(\q;t) = \sum_{g\in\G} \beta_g(t) C_g\bigParens{\q_g / \beta_g(t)}$
which is used to define a gradual decrease LCMM, and a matching
$\AdvanceState$ as follows:
\begin{align*}
&
\textstyle
   \C(\q;t) = \inf_{\etab\in\reals^M_+}\bigBracks{
         \CCx(\q+\A\etab;t) - \b\inprod\etab}
\\
&
   \AdvanceState(\q;\,t,\ttt) = \tqb
\\
&
   \quad\text{such that }\tqb_g = \tfrac{\beta_g(\ttt)}{\beta_g(t)}(\q_g+\deltab^\star_g) - \deltab^\star_g
\\
&  \quad\text{where $\etab^\star$ is a minimizer in $\C(\q;t)$
                     and $\deltab^\star=\A\etab^\star$}
\enspace.
\end{align*}
When considering the state update from time~$t$ to time~$\ttt$,
the ratio $\beta_g(\ttt)/\beta_g(t)$
has the role of the liquidity parameter $\alpha$ in \Ex{decrease:0}.
The motivation behind the definition of $\AdvanceState$ is to guarantee
that $\tqb_g+\deltab^\star_g=[\beta_g(\ttt)/\beta_g(t)](\q_g+\deltab^\star_g)$,
which turns out to ensure that $\etab^\star$ remains the
minimizer and the prices are unchanged.
The preservation of prices (\Price) is achieved by a scaling similar to \Ex{decrease:0},
albeit applied to the market state in the direct-sum market underlying the LCMM.

This intuition is formalized in the next theorem,
which shows that the above construction
preserves prices and decreases the utility for information, as captured
by the mixed Bregman divergence, according to the schedules
$\beta_g$. We use the notation
$C^t(\q)\coloneqq\C(\q;t)$ and write $D^t_g$
for the divergence derived from $C^t_g(\q_g) \coloneqq \beta_g(t) C_g (\q_g
/ \beta_g(t))$.
\begin{theorem}
\label{thm:time-sensitive}
Let $\C$ be a gradual decrease LCMM, let $t,\ttt\in\reals$ and
$\s\in\reals^K$.
The
replacement of $C^t$ by $\ttC\coloneqq C^{\ttt}$ and $\s$ by
$\tsb\coloneqq\AdvanceState(\s;\,t,\ttt)$ satisfies
\Price. Also,
\begin{equation}
\label{eq:time-sensitive}
   \ttD(\mub\|\tsb) =
   \sum_{g\in\G} \talpha_g D^t_g(\mub_g\|\s_g+\deltab^\star_g)
   + (\A^\top\mub-\b)\cdot\etab^\star
\end{equation}
for all $\mub\in\hull$,
where $\etab^\star$ and $\deltab^\star$ are defined by $\AdvanceState(\s;\,t,\ttt)$,
and $\talpha_g=\beta_g(\ttt)/\beta_g(t)>0$.
\end{theorem}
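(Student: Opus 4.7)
The proof plan rests on two basic scaling identities for the building blocks $C^t_g(\q_g) = \beta_g(t) C_g(\q_g/\beta_g(t))$: conjugating via the change of variables $\u_g = \q_g/\beta_g(t)$ yields $R^t_g(\mub_g) = \beta_g(t) R_g(\mub_g)$, and differentiating gives $\nabla C^t_g(\q_g) = \nabla C_g(\q_g/\beta_g(t))$. Consequently, the induced divergence satisfies $D^t_g(\mub_g\|\q_g) = \beta_g(t) D_g(\mub_g\|\q_g/\beta_g(t))$. I would state these identities up front as the ``time-rescaling'' machinery used throughout.

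The key observation, which drives both the \Price claim and the $\etab^\star$ reuse, is that the definition of $\AdvanceState$ was hand-tuned to produce $\tsb_g + \deltab^\star_g = \talpha_g(\s_g + \deltab^\star_g)$ where $\talpha_g = \beta_g(\ttt)/\beta_g(t)$. Combined with the gradient identity, this gives $\nabla C^{\ttt}_g(\tsb_g + \deltab^\star_g) = \nabla C_g\bigParens{(\s_g+\deltab^\star_g)/\beta_g(t)} = \nabla C^t_g(\s_g + \deltab^\star_g)$, so the gradient of $\CCx(\cdot;\ttt)$ at $\tsb+\A\etab^\star$ coincides componentwise with that of $\CCx(\cdot;t)$ at $\s+\A\etab^\star$. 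The KKT conditions for the $\etab$-minimization (appealing to attainment, which is stated in the appendix reference on LCMMs) depend only on this gradient and on $\A,\b$, which are fixed. Hence $\etab^\star$ remains optimal at $(\tsb,\ttt)$, and by the envelope theorem the LCMM price at $\tsb$ coincides with the direct-sum price at $\tsb+\A\etab^\star$, which in turn equals $\p(\s)$. This yields \Price.

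For the divergence formula, I would simply expand using the definition $\tD(\mub\|\tsb) = \tR(\mub) + \tC(\tsb) - \tsb\inprod\mub$. For $\mub\in\hull$, $\tR(\mub) = \sum_g \beta_g(\ttt) R_g(\mub_g) = \sum_g \talpha_g R^t_g(\mub_g)$. Using optimality of $\etab^\star$, $\tC(\tsb) = \CCx(\tsb+\A\etab^\star;\ttt) - \b\inprod\etab^\star$; by the scaling identity on $C^{\ttt}_g$ applied to $\tsb_g+\deltab^\star_g = \talpha_g(\s_g+\deltab^\star_g)$, we get $C^{\ttt}_g(\tsb_g+\deltab^\star_g) = \talpha_g C^t_g(\s_g+\deltab^\star_g)$. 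Finally, $\tsb\inprod\mub = \sum_g \talpha_g(\s_g+\deltab^\star_g)\inprod\mub_g - (\A\etab^\star)\inprod\mub$. Collecting terms groups the $\talpha_g$-weighted pieces into $\talpha_g D^t_g(\mub_g\|\s_g+\deltab^\star_g)$ and leaves the residual $(\A^\top\mub - \b)\inprod\etab^\star$, which is exactly \Eq{time-sensitive}.

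The main obstacle I anticipate is the argument that $\etab^\star$ is still a minimizer of the LCMM infimum at $(\tsb,\ttt)$; care is needed because the minimizer need not be unique and the per-block cost functions may be non-differentiable, so one must phrase the argument via subgradients and KKT conditions rather than just equating gradients. Everything else is routine bookkeeping once the scaling identities and the ``$\tsb_g+\deltab^\star_g = \talpha_g(\s_g+\deltab^\star_g)$'' relation are in hand.
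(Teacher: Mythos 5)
Your proposal is correct and follows essentially the same route as the paper: the scaling identities $R^t_g=\beta_g(t)R_g$, $D^t_g(\mub_g\|\q_g)=\beta_g(t)D_g(\mub_g\|\q_g/\beta_g(t))$, the relation $\tsb_g+\deltab^\star_g=\talpha_g(\s_g+\deltab^\star_g)$, the transfer of optimality of $\etab^\star$ via first-order conditions (the paper does exactly this through its LCMM optimality lemma, exhibiting a certificate $\mub\in\partial C^t(\s)$ with zero divergence, which handles non-differentiability), and the direct expansion yielding \Eq{time-sensitive}. The only soft spot is deriving \Price by identifying the LCMM price at $\tsb$ with the direct-sum price at $\tsb+\A\etab^\star$: without differentiability this holds only as an inclusion (the LCMM subdifferential is the subset of the direct-sum subdifferential cut down by $\mub\in\hull$ and complementary slackness), so, as the paper does, \Price should instead be read off from the divergence formula you already derived, namely that $\ttD(\cdot\|\tsb)$ and $D^t(\cdot\|\s)$ have the same zero set and hence $\partial\ttC(\tsb)=\partial C^t(\s)$.
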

%
The first term on the right-hand side of  \Eq{time-sensitive} is the sum of divergences in submarkets $g$,
each weighted by a coefficient $\talpha_g$ which is equal to one
at $\ttt=t$ and weakly decreases as $\ttt$ grows.
The divergences are
between $\mub_g$ and the state resulting from the arbitrager action
in the direct-sum market. The second term
is non-negative,
since $\mub\in\hull$, and represents expected arbitrager
gains beyond the guaranteed profit from the arbitrage in the direct-sum
market.
The only terms
that depend on time~$\ttt$ are the
multipliers $\talpha_g$. Since they are decreasing
over time,
we immediately obtain that
the utility for information, $\Value(\mub;\tsb) = \ttD(\mub\|\tsb)$,
is also decreasing, with the contributions
from individual submarkets
decreasing according to their schedules $\beta_g$.

\ignore{
Differentiating the result of \Thm{time-sensitive} with respect to
$\ttt$, we immediately obtain the following corollary showing how the
rate of decrease of the utility for information is determined by the rate
of decrease of the schedules $\beta_g$. \jenn{This corollary may be a
  good candidate to cut or move to the appendix if we need space.}
\begin{corollary}
\label{cor:decrease}
In the setup of \Thm{time-sensitive}, fix $t$ and consider
$\ttt$ as a variable (with $\tsb$ and $\ttD$ varying accordingly).
Then, in the absence of trades, the utility for a belief $\mub\in\hull$
is decreasing according to
\[
  \frac{\partial\ttD(\mub\|\tsb)}{\partial\ttt}
  =
  \sum_g
  \frac{\partial\beta_g(\ttt)}{\partial\ttt}
  \cdot
  \frac{D^t_g(\mub_g\|\s_g+\deltab^\star_g)}{\beta_g(t)}
\enspace.
\]
\end{corollary}

\ignore{
The fact that the utility information in a gradual decrease
LCMM is non-increasing can be used to show that
the loss suffered by the market-maker has a finite
bound independent of the trading activity
(see \App{wcl:gradual}).}
}

%

When only one of the schedules $\beta_g$ is
decreasing and the other schedules stay constant, we can show
that the excess utility and conditional prices
are preserved (conditioned on $\rhob_g$), and under certain conditions
also \DecVal holds.

For a submarket $g$, let $\X_g\coloneqq\Set{\rhob_g(\omega):\:\omega\in\Omega}$
be the set of realizations of $\rhob_g$.
Recall that $\hull(\event)$ is the convex hull of $\set{\rhob(\omega)}_{\omega\in\event}$.
We show that \DecVal holds if $C_g$
is differentiable and the submarket $g$ is ``tight'' as follows.
\begin{definition}
We say that a submarket $g$ is \emph{tight} if for all $\x\in\X_g$
the set $\set{\mub\in\hull:\:\mub_g=\x}$ coincides with
$\hull(\rhob_g=\x)$,
i.e., if all the beliefs $\mub$ with $\mub_g=\x$
can be realized by probability distributions over states $\omega$ with
$\rhob_g(\omega)=\x$. (In general, the former is always a superset
of the latter, hence the name ``tight'' when the equality holds.)
\end{definition}
While this condition is somewhat restrictive, it is easy to see that
all submarkets with binary securities, i.e., with
$\rhob_g(\omega)\in\set{0,1}^g$, are tight
(see \App{tight}).
%
%
\begin{theorem}
\label{thm:gradual:partial}
Assume the setup of \Thm{time-sensitive}. Let $g\in\G$
and assume that
$\beta_g(\ttt)<\beta_g(t)$ whereas
$\beta_{g'}(\ttt)=\beta_{g'}(t)$ for $g'\ne g$.
Then the replacement
of $C^t$ by $\ttC$ and $\s$ by $\tsb$ satisfies
\CondPrice and \CondVal for the random variable
$\rhob_g$. Furthermore, if $C_g$ is differentiable and
the submarket $g$ is tight, we also obtain \DecVal.
\end{theorem}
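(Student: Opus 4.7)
The plan is to reduce everything to the divergence formula in \Thm{time-sensitive}, applied twice. First, note that taking $\ttt=t$ in \Eq{time-sensitive} yields $\tsb=\s$ and $\talpha_{g'}=1$ for all $g'$, giving
\[
   D(\mub\|\s) = \sum_{g'\in\G} D^t_{g'}(\mub_{g'}\|\s_{g'}+\deltab^\star_{g'}) + (\A^\top\mub-\b)\cdot\etab^\star.
\]
Subtracting this from \Eq{time-sensitive} at the original~$\ttt$ and using the assumption that $\talpha_{g'}=1$ for all $g'\ne g$, the $\etab^\star$ term and all submarket terms for $g'\ne g$ cancel, leaving
\[
   D(\mub\|\s) - \ttD(\mub\|\tsb) = (1-\talpha_g)\,D^t_g(\mub_g\|\s_g+\deltab^\star_g)
\]
for every $\mub\in\hull$.

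For \CondVal and \CondPrice: if $\mub\in\hull(\rhob_g=x)$, then $\mub$ is a convex combination of vectors $\rhob(\omega)$ with $\rhob_g(\omega)=x$, so $\mub_g=x$. Thus the right-hand side above equals the constant $c^x \coloneqq (1-\talpha_g)\,D^t_g(x\|\s_g+\deltab^\star_g)$ on $\hull(\rhob_g=x)$. By \Prop{condval}, this yields \CondVal for the random variable $\rhob_g$, which in turn implies \CondPrice.

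For \DecVal under tightness: tightness gives $\hull(\rhob_g=x)=\set{\mub\in\hull:\:\mub_g=x}$, so by \Thm{D} and the constant shift just derived,
\[
   \tValue(\rhob_g=x;\tsb) = \min_{\mub\in\hull(\rhob_g=x)}\ttD(\mub\|\tsb) = \Value(\rhob_g=x;\s) - c^x,
\]
and $c^x\ge 0$ yields the weak inequality.

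The main obstacle is the strict inequality when $\Value(\rhob_g=x;\s)>0$, which I would establish by contrapositive. Suppose $c^x=0$; since $\talpha_g<1$, this forces $D^t_g(x\|\s_g+\deltab^\star_g)=0$, i.e., $x\in\partial C^t_g(\s_g+\deltab^\star_g)$. Differentiability of $C_g$ (hence of $C^t_g$) makes this subdifferential the singleton $\set{\nabla C^t_g(\s_g+\deltab^\star_g)}$. Now pick any $\mub^\star\in\p(\s)$; by conjugacy $\mub^\star\in\hull$ and $D(\mub^\star\|\s)=0$, so every non-negative term in the decomposition above vanishes. In particular $\mub^\star_g\in\partial C^t_g(\s_g+\deltab^\star_g)=\set{x}$, so $\mub^\star_g=x$, and tightness then places $\mub^\star\in\hull(\rhob_g=x)$. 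Hence $\Value(\rhob_g=x;\s)\le D(\mub^\star\|\s)=0$, contradicting the hypothesis.
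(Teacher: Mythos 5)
Your proposal is correct and follows essentially the same route as the paper's proof: the same identity $D^t(\mub\|\s)-\ttD(\mub\|\tsb)=(1-\talpha_g)\,D^t_g(\mub_g\|\s_g+\deltab^\star_g)$ (obtained in the paper by combining \Thm{lcmm}\ref{lcmm:D} with \Thm{time-sensitive}, which is what your $\ttt=t$ specialization amounts to), then \Prop{condval} for \CondPrice and \CondVal, and the same differentiability-plus-tightness argument for the strict part of \DecVal, which the paper states directly and you phrase as a contrapositive.
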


\jenn{In the long version we could mention that the utility is
  non-increasing even if the tightness condition doesn't hold, so
  nothing too crazy happens in that case.}




\ifnum\Spacehack=1
\let\section\oldsection
\let\subsection\oldsubsection
\fi

\newpage

\bibliographystyle{plainnat}
{\small
\bibliography{seq_mm}}

\newpage
\appendix
\section{CONVEX ANALYSIS}
\label{app:convex}

Here we briefly review concepts and results from convex analysis which we use throughout the paper.

\paragraph{Convex sets, polytopes, relative interior.}

Let $S\subseteq\reals^n$. We say that
$S$ is \emph{convex} if it contains all line segments with
endpoints in $S$. The \emph{convex hull} of $S$, denoted $\conv S$, is the smallest convex
set containing $S$. It
can be characterized as the set containing all ``convex combinations''
of points in $S$~\citep[][Theorem 2.3]{Rockafellar70}, where a \emph{convex combination}
of points $\u_1,\dotsc,\u_k$ is a point
$\u=\sum_{i=1}^k \lambda_i\u_i$
for any $\lambda_i\ge 0$ with $\sum_{i=1}^k\lambda_i=1$.

A set which is a convex hull of a finite set of points is called a \emph{polytope}.
We say that $S$ is \emph{polyhedral} if it is an
intersection of a finite set of half-spaces, i.e., if
$S=\set{\u\in\reals^n:\:\A\u\ge\b}$ for some matrix $\A\in\reals^{m\times n}$
and vector $\b\in\reals^m$. All polytopes are polyhedral~\citep[][Theorem 19.1]{Rockafellar70}.

An \emph{affine hull} of $S$ is the smallest affine space
containing~$S$. The topological interior of $S$ relative to its affine
hull is called the \emph{relative interior} of $S$ and denoted
$\relint S$.  To give a common example, if $S$ is a simplex in $n$
dimensions, i.e.,
$S = \set{\u \in \reals^n: u_i \geq 0,\,\sum_{i=1}^n u_i = 1}$,
then the interior of $S$ is empty, but
$\relint S = \set{\u \in \reals^n: {u_i>0},\,\sum_{i=1}^n u_i = 1}$.

\paragraph{Function properties, epigraph, closure, roof.}

Consider a function $f:\reals^n\to(-\infty,\infty]$. Its \emph{domain},
denoted $\dom f$, is the set of points $\u$ such that $f(\u)$ is finite.
The function $f$ is called \emph{proper} if its domain is non-empty.
The \emph{epigraph} of $f$, denoted $\epi f$, is the set
of points on and above the graph of $f$, i.e.,
\[
  \epi f \coloneqq \Set{ (\u,t)\in \reals^n\times\reals:\: t \geq f(\u) }
\enspace.
\]
The function $f$ is called \emph{closed} if its epigraph is a closed set.
This is equivalent to $f$ being lower semi-continuous~\citep[][Theorem 7.1]{Rockafellar70}.
The function $f$ is called \emph{convex} if its epigraph is
a convex set, or equivalently, if for all $\u, \u' \in \dom f$, for all
$\lambda \in (0,1)$,
\[
f(\lambda \u + (1-\lambda) \u') \leq \lambda
f(\u) + (1-\lambda) f(\u')
\enspace.
\]
The function $f$ is \emph{strictly convex} if the inequality above is strict
whenever $\u \neq \u'$.
Closed convex functions are not only lower semi-continuous,
but actually continuous relative to any polyhedral subset of their domain
(see Theorems~10.2 and 20.5 of \citet{Rockafellar70}).
\begin{proposition}
\label{prop:cont}
Let $f:\reals^n\to(-\infty,\infty]$ be a closed convex function and
$S$ any polyhedral subset of $\dom f$. Then $S$ is continuous relative to $S$.
\end{proposition}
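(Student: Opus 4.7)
The plan is to establish continuity of $f$ relative to $S$ by proving lower and upper semi-continuity separately. Lower semi-continuity is immediate from the hypothesis: closedness of $f$ means $\epi f$ is closed, which by Theorem~7.1 of \citet{Rockafellar70} is equivalent to $f$ being lower semi-continuous on $\reals^n$.

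For upper semi-continuity, I exploit the fact that continuity is a local property. I would fix $\u_0 \in S$ and shrink $S$ to $S \cap \set{\u : \|\u-\u_0\|_\infty \le 1}$, which remains polyhedral and is now bounded, hence a polytope $\conv\set{\v_1,\dotsc,\v_m}$ with all vertices in $\dom f$. I then proceed by induction on $\dim S$. The case $\dim S = 0$ is trivial. For the inductive step, if $\u_0 \in \relint S$ I apply Theorem~10.2 of \citet{Rockafellar70}, which states that a convex function is continuous on any relatively open convex subset of its effective domain. Otherwise let $F$ be the minimal face of $S$ containing $\u_0$, so $\u_0 \in \relint F$ and $\dim F < \dim S$. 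Given $\u_k \to \u_0$ in $S$, write $\u_k = \sum_j \lambda_{k,j} \v_j$ as a convex combination of vertices and let $\epsilon_k$ be the total weight on vertices outside $F$. For large $k$ one has $\epsilon_k < 1$, so I can decompose $\u_k = (1-\epsilon_k) \u_k' + \epsilon_k \v_k'$ with $\u_k' \in F$ and $\v_k' \in \conv\set{\v_j : \v_j \notin F}$. Convexity gives $f(\u_k) \le (1-\epsilon_k) f(\u_k') + \epsilon_k f(\v_k')$. Since $f(\v_k')$ is uniformly bounded (finitely many vertex values) and the inductive hypothesis on $F$ yields $f(\u_k') \to f(\u_0)$ once $\u_k' \to \u_0$, both of which follow from $\epsilon_k \to 0$, one obtains $\limsup_k f(\u_k) \le f(\u_0)$, as required.

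The main obstacle is proving $\epsilon_k \to 0$. My plan is to use that every face of a polytope is exposed: there exist $\a$ and $\gamma$ with $\a\cdot \u \le \gamma$ on $S$ and $F = \set{\u \in S : \a\cdot \u = \gamma}$; hence $\a\cdot \v_j \le \gamma - \delta$ for some $\delta > 0$ uniformly across $\v_j \notin F$. Applying $\a$ to the decomposition yields $\a\cdot \u_k - \gamma = \epsilon_k(\a\cdot \v_k' - \gamma) \le -\delta\, \epsilon_k$, and since $\a\cdot \u_k \to \a\cdot \u_0 = \gamma$ the left-hand side tends to zero, forcing $\epsilon_k \to 0$. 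Together with lower semi-continuity, this yields the proposition.
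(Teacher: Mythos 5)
Your argument is correct, but it is worth noting that the paper does not actually prove this proposition: it is stated in the convex-analysis appendix as a known fact, with the proof delegated to \citet{Rockafellar70} (Theorem 10.2, upper semi-continuity of a convex function relative to any locally simplicial subset of its domain, combined with Theorem 20.5, which shows polyhedral convex sets are locally simplicial; closedness supplies lower semi-continuity exactly as you say). What you have written is a self-contained replacement for that citation: after localizing to a polytope you induct on dimension, use that every face of a polytope is exposed to force the vertex weight $\epsilon_k$ outside the minimal face $F$ of $\u_0$ to vanish, and then combine the convexity bound $f(\u_k)\le(1-\epsilon_k)f(\u_k')+\epsilon_k f(\v_k')$ with the inductive hypothesis on $F$. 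This is essentially an elementary, hands-on version of Rockafellar's locally simplicial argument; the citation buys brevity, your route buys transparency and makes explicit why polyhedrality (rather than mere convexity) of $S$ is what rules out the upper semi-continuity failure at the relative boundary. Two cosmetic points: the statement you invoke for the $\relint S$ case (continuity of a convex function on relatively open convex subsets of its domain) is Theorem 10.1 of \citet{Rockafellar70}, not 10.2; and you use the decomposition $\u_k=(1-\epsilon_k)\u_k'+\epsilon_k\v_k'$, which requires $\epsilon_k<1$, to derive the inequality that later shows $\epsilon_k\to 0$ --- to avoid the slight circularity, apply the exposing functional directly to the convex combination $\sum_j\lambda_{k,j}\v_j$, which gives $\a\inprod\u_k\le\gamma-\delta\epsilon_k$ without assuming $\epsilon_k<1$, and only then pass to the decomposition for large $k$. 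Neither issue affects the validity of the proof.
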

Any convex function finite on all of $\reals^n$
is necessarily continuous~\citep[][Corollary 10.1.1]{Rockafellar70} and therefore
closed.
The \emph{closure} of $f$, denoted $\cl f$,
is the unique function whose epigraph is the topological
closure of $\epi f$. As defined in \Sec{one-step},
the \emph{convex roof} of $f$,
denoted $(\conv f)$,
is the unique function whose epigraph is the convex
hull of $\epi f$.

\paragraph{Subdifferential, conjugacy, duality.}

Consider a convex function $f:\reals^n\to\realsplusinf$.
A \emph{subgradient} of $f$ at a point $\u\in\dom f$ is a vector $\v\in\reals^n$ such that
\[
  f(\u') \geq f(\u) + \v\inprod(\u'-\u)
\]
for all $\u'$. The set of all subgradients of $f$
at $\u$ is called the \emph{subdifferential} of $f$ at $\u$ and
denoted $\partial f(\u)$.
If $f$ is differentiable at $\u$, then $\partial f(\u)$ is the singleton
equal to the gradient of $f$ at $\u$.

Let $f:\reals^n\to\realsplusinf$ be any proper function.
The \emph{(convex) conjugate} of $f$ is the function $f^*:\reals^n\to\realsplusinf$ defined by
\begin{equation}
\label{eq:conj}
  f^*(\v)\coloneqq\sup_{\u\in \reals^n}\Bracks{\v\inprod\u - f(\u)}
\enspace.
\end{equation}
The function $f^*$ is always closed and convex (because its epigraph is
an intersection of half-spaces).
We write $f^{**} = (f^*)^*$ to denote the \emph{biconjugate} of $f$.
The biconjugate is a closure of the convex roof of $f$~\citep[][Theorem E.1.3.5]{hiriart1996convex}.

\begin{proposition}
  \label{prop:appendix-double-conj}
  Let $f:\reals^n\to\realsplusinf$ be a proper convex function or a proper function bounded below by
  an affine function. Then $f^{**} = \cl(\conv f)$. Hence, if $f$ is
  a closed proper convex function, $f^{**} = f$.
\end{proposition}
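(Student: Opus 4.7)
The plan is to identify both $f^{**}$ and $\cl(\conv f)$ as the pointwise-largest closed convex function bounded above by $f$, from which the equality $f^{**} = \cl(\conv f)$ is immediate. The ``hence'' clause of the proposition then follows at once, since when $f$ is itself closed and proper convex, it is its own largest closed convex minorant.

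First I would handle the properness bookkeeping so that $\cl(\conv f)$ is a bona fide closed proper convex function (and in particular takes values in $\realsplusinf$). If $f$ is proper convex, then $f$ belongs to the class of convex minorants of $f$, so $\conv f = f$ is proper. If instead $f$ is proper and dominates some affine function $a$, then $a \leq \conv f \leq f$, so $\conv f$ is again proper. In either case $\cl(\conv f)$ is a proper closed convex function. Now for any closed convex $g \leq f$, the defining supremum in $\conv f$ yields $g \leq \conv f$; and taking closures preserves pointwise inequalities (containment of epigraphs reverses under taking topological closures), so $g = \cl g \leq \cl(\conv f)$. Together with $\cl(\conv f) \leq \conv f \leq f$, this shows $\cl(\conv f)$ is the largest closed convex function bounded above by $f$.

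Next I would invoke the Fenchel--Moreau theorem to put $f^{**}$ in the same role. The easy inequality $f^{**} \leq f$ is immediate: unfolding the double conjugate, $f^{**}(\u) = \sup_{\v}[\v \cdot \u - f^*(\v)]$, and for every $\v$ the defining inequality of $f^*$ gives $\v \cdot \u - f^*(\v) \leq f(\u)$. Moreover $f^{**}$ is closed and convex, its epigraph being an intersection of half-spaces, so by the preceding paragraph $f^{**} \leq \cl(\conv f)$. For the reverse direction, conjugation reverses inequalities, so applying it twice to $\cl(\conv f) \leq f$ yields $(\cl(\conv f))^{**} \leq f^{**}$; since $\cl(\conv f)$ is closed proper convex, the Fenchel--Moreau self-duality $h = h^{**}$ for such $h$ lets us rewrite the left-hand side as $\cl(\conv f)$, giving $\cl(\conv f) \leq f^{**}$ and completing the proof.

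The main obstacle is the self-duality fact $h = h^{**}$ for closed proper convex $h$ invoked in the last step; this is the non-trivial half of the Fenchel--Moreau theorem and rests on a separating-hyperplane argument in $\reals^{n+1}$. For any point $(\u_0, t_0)$ strictly below the graph of $h$, one separates it from the closed convex epigraph $\epi h$ by a non-vertical affine hyperplane; the slope of this hyperplane produces a $\v$ witnessing $h^{**}(\u_0) > t_0$. In the appendix this would either be cited as standard (consistent with the paper's style of referencing convex-analysis textbooks) or be spelled out via this separation argument; the remainder of the proof of the proposition is then bookkeeping on epigraphs and conjugates.
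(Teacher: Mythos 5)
Your argument is correct, but note that the paper does not actually prove this proposition: it sits in the convex-analysis review appendix and is attributed directly to \citet[Theorem E.1.3.5]{hiriart1996convex}, so there is no in-paper proof to match. What you wrote is essentially the standard textbook argument behind that citation: characterize $\cl(\conv f)$ as the largest closed convex minorant of $f$, observe that $f^{**}$ is a closed convex minorant (hence $f^{**}\le\cl(\conv f)$), and obtain the reverse inequality from order-reversal of conjugation together with Fenchel--Moreau self-duality applied to the closed proper convex function $\cl(\conv f)$, whose properness you correctly establish first. Two small points to tighten if this were written out in full: (i) before treating $f^{**}$ as an admissible member of the class $\mathcal{G}$ of convex functions into $\realsplusinf$ from Definition~\ref{def:roof}, you should check that $f^{**}$ never takes the value $-\infty$; this is precisely where the hypothesis enters, since either assumption furnishes an affine minorant of $f$ (for proper convex $f$ this existence is itself a separation fact), which makes $f^*$ proper and hence $f^{**}>-\infty$ everywhere. (ii) The monotonicity of the closure operation holds because $g\le h$ gives $\epi h\subseteq\epi g$ and topological closure \emph{preserves} containment (your phrase ``reverses'' is a slip), and the separation sketch of Fenchel--Moreau needs the usual care with vertical separating hyperplanes at points outside $\cl(\dom h)$, handled by mixing in a non-vertical affine minorant; citing the theorem as standard, as you propose, is exactly what the paper itself does.
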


The definition of the conjugate implies that
\begin{equation}
\label{eq:fenchel}
  f^*(\v)\ge \v\inprod\u-f(\u)
\end{equation}
for all $\u$ and $\v$, with the equality if and only if $\u$ is the maximizer
on the right-hand side of \Eq{conj}. If $f$ is convex, this can only
happen if $\v\in\partial f(\u)$. Similar reasoning can be applied to $f^{**}$,
yielding the following proposition (based on Theorem 23.5 of \citet{Rockafellar70}).
Instead of $f$ and $f^*$, we use the notation $C$ and $R$ to reflect the intended use
in the body of the paper. The gap between the left-hand side and
the right-hand side of \Eq{fenchel} is referred to as the \emph{mixed
Bregman divergence}.

\begin{proposition}
\label{prop:first:order}
Let $C$ be a closed proper convex function, $R$ its conjugate, and $D$
the associated mixed Bregman divergence
$D(\mub\|\q)\coloneqq R(\mub)+C(\q)-\mub\inprod\q$. Then $D(\mub\|\q)\ge 0$
for all $\mub$, $\q$ and the following statements are equivalent:
\squishlist
\item $D(\mub\|\q)=0$
\item $\q\in\partial R(\mub)$
\item $\mub\in\partial C(\q)$
\squishend
\end{proposition}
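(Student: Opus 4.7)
The plan is to derive the proposition directly from the Fenchel inequality \eqref{eq:fenchel} together with the fact that, since $C$ is closed proper convex, the biconjugate relation $C = R^*$ holds (by \Prop{appendix-double-conj} applied to $C^{**} = C$).

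First I would establish non-negativity. By definition of the conjugate, $R(\mub) = \sup_{\q'}[\mub\inprod\q' - C(\q')]$, so taking the particular choice $\q' = \q$ yields $R(\mub) \ge \mub\inprod\q - C(\q)$, i.e.\ $D(\mub\|\q) \ge 0$. This is exactly \eqref{eq:fenchel} and requires nothing beyond the definition of $R$.

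Next I would prove the three equivalences by showing each is equivalent to the first. Recall that $D(\mub\|\q) = 0$ says $R(\mub) = \mub\inprod\q - C(\q)$, meaning that $\q$ attains the supremum in the definition of $R(\mub) = \sup_{\q'}[\mub\inprod\q' - C(\q')]$. The first-order optimality condition for maximizing the concave function $\q' \mapsto \mub\inprod\q' - C(\q')$ is $\mub \in \partial C(\q)$, by the definition of subgradient (a subgradient $\v$ of $C$ at $\q$ is characterized by $C(\q') \ge C(\q) + \v\inprod(\q'-\q)$, which upon rearrangement is exactly the statement that $\q$ maximizes $\v\inprod\q' - C(\q')$). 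This gives the equivalence of the first and third bullets.

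For the equivalence of the first and second bullets I would invoke the symmetry provided by $C = R^*$: since $C$ is closed proper convex, \Prop{appendix-double-conj} gives $C = C^{**} = R^*$, so $C(\q) = \sup_{\mub'}[\q\inprod\mub' - R(\mub')]$. Then $D(\mub\|\q) = 0$ can be rewritten as $C(\q) = \q\inprod\mub - R(\mub)$, i.e.\ $\mub$ attains this supremum, and by the same first-order/subgradient argument as before (now applied to $R$ at $\mub$) this is equivalent to $\q \in \partial R(\mub)$. No step is really an obstacle here; the only subtlety is remembering that one needs $C$ closed proper convex to ensure $C = R^*$ so that the argument works symmetrically in both directions, which is exactly the hypothesis of the proposition.
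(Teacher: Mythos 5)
Your proposal is correct and follows essentially the same route the paper sketches: the Fenchel inequality $R(\mub)\ge\mub\inprod\q-C(\q)$ gives non-negativity, equality holds iff $\q$ attains the supremum defining $R(\mub)$, which by the subgradient inequality is equivalent to $\mub\in\partial C(\q)$, and the symmetric equivalence with $\q\in\partial R(\mub)$ follows by applying the same argument to $C=R^{*}$ via the biconjugate identity for closed proper convex functions. You have simply filled in the details that the paper delegates to Theorem~23.5 of Rockafellar, including the one genuine subtlety (needing $C$ closed proper convex so that $C=C^{**}$).
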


A function
is called \emph{polyhedral} if its epigraph is polyhedral.
The following theorem relates a convex minimization
problem with a concave maximization problem via convex
conjugates. It is a version of \emph{Fenchel's duality}
and a subcase of Corollary~31.2.1 of~\citet{Rockafellar70}.

\begin{theorem}[Fenchel's duality]
\label{thm:fenchel:duality}
Let $f:\reals^K\to\realsplusinf$ and $g:\reals^M\to\realsplusinf$ be closed convex functions and $\A\in\reals^{K\times M}$. Further assume
that $g$ is polyhedral
and there exists $\mub\in\relint(\dom f^*)$ such that
$\A^\top\mub\in\dom g^*$. Then
\[
   \inf_{\etab\in\reals^M}
   \Bracks{f(\A\etab) + g(\etab)}
   =
   \sup_{\mub\in\reals^K}
   \Bracks{-f^*(\mub) - g^*(- \A^\top\mub)}
\]
and the infimum is attained.
\end{theorem}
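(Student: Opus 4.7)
The plan is to derive this identity as a direct consequence of Fenchel-Rockafellar duality, specifically a version of Corollary 31.2.1 of Rockafellar, using the polyhedrality of $g$ to relax the usual relative-interior condition in the Slater-type hypothesis on the dual side.

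First, I would identify the dual problem by introducing the auxiliary constraint $\z=\A\etab$ in the primal and forming the Lagrangian with multiplier $\mub\in\reals^K$. Minimizing separately over $\etab$ and $\z$ converts the two additive terms into conjugates and yields exactly $\sup_\mub\Bracks{-f^*(\mub)-g^*(-\A^\top\mub)}$. Weak duality ($\inf\ge\sup$) is then immediate termwise from the Fenchel inequality applied to $f$ and $g$ separately.

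Second, I would invoke a standard Fenchel-Rockafellar strong duality theorem to promote weak duality to equality and to simultaneously secure attainment of the primal infimum. The sufficient condition is a Slater-type hypothesis on the dual side: there is a point in $\relint(\dom f^*)$ whose image under $\A^\top$ lies in $\dom g^*$. Classically this image is required to lie in $\relint(\dom g^*)$, but since $g$ is polyhedral and polyhedrality is preserved under convex conjugation, $g^*$ is polyhedral; this is precisely what enables the polyhedral relaxation that replaces $\relint(\dom g^*)$ by $\dom g^*$. The hypothesis of the theorem is therefore exactly this relaxed condition, so the corollary applies and delivers both the claimed equality and the attainment of the infimum.

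The main obstacle is essentially bookkeeping: Rockafellar's book phrases Fenchel duality using concave conjugates and with sign conventions that differ slightly from the all-convex formulation here, so one must carefully translate between his framework and the present statement and verify that the polyhedral relaxation is applied on the $g$-side rather than the $f$-side. Beyond this translation, no new conceptual step is needed — strong duality and primal attainment both follow from a single invocation of the cited corollary.
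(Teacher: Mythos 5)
Your proposal is correct and follows essentially the same route as the paper, which states this theorem without proof as a subcase of Corollary~31.2.1 of Rockafellar and relies on exactly the polyhedral relaxation of the relative-interior condition on the $g$-side that you describe. The only content beyond the citation is the sign/concave-conjugate bookkeeping and the weak-duality derivation of the dual, both of which you handle correctly.
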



\section{PROOFS FROM \SEC{bregman}}
\label{app:bregman}

\subsection{PROOF OF \THM{D}}
\label{app:D}

First we prove \Eq{D:mu} using the definition of $\Value(\mub;\q)$ and
the conjugacy of $R$ and $C$:
\begin{align*}
&
 \Value(\mub;\q)
 =
 \textstyle
  \sup_{\r\in\reals^K}\bigBracks{\mub\inprod\r-C(\q+\r)+C(\q)}
\\
&\quad{}=
\textstyle
\sup_{\r\in\reals^K}\bigBracks{\mub\inprod(\q+\r)-C(\q+\r) - \mub\inprod\q + C(\q)}
\\
&\quad{}
 =R(\mub)-\mub\inprod\q+C(\q)
 =D(\mub\|\q)
\enspace.
\end{align*}
Next, we prove \Eq{D:E}:
\begin{align}
&
\Value(\event;\q)
=
  \adjustlimits\sup_{\r \in \reals^K} \min_{\omega\in\event}
  \BigBracks{
     \r\cdot\rhob(\omega) + C(\q) - C(\q+\r)
  }
\nonumber \\
&\quad{}
=\!\!
  \adjustlimits \sup_{\q' \in \reals^K} \min_{\mub'\in\hull(\event) }
  \BigBracks{
     (\q'-\q)\cdot\mub' + C(\q) - C(\q')
  }\!\!
\label{eq:E:1} \\
&\quad{}
=\!\!
  \adjustlimits \min_{\mub'\in\hull(\event)} \sup_{\q' \in \reals^K}
  \BigBracks{
     (\q'-\q)\cdot\mub' + C(\q) - C(\q')
  }\!\!
\label{eq:E:2} \\
&\quad{}
=\!\!
  \min_{\mub'\in\hull(\event)} \BigBracks{
     R(\mub') -\q\cdot\mub' + C(\q)
  }
\label{eq:E:3}
\\
&\quad{}
 = \min_{\mub'\in\hull(\event)} D(\mub' \| \q)
\enspace,
\notag
\end{align}
where the equalities are justified as follows.
\Eq{E:1} follows by relaxing, without loss of generality,
the optimization of a linear function over $\set{\rhob(\omega)}_{\omega\in\event}$
to the optimization over the convex hull, and substituting $\q'=\q+\r$.
\Eq{E:2} follows from Sion's minimax theorem, and finally \Eq{E:3}
follows from the definition of convex conjugacy.

The final statement to prove, \Eq{p:E}, follows immediately from the definition
of $\p(\event;\q)$ and Eqs.~\eqref{eq:D:mu} and~\eqref{eq:D:E}.
%

\subsection{PROOF OF \PROP{condval}}
\label{app:condval}

It suffices to show that the statement
of the proposition holds
for a specific $x\in\X$ with \CondVal and
\CondPrice also restricted to a specific $x$.
The proposition will then
follow by universal quantification across all $x\in\X$.
Thus in the remainder we consider a specific $x\in\X$.

Let
$\hmub^x\in\p(X=x;\s)$ and
$\tmub^x\in\tpb(X=x;\tsb)$.
The definition of the excess utility for a belief and \Thm{D} then
imply that \CondVal (restricted to $x$) is satisfied if and only if
for all $\mub\in\hull^x$
\begin{equation}
\label{eq:condval:1}
  D(\mub\|\s)-D(\hmub^x\|\s) = \tD(\mub\|\tsb)-\tD(\tmub^x\|\tsb)
\enspace.
\end{equation}

First assume that \CondVal holds and therefore \Eq{condval:1} holds
for $x$. Then the desired condition follows by setting
$c^x=D(\hmub^x\|\s)-\tD(\tmub^x\|\tsb)$.

Conversely, assume that $D(\mub\|\s) - \tD(\mub\|\tsb) = c^x$ holds for
all $\mub \in \hull^x$. Since
$D(\mub\|\s) = \tD(\mub\|\tsb) + c^x$, we obtain
\begin{align*}
  \p(X=x;\s)
  &=
  \argmin_{\mub\in\hull^x} D(\mub\|\s)
\\
  &=
  \argmin_{\mub\in\hull^x} \tD(\mub\|\tsb)
  =
  \tpb(X=x;\tsb)
\enspace,
\end{align*}
i.e., \CondPrice (restricted to $x$) holds.  This and the argument
above show that \CondVal implies \CondPrice.

To finish the proof
we have to show that the assumption that $D(\mub\|\s) - \tD(\mub\|\tsb) = c^x$
also implies \CondVal. Let $\hmub^x\in\p(X=x;\s)=\tpb(X=x;\tsb)$.
Then we have
\[
  D(\mub\|\s) - \tD(\mub\|\tsb)
  = c^x
  = D(\hmub^x\|\s) - \tD(\hmub^x\|\tsb)
\]
and rearranging yields \Eq{condval:1}, with $\hmub^x$ substituted
for $\tmub^x$. However, $\hmub^x$ is a valid choice of $\tmub^x$ since
$\tmub^x$ was chosen arbitrarily from $\tpb(X=x;\tsb)  = \p(X=x;\s)$,
so \Eq{condval:1} and therefore \CondVal hold.
%

\section{PROOFS FROM \SEC{one-step}}
\label{app:one-step}

\subsection{PROOF OF \LEM{sequalsts}}
\label{app:sequalsts}

Theorem~\ref{thm:D} shows that all the desiderata, except for \Price, are derived from properties of $\tD(\mub\|\tsb)$ as a function of $\mub$.
To see that \Price can also be derived this way, note that by \Prop{first:order} we have $\tpb(\tsb)=\partial\tC(\tsb)=\set{\mub:\:\tD(\mub\|\tsb)=0}$. Thus,
it suffices to analyze $\tD$.
With $\tC'$ and $\tsb'$ as in the lemma,
we have $\tR'(\mub)=\tR(\mub)-(\tsb-\s)\inprod\mub$ and
\[
  \tD'(\mub\|\s)
  =\tR(\mub)-(\tsb-\s)\inprod\mub + \tC(\tsb) - \mub\inprod\s
  =\tD(\mub\|\tsb).
\]
Hence the lemma holds.

\subsection{PROOF OF \LEM{one-step-condval}}
\label{app:one-step-condval}

\Prop{condval} shows that \CondVal and \CondPrice are together satisfied if and only if
there exist constants $c^x$ such that for all $x\in\X$ and $\mub\in\hull^x$,
\[
  c^x = D(\mub\|\s) - \tD(\mub\|\s) = C(\s) + R(\mub) - \tC(\s) - \tR(\mub)
\enspace.
\]
If this statement holds, then for any $x$, setting $b^x=c^x-C(\s)+\tC(\s)$ gives us
$\tR(\mub) =  R(\mub) - b^x$ for all $\mub\in\hull^x$.  Conversely,
if $\tR(\mub) =  R(\mub) - b^x$ for all $\mub\in\hull^x$, setting
$c^x = b^x+C(\s)-\tC(\s)$ gives the equation above.

\subsection{PROOF OF \LEM{roof-optimality}}
\label{app:roof-optimality}
  From the definition of convex roof, we have
  \begin{equation}
    \label{eq:seq_mm-3}
    \tR'(\mub) = \sup\Set{
    g(\mub):\:
    g \in \mathcal{G},\, g \leq R^\b
    }
  \enspace.
  \end{equation}
  Since we have $\tR\le R^\b$, the function $\tR$ is a valid choice for $g$ in \Eq{seq_mm-3}. This gives us $\tR\le\tR'\le R^\b$ and thus $\tR'$ must be
  consistent with $R^\b$, proving the first part.

  To prove the second part, we show a stronger statement:
  \begin{equation}
    \label{eq:seq_mm-4}
    \tD'(\mub \| \q) \leq \tD(\mub\|\q)
    \text{ for all $\mub\in\hull^\star$, $\q\in\reals^K$,}
  \end{equation}
  where $\tD'$ is the mixed Bregman divergence with respect to $\tR'$. The second part
  follows from \Eq{seq_mm-4} by setting $\q=\s$ and $\mub=\hmub^x$ (for \ZeroVal), or choosing arbitrary
  $\mub\in\hull^\star$ (for \DecVal).
  It remains to prove \Eq{seq_mm-4}.

  Since
  $\tR' \geq \tR$, we have for their conjugates $\tC'\le\tC$.
  Also, for any $\mub\in\hull^\star$ we have $\tR'(\mub) = R^\b(\mub) = \tR(\mub)$, and thus
  \begin{align}
    \tD'(\mub\|\q)
    &= \tR'(\mub) + \tC'(\q) - \q\cdot\mub
\notag
\\  &\leq \tR(\mub) + \tC(\q) - \q\cdot\mub = \tD(\mub\|\q)
\enspace.
\tag*{\qed}
  \end{align}

\subsection{PROOF OF THEOREM~\ref{thm:one-step-0-profits}}
\label{app:one-step-zero-profits}

First, assume that \CondPrice, \CondVal, and \ZeroVal are
simultaneously satisfiable using \Prot{switch}.  By
Lemma~\ref{lem:sequalsts}, this implies they are satisfiable with the
identity function for $\AdvanceState$ and some function $\AdvanceCost$.

By Lemmas~\ref{lem:one-step-condval} and~\ref{lem:roof-optimality}, it
must be the case that for any state $\s$, there exists some
$\b\in\reals^\X$, such that the conditions would remain satisfied if
$\AdvanceCost(\s)$ instead output the conjugate $\tC$ of
$\tR \coloneqq (\conv R^\b)$. \jenn{Is this obvious or should we elaborate?}  It remains to
show that the three conditions would remain satisfied if
$\AdvanceCost(\s)$ output the conjugate of $(\conv R^\hbb)$.

For all $x\in\X$, we can simplify
$\tValue(X=x;\s)$
using \Eq{D:E} and \Eq{p:E} as follows:
  \begin{align}
\tValue(X=x;\s)
  &
  = \tD(\hmub^x\|\s)
    \notag
\\&
  = \tC(\s) + \tR(\hmub^x) - \s\cdot\hmub^x
    \notag
\\&
  = \tC(\s) + R(\hmub^x) - b^x - \s\cdot\hmub^x
    \label{eq:one-step-1}
\\&
  = \tC(\s) - C(\s) + \hb^x - b^x
    \label{eq:one-step-tval-b-0}
  \end{align}
for some $\hmub^x \in \p(\Omega^x;\s)$.
  Here \Eq{one-step-1} follows by consistency of $\tR$ with $R^\b$
  and \Eq{one-step-tval-b-0} follows by the definition of $\hb^x$ in \Eq{one-step-b-0}.
  Since $\ZeroVal$ is satisfied, $\tValue(X=x;\s) = 0$ for all $x\in\X$,
  so $\tD(\hmub^x\|\s) = 0 = \tD(\hmub^{x'}\|\s)$ for all
  $x,x'\in\X$.
  \Eq{one-step-tval-b-0} then yields
  \begin{equation*}
    \label{eq:one-step-2}
    \tC(\s) - C(\s) + \hb^x - b^x = \tC(\s) - C(\s) + \hb^{x'} - b^{x'}
  \enspace.
  \end{equation*}
  Canceling the constant terms $\tC(\s)$ and $C(\s)$, we obtain that
  $\b = \hbb + c\1$ for some $c\in\reals$.
  Since $R^\b = R^{\hbb} - c$, and $\tR=(\conv R^\b)$ is consistent with $R^\b$, we conclude that $(\conv R^\hbb) = (\conv R^\b) - c$
  is consistent with $R^{\hbb}$, and therefore by
  Lemma~\ref{lem:one-step-condval}, \CondVal and \CondPrice remain
  satisfied switching to $(\conv R^{\hbb})$.
  Additionally, since $(\conv R^\b)$ and $(\conv R^{\hbb})$ differ only by
  a vertical shift, the divergences associated with both are
  identical, and \ZeroVal is also satisfied.

  For the converse, assume $\tR \coloneqq (\conv R^\hbb)$ is consistent with $R^{\hbb}$.  By
  \Lem{one-step-condval}, \CondVal and \CondPrice are satisfied,
  and it remains only to show $\tValue(X=x;\s)=0$.  This follows from
  \Eq{one-step-tval-b-0}, since now $\b=\hbb$, and by
  \Prop{one-step-roof-dual}, $\tC(\s)=C(\s)$.
  (Note that \Prop{one-step-roof-dual} is stated after
   \Thm{one-step-0-profits} in the main text,
   but its proof, given in the next section, does not rely
   on \Thm{one-step-0-profits}.)

\subsection{PROOF OF \PROP{one-step-roof-dual}}
\label{app:one-step-roof-dual}

  We first show that $(\conv R^\hbb)$ is closed.
  Since $R$ is the conjugate of $C$, it must be closed (see \App{convex}). The domain of $R$
  is $\hull$ which is polyhedral, and therefore $R$ is in
  fact continuous
  on $\hull$~(by \Prop{cont}). Since $\hull$ is compact, $R$ attains a maximum on $\hull$, and in particular
  is bounded above on $\hull$. Thus, also $R^\hbb$ is bounded above on $\hull^\star$.
  Let $u\in\reals$ be the corresponding upper bound, i.e., $R^\hbb(\mub) \leq u$ for all $\mub\in\hull^\star$.
  We may write
  $\epi\tR = \conv(\epi R^\hbb)$ by definition of the roof construction.  Now we can chop off $\epi R^\hbb$ at $u$ and consider the remainder:
\begin{align}
\notag
   S
   &=
   \set{(\mub,t) : \mub \in \hull^\star,\,R^\hbb(\mub) \leq t \leq u}
\\
\label{eq:compact:x}
   &=
   \bigcup_{x\in\X}
   \set{(\mub,t) : \mub \in \hull^x,\,R^\hbb(\x) \leq t \leq u}
\enspace.
\end{align}
  The set $S$ is compact, because it is a finite union of compact sets in \Eq{compact:x}.
  Each individual term in \Eq{compact:x} is indeed compact, because it is bounded
  (above by $u$ and below by the boundedness of $R$ on $\hull^x$) and closed (by closedness of $R$ and closedness of $\hull^x$). Since $S$ is compact, $\conv S$ is closed. Therefore,
\[
  \epi\tR=\conv(\epi R^\hbb) = (\conv S) \cup \bigParens{\hull \times [u,\infty)}
\]
  is also a closed set, and thus
  $\tR$ is a closed convex function.

Recall from standard convex analysis (see Appendix~\ref{app:convex}) that for any function $f$, we have $f^{**} = \mathrm{cl} (\conv f)$; the biconjugate of $f$ is the closed convex roof of $f$.
As we have shown, $(\conv R^\hbb)=\cl(\conv R^\hbb)=(R^{\hbb})^{**}$,
so $\tR=(R^{\hbb})^{**}$, and in particular, $\tC = \tR^* = (R^{\hbb})^{***} =
(R^{\hbb})^{*}$. Now, calculate
  \begin{align*}
    \tC(\q)
    &= \sup_{\mub\in\hull^\star} \BigBracks{\q\cdot\mub - R^{\hbb}(\mub)}\\
    &= \max_{x\in\X} \sup_{\mub\in\hull^x} \BigBracks{\q\cdot\mub - R(\mub) + \hb^x}\\
    &= \max_{x\in\X} \BigBracks{ \hb^x + \sup_{\mub\in\hull^x} \bigBracks{\q\cdot\mub - R(\mub)} }\\
    &= \max_{x\in\X} \BigBracks{ \hb^x + C^x(\q) }
  \enspace.
  \end{align*}
  Finally, observe that by definition of $\hb^x$ and \Thm{D},
\[
  \hb^x
  = C(\s) - \sup_{\mub\in\hull^x}\bigBracks{\mub\cdot\s - R(\mub)} = C(\s)-C^x(\s)
\enspace.
\]


\section{PROOFS FROM \SEC{gradual}}
\label{app:gradual}

\subsection{PROPERTIES OF LCMMS}
\label{app:lcmm}

The following properties of LCMM are used in the sequel.

\begin{theorem}
\label{thm:lcmm}
Let $C$ be a linearly constrained market maker with
\begin{equation}
\label{eq:C:lcmm}
\textstyle
  C(\q) = \inf_{\etab\in\reals^M_+}\bigBracks{\Cx(\q+\A\etab)-\b\inprod\etab}
\enspace.
\end{equation}
It
has the following properties:
\begin{enumerate}[\upshape\bfseries(a)]
\item\label{lcmm:R}
The conjugate $R$ is a restriction of $\Rx$ to $\hull$:
\[
   R(\mub)=\Rx(\mub) + \ones\Bracks{\mub\in\hull}
\enspace.
\]
\item\label{lcmm:eta:1}
For every $\q$, there exists a minimizer $\etab^\star$ of \Eq{C:lcmm}.
\item\label{lcmm:D}
Let $\etab^\star$ be a minimizer of \Eq{C:lcmm} for a specific $\q$ and
let $\deltab^\star=\A\etab^\star$. The Bregman divergence from $\q$ is then
\[
\thinskips
\textstyle
   D(\mub\|\q)\;\;\;\;=\;\;\;\;\Dx(\mub\|\q+\deltab^\star)
               + (\A^\top\mub-\b)\cdot\etab^\star
               + \ones\Bracks{\mub\in\hull}
\;\;\;\;.
\]
\item\label{lcmm:eta:2}
Let $\etab\ge\0$
and $\deltab=\A\etab$. Then $\etab$ is a minimizer of \Eq{C:lcmm}
for a specific $\q$
if and only if there exists some $\mub\in\hull$ such that
\[
\textstyle
   \Dx(\mub\|\q+\deltab)
   + (\A^\top\mub-\b)\cdot\etab
   = 0
\enspace.
\]
\end{enumerate}
\end{theorem}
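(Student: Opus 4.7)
The plan is to derive parts (a) and (b) together from a single application of Fenchel's duality (\Thm{fenchel:duality}) to the infimum defining $C(\q)$, then obtain (c) by direct substitution into the definition of $D$, and finally combine (c) with \Prop{first:order} to prove both directions of (d).

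For (a) and (b), I would fix $\q$ and rewrite the infimum in \Eq{C:lcmm} as $\inf_\etab[f(\A\etab)+g(\etab)]$, where $f(\z)\coloneqq\Cx(\q+\z)$ is a translate of the closed convex $\Cx$ and $g(\etab)\coloneqq-\b\inprod\etab+\ones[\etab\ge\0]$ is polyhedral. A short calculation gives $f^*(\mub)=\Rx(\mub)-\mub\inprod\q$ and $g^*(-\A^\top\mub)=\ones[\A^\top\mub\ge\b]=\ones[\mub\in\hull]$. Assuming the relative-interior qualification (which should follow from the standing assumption that each submarket cost function is arbitrage-free and bounded-loss, so that $\hull$ intersects $\relint(\dom\Rx)$ non-trivially), \Thm{fenchel:duality} yields $C(\q)=\sup_\mub[\mub\inprod\q-\Rx(\mub)-\ones[\mub\in\hull]]$ and asserts that the infimum is attained, which gives (b). The right-hand side exhibits $C$ as the conjugate of the closed proper convex function $R'(\mub)\coloneqq\Rx(\mub)+\ones[\mub\in\hull]$, so by \Prop{appendix-double-conj} we conclude $R=C^*=(R')^{**}=R'$, which is (a).

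Part (c) is then direct computation. Given a minimizer $\etab^\star$ from (b) and letting $\deltab^\star=\A\etab^\star$, I would substitute $R(\mub)=\Rx(\mub)+\ones[\mub\in\hull]$ and $C(\q)=\Cx(\q+\deltab^\star)-\b\inprod\etab^\star$ into $D(\mub\|\q)=R(\mub)+C(\q)-\q\inprod\mub$. Adding and subtracting $\deltab^\star\inprod\mub$ regroups the terms into $\Dx(\mub\|\q+\deltab^\star)=\Rx(\mub)+\Cx(\q+\deltab^\star)-(\q+\deltab^\star)\inprod\mub$ plus the leftover $\deltab^\star\inprod\mub-\b\inprod\etab^\star=(\A^\top\mub-\b)\inprod\etab^\star$, matching the claimed formula.

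For (d), the forward direction uses that $\partial C(\q)\subseteq\dom R=\hull$ is non-empty, so \Prop{first:order} supplies some $\mub\in\hull$ with $D(\mub\|\q)=0$; substituting into (c) (where the indicator vanishes) gives the claimed equation. For the converse, given $\etab\ge\0$, $\deltab=\A\etab$, and $\mub\in\hull$ satisfying $\Dx(\mub\|\q+\deltab)+(\A^\top\mub-\b)\inprod\etab=0$, I would establish that $\etab$ is a minimizer by a sandwich: for any $\etab'\ge\0$, the conjugacy bound for $\Cx,\Rx$ yields $\Cx(\q+\A\etab')-\b\inprod\etab'\ge\mub\inprod\q-\Rx(\mub)+(\A^\top\mub-\b)\inprod\etab'$, and the last term is non-negative since $\mub\in\hull$ means $\A^\top\mub\ge\b$; meanwhile the two vanishing assumptions turn this bound into equality at the stated $\etab$, forcing it to attain the infimum. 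The main obstacle I anticipate is checking the relative-interior qualification needed to invoke \Thm{fenchel:duality}; if it fails in edge cases, one would fall back to a direct attainment argument via a recession-cone analysis of the level sets of $\etab\mapsto\Cx(\q+\A\etab)-\b\inprod\etab$.
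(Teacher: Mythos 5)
Your outline follows essentially the same route as the paper's proof: parts (a) and (b) via \Thm{fenchel:duality} applied to $f(\u)=\Cx(\q+\u)$ and the polyhedral $g(\etab)=\ones[\etab\ge\0]-\b\inprod\etab$, part (c) by substituting (a) and the attained infimum into $D(\mub\|\q)=R(\mub)+C(\q)-\mub\inprod\q$ and regrouping with $\pm\,\deltab^\star\inprod\mub$, and part (d) by taking $\mub\in\partial C(\q)\subseteq\hull$ with $D(\mub\|\q)=0$ for the forward direction and a Fenchel--Young sandwich for the converse (the paper phrases the converse as: the hypothesis forces $0\ge D(\mub\|\q)\ge 0$ through the definition of $C$ as an infimum, which is the same inequality you use, just compared against $C(\q)$ rather than against every $\etab'$). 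Your explicit appeal to \Prop{appendix-double-conj} to pass from ``$C$ is the conjugate of $\Rx+\ones[\cdot\in\hull]$'' to ``$R=\Rx+\ones[\cdot\in\hull]$'' is a correct and slightly more careful rendering of a step the paper states tersely.

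The one place where your proposal asserts rather than proves is precisely where the paper does its only non-mechanical work: the constraint qualification $\relint(\dom\Rx)\cap\hull\ne\emptyset$. This does not follow ``immediately'' from arbitrage-freeness of the $C_g$; the issue is that $\relint(\dom\Rx)=\prod_{g\in\G}\relint\hull_g$ (with $\hull_g$ the projection of $\hull$ onto block $g$), while $\hull$ is typically a lower-dimensional subset of the product, so one must exhibit a point of $\hull$ whose every block lies in the corresponding relative interior. The paper does this with an averaging construction: for each $g$ pick $\mub^{(g)}\in\hull$ whose $g$-block lies in $\relint\hull_g$ (possible because $\hull_g$ is the projection of $\hull$), and set $\mub^\star=\frac{1}{|\G|}\sum_{g\in\G}\mub^{(g)}$; then $\mub^\star\in\hull$ by convexity, and for each $g$ the block $\mub^\star_g$ is a convex combination placing positive weight on a point of $\relint\hull_g$ with the remaining points in $\hull_g$, hence lies in $\relint\hull_g$. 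With this argument in place the qualification always holds under the standing assumptions, so your anticipated fallback via recession cones is unnecessary; without it, your proof of (a) and (b) is incomplete as written.
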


Part
(\ref{lcmm:R}) shows that while the definition of $C$ in \Eq{C:lcmm}
is slightly involved, the conjugate $R$ has a natural meaning as a
restriction of the direct-sum market to the price space $\hull$. Part
(\ref{lcmm:eta:1}) shows that we can take the minimum rather than the
infimum in the definition of $C$, i.e., there is an optimal arbitrage
bundle. Part (\ref{lcmm:D}) decomposes the Bregman divergence (and
thus utility for information) into three terms. The last term forces
$\mub\in\hull$. The first term is the (direct-sum) divergence between
$\mub$ and the state resulting from the arbitrager action in the
direct-sum market. The second term is non-negative for $\mub\in\hull$,
and represents expected arbitrager gains beyond the guaranteed profit
from the arbitrage. Part (\ref{lcmm:eta:2}) spells out
first-order optimality conditions for an optimal arbitrage bundle
$\etab$.

\begin{proof} We prove the theorem in parts.

\textbf{Parts (\ref{lcmm:R}) and (\ref{lcmm:eta:1})\quad}
We use a version of Fenchel's duality from \Thm{fenchel:duality}.
Specifically, consider a fixed $\q\in\reals^K$ and
let $f$ and $g$ be defined by
\[
  f(\u)=\Cx(\q+\u)
\enspace,
\quad
  g(\etab)=\ones[\etab\ge\0]-\b\inprod\etab
\]
and hence their conjugates are
\[
  f^*(\mub)=\Rx(\mub)-\q\inprod\mub
\enspace,
\quad
  g^*(\v)=\ones[\v+\b\le\0]
\enspace.
\]
Assuming that the conditions of \Thm{fenchel:duality} are
satisfied for $f$ and $g$, and plugging in the above
definitions, we obtain
\begin{align*}
 C(\q)
 &=\inf_{\etab\in\reals^M}\BigBracks{\Cx(\q+\A\etab)-\b\inprod\etab+\ones[\etab\ge\0]}
\\
 &=\sup_{\mub\in\reals^K}\BigBracks{-\Rx(\mub)+\q\inprod\mub-\ones[\A^\top\mub-\b\ge\0]}
\\
 &=\sup_{\mub\in\reals^K}\BigBracks{\q\inprod\mub-\BigParens{\Rx(\mub)+\ones[\A^\top\mub\ge\b]}}
\enspace,
\end{align*}
showing that
\[
   \Rx(\mub)+\ones[\A^\top\mub\ge\b]
\]
is the conjugate of $C$ and the infimum in $\etab$ is attained. To finish
the proof we need to verify that the conditions of \Thm{fenchel:duality}
hold.

Note that $f$ and $g$ are closed and convex and $g$ is polyhedral. Therefore
it remains to show that there exists $\mub\in\relint(\dom f^*)$ such that
$\A^\top\mub\in\dom g^*$. Since $\dom f^*=\dom\Rx$ and
$\A^\top\mub\in\dom g^*$ if and only if $\mub\in\hull$,
it suffices to show that $\relint(\dom\Rx)\cap\hull\ne\emptyset$.

Let $\hull_g\coloneqq\set{\mub_g:\:\mub\in\hull}$
and $\hullx\coloneqq\prod_{g\in\G}\hull_g$.
By assumption, costs $C_g$ are arbitrage-free, i.e.,
$\dom R_g=\hull_g$.
For each $g$, pick $\tmub_g\in\relint\hull_g$. Since $\hull_g$ is the projection
of $\hull$ on the coordinate block $g$, there must exist $\mub^{(g)}\in\hull$
such that $\mub^{(g)}_g=\tmub_g$. Now, let
\[
   \mub^\star=\frac{1}{|\G|}\sum_{g\in\G}\mub^{(g)}
\enspace.
\]
Note that for $g'\ne g$, we have
$\mub^{(g')}_g\in\hull_g$, whereas $\mub^{(g)}_g\in\relint\hull_g$,
so $\mub^\star_g\in\relint\hull_g$ and hence
$\mub^\star\in\relint\hullx=\relint(\dom\Rx)$. At the same time
$\mub^\star\in\hull$, showing that $\relint(\dom\Rx)\cap\hull\ne\emptyset$.

\textbf{Part (\ref{lcmm:D})\quad}
Fix $\q$.
Let $\etab^\star$ be a minimizer of \Eq{C:lcmm} and
let $\deltab^\star=\A\etab^\star$. Using \Thm{lcmm}\ref{lcmm:R},
we obtain
\begin{align*}
&
  D(\mub\|\q)
\\
&\quad{}=
  R(\mub) + C(\q) - \mub\inprod\q
\\
&\quad{}=
\ones\Bracks{\mub\in\hull} + \Rx(\mub) + \Cx(\q+\deltab^\star)
\\
&\qquad\qquad{}
  - \b\inprod\etab^\star - \mub\inprod\q
\\
&\quad{}=
\ones\Bracks{\mub\in\hull} + \Rx(\mub) + \Cx(\q+\deltab^\star)
\\
&\qquad\qquad{}
  - \mub\inprod(\q+\deltab^\star)
  + \mub\inprod\deltab^\star - \b\inprod\etab^\star
\\
&\quad{}=
\ones\Bracks{\mub\in\hull} +
\Dx(\mub\|\q+\deltab^\star)
  + (\A^\top\mub-\b)\cdot\etab^\star
\enspace.
\end{align*}

\textbf{Part (\ref{lcmm:eta:2})\quad}
If $\etab$ is a minimizer of \Eq{C:lcmm} then choosing $\mub\in\nabla C(\q)$,
we have $D(\mub\|\q)=0$ and hence by \Thm{lcmm}\ref{lcmm:D}
\begin{equation}
\label{eq:app:1}
  0 = D(\mub\|\q)=\Dx(\mub\|\q+\deltab^\star)
               + (\A^\top\mub-\b)\cdot\etab^\star
\end{equation}
because, by \Thm{lcmm}\ref{lcmm:R}, $C$ is arbitrage-free, so $\mub\in\hull$.

For a converse, assume that for some $\mub\in\hull$,
$\etab\ge\0$, we have:
\begin{align}
0 &=
   \Dx(\mub\|\q+ \A\etab)
   + (\A^\top\mub-\b)\cdot\etab
\notag
\\
  &=
    \Rx(\mub) + \Cx(\q+\A\etab) - \mub\inprod(\q+\A\etab)
\notag
\\
&\qquad\qquad{}
    + (\A^\top\mub-\b)\cdot\etab
\notag
\\
  &=
    R(\mub) + \Cx(\q+\A\etab) - \mub\inprod\q
    -\b\cdot\etab
\notag
\\
  &\ge
    R(\mub) + C(\q) - \mub\inprod\q
\label{eq:app:2}
\\
  &=D(\mub\|\q)
\enspace,
\notag
\end{align}
where \Eq{app:2} is from the definition of $C$. However,
since $D(\mub\|\q)\ge 0$, we have that \Eq{app:2} holds
with the equality and hence $\etab$ is indeed the minimizer
of \Eq{C:lcmm}.
\end{proof}

\subsection{PROOF OF \THM{time-sensitive}}
\label{app:time-sensitive}

In what follows, let $C^t_g(\q_g) = \beta_g(t) C_g (\q_g /
\beta_g(t))$ and let $R^t_g$ and $D^t_g$ denote the conjugate and
divergence derived from $C^t_g$.  Define $\ttC_g$, $\ttR_g$, and
$\ttD_g$ similarly.

The definitions of $C^t$ and $\ttC$ imply that
\begin{align}
&
  \ttC_g(\q_g) = \talpha_g C^t_g(\q_g/\talpha_g)
\enspace,
\label{eq:lcmm:C}
\\
&
  \ttR_g(\mub_g) = \talpha_g R^t_g(\mub_g)
\enspace,
\label{eq:lcmm:R}
\\
&
  \ttD_g(\mub_g\|\q_g) = \talpha_g D^t_g(\mub_g\|\q_g/\talpha_g)
\enspace.
\label{eq:lcmm:D}
\end{align}
The proof proceeds in several steps:
\paragraph{Step 1}\emph{$D^t(\mub\|\s)=0$ if
and only if $\mub\in\hull$,
\begin{equation}
\label{eq:step:1}
\begin{aligned}
&
  D^t_g(\mub_g\|\s_g+\deltab^\star_g) = 0\text{ for all $g\in\G$,}
\\
&
 \text{and
 $(\A^\top\mub-\b)\cdot\etab^\star = 0$.}
\end{aligned}
\end{equation}}%

If \Eq{step:1} holds and $\mub\in\hull$, then \Thm{lcmm}\ref{lcmm:D} shows
that $D^t(\mub\|\s)=0$.
For the opposite implication note that $D^t(\mub\|\s)=\infty$ if
$\mub\not\in\hull$, so we must have $\mub\in\hull$. For $\mub\in\hull$,
by \Thm{lcmm}\ref{lcmm:D},
\begin{equation}
\label{eq:decompose:D}
D^t(\mub\|\s)
=
\sum_g D^t_g(\mub_g\|\s_g+\deltab^\star_g)
  + (\A^\top\mub-\b)\cdot\etab^\star
\enspace.
\end{equation}
Note that the last term in \Eq{decompose:D} is non-negative,
because $\etab^\star\ge 0$ and $\mub\in\hull$.
Since also the divergences $D^t_g$ are non-negative, we obtain that
all the terms must equal zero if $D^t(\mub\|\s)=0$.

\paragraph{Step 2}
$
  \etab^\star\in\argmin_{\etab\ge\0}\bigBracks{\ttCx(\tsb+\A\etab)-\b\inprod\etab}
$.

By \Thm{lcmm}\ref{lcmm:eta:2}, it suffices to exhibit $\mub\in\hull$ such that
\begin{equation}
\label{eq:step:2}
   \ttDx(\mub\|\tsb+\deltab^\star)
   + (\A^\top\mub-\b)\cdot\etab^\star
   = 0
\enspace.
\end{equation}
Pick any $\mub\in\partial C^t(\s)$, i.e., $D^t(\mub\|\s)=0$. Then by
expanding $\ttDx$ (using Eq.~\ref{eq:lcmm:D}) and then using the
definition of $\tsb$,
we obtain
\begin{align*}
&
   \ttDx(\mub\|\tsb+\deltab^\star)
   + (\A^\top\mub-\b)\cdot\etab^\star
\\
&\quad{}=
  \sum_g \talpha_g D^t_g\BigParens{\mub_g\Bigm\Vert\frac{\tsb_g+\deltab^\star_g}{\talpha_g}}
  + (\A^\top\mub-\b)\cdot\etab^\star
\\
&\quad{}=
  \sum_g \talpha_g D^t_g(\mub_g\|\s_g+\deltab^\star_g)
  + (\A^\top\mub-\b)\cdot\etab^\star .
\end{align*}
Both terms
on the right-hand side are zero by Step 1, yielding \Eq{step:2} as desired.

\paragraph{Step 3}\emph{For all $\mub\in\hull$:
\[\textstyle
 \ttD(\mub\|\tsb)
 =
 \sum_g
 \talpha_g D^t_g(\mub_g\|\s_g+\deltab^\star_g)
    + (\A^\top\mub-\b)\cdot\etab^\star
\enspace.
\]}

This follows by Step~2 and \Thm{lcmm}\ref{lcmm:D} plus Eq.~\ref{eq:lcmm:D}, noting that
\[
 (\tsb_g+\deltab_g^\star)/\talpha_g=\s_g+\deltab_g^\star
\enspace.
\]

\paragraph{Step 4}\emph{$\ttC$ and $\tsb$ satisfy \Price.}

Since $\mub\in\partial C^t(\s)$ if and only if $D^t(\mub\|\s)=0$,
and similarly for $\mub\in\partial\ttC(\tsb)$,
it suffices to show that $D^t(\mub\|\s)=0$ if and only if $\ttD(\mub\|\tsb)=0$.
First assume that $D^t(\mub\|\s)=0$. Then
Steps~1 and~3 show that $\ttD(\mub\|\tsb)=0$.
Also, vice versa: if $\ttD(\mub\|\tsb)=0$ then, from Step 3 (by a similar
reasoning as in the proof of Step 1), we have that $\mub\in\hull$,
for all $g$ it holds that $D^t_g(\mub_g\|\s_g+\deltab^\star_g)=0$, and also $(\A^\top\mub-\b)\cdot\etab^\star = 0$. Hence, by Step~1, also $D^t(\mub\|\s)=0$.

\subsection{PROOF OF \THM{gradual:partial}}
\label{app:gradual:partial}

We first show that \CondPrice and \CondVal hold. We proceed by \Prop{condval}. Fix $\x\in\X_g$, let $\Omega^\x=\set{\rhob_g=\x}$, and let $\mub\in\hull^\x\coloneqq\hull(\Omega^\x)$.
Then,
expanding $D^t(\mub\|\s)$ according to \Thm{lcmm}\ref{lcmm:D} and
$\ttD(\mub\|\tsb)$ according to \Thm{time-sensitive}, we have
\begin{align}
  D^t(\mub\|\s) - \ttD(\mub\|\tsb)
&= (1-\talpha_g)D^t_g(\mub_g\|\s_g+\deltab^\star_g)
\notag
\\
&=(1-\talpha_g)D^t_g(\x\|\s_g+\deltab^\star_g)
\label{eq:D:tD:diff}
\end{align}
which is a constant independent of the specific choice of $\mub\in\hull^\x$,
proving that both \CondPrice and \CondVal hold.

Next, we show that \DecVal holds. Let $\hmub^\x\in\p^t(\Omega^\x;\s)=\tpb(\Omega^\x;\tsb)$ (the equality holds by \CondVal).
From \Eq{D:tD:diff} and \Thm{D}, we have
\begin{align*}
&
 \ttValue(\rhob_g=\x;\tsb)
 =
 \ttD(\hmub^\x\|\tsb)\\
&\quad{}
 = D^t(\hmub^\x\|\s) - (1-\talpha_g)D^t_g(\x\|\s_g+\deltab^\star_g)
\\
&\quad{}
 = \Value(\rhob_g=\x;\s) - (1-\talpha_g)D^t_g(\x\|\s_g+\deltab^\star_g)
\enspace,
\end{align*}
i.e., the utility for event $\Omega^\x$ is non-increasing, because $\talpha_g\in(0,1)$.

In order to show \DecVal, we still
need to show that $D^t_g(\x\|\s_g+\deltab^\star_g)=0$ implies
$D^t(\hmub^\x\|\s)=0$.
Assume that $C^t_g$ is
differentiable and the submarket $g$ is tight, i.e.,
$\hull^\x=\set{\mub\in\hull:\:\mub_g=\x}$.
Assume that $D^t_g(\x\|\s_g+\deltab^\star_g)=0$.
By differentiability of $C^t_g$ this implies that
$\x=\nabla C^t_g(\s_g+\deltab^\star_g)$, and hence
$\mub_g=\x$ for all $\mub\in\partial C^t(\s)$.
By assumption, any of them is in $\hull^\x$ and hence
any of them can be chosen as a minimizer $\hmub^\x$ with
$D^t(\hmub^\x\|\s)=0$.

\subsection{BINARY-PAYOFF SUBMARKETS ARE TIGHT}
\label{app:tight}

\begin{theorem}
\label{thm:tight}
Let $g$ be a binary-payoff submarket in an LCMM, i.e.,
$\rhob_g(\omega)\in\set{0,1}^g$ for all $\omega\in\Omega$. Then $g$ is tight.
\end{theorem}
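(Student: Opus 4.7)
}
The containment $\hull(\rhob_g = \x) \subseteq \{\mub \in \hull : \mub_g = \x\}$ is immediate (as noted after the definition of tightness), so the plan is to establish only the reverse containment: any $\mub \in \hull$ with $\mub_g = \x$ must be realizable as an expectation of $\rhob(\omega)$ under a distribution supported on $\Omega^\x \coloneqq \{\omega : \rhob_g(\omega) = \x\}$.

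Since $\mub \in \hull = \conv\{\rhob(\omega)\}_{\omega \in \Omega}$, I would fix a probability distribution $p$ on $\Omega$ with $\mub = \sum_{\omega \in \Omega} p(\omega) \rhob(\omega)$. Restricted to coordinates $i \in g$, this gives $x_i = \mu_i = \sum_\omega p(\omega) \rho_i(\omega)$, where each $\rho_i(\omega) \in \{0,1\}$ by the binary-payoff assumption and each $x_i \in \{0,1\}$ by the assumption $\x \in \X_g \subseteq \{0,1\}^g$.

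The key observation is that a $\{0,1\}$-valued random variable whose expectation equals $0$ or $1$ must be constant almost surely under $p$. Concretely, for each $i \in g$: if $x_i = 0$, then $\sum_\omega p(\omega)\rho_i(\omega) = 0$ forces $p(\omega) = 0$ whenever $\rho_i(\omega) = 1$; if $x_i = 1$, then $\sum_\omega p(\omega)(1 - \rho_i(\omega)) = 0$ forces $p(\omega) = 0$ whenever $\rho_i(\omega) = 0$. In either case, $p$ assigns zero mass to $\{\omega : \rho_i(\omega) \neq x_i\}$. Intersecting over all $i \in g$, the distribution $p$ is supported entirely on $\Omega^\x$.

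The conclusion then follows by reading the identity $\mub = \sum_\omega p(\omega) \rhob(\omega)$ as an expression of $\mub$ as a convex combination of $\{\rhob(\omega) : \omega \in \Omega^\x\}$, i.e.\ $\mub \in \hull(\rhob_g = \x)$. I do not expect any genuine obstacles: the proof is essentially a one-line consequence of the fact that a $\{0,1\}$-valued expectation lying in $\{0,1\}$ determines the variable. The binary-payoff hypothesis is essential (an analogous claim fails as soon as $\rho_i$ takes three or more values), but no deeper structure of the LCMM or of the constraint matrix $\A$ is needed.
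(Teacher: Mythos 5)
Your proof is correct, and it rests on the same underlying mechanism as the paper's: write $\mub\in\hull$ with $\mub_g=\x$ as a convex combination $\mub=\sum_\omega \lambda_\omega\rhob(\omega)$ and show the weights must vanish outside $\Omega^\x$. The only difference is how that vanishing is extracted. You argue coordinate-by-coordinate: for each $i\in g$, a $\{0,1\}$-valued payoff whose $p$-expectation is $0$ or $1$ forces $p$ to put no mass where $\rho_i(\omega)\ne x_i$, and intersecting over $i\in g$ confines the support to $\Omega^\x$. The paper instead aggregates all coordinates into a single linear functional $\v$ (with $v_i=1$ where $x_i=1$, $v_i=-1$ where $x_i=0$, $v_i=0$ off $g$), notes $\v\inprod\rhob(\omega)=k$ on $\Omega^\x$ and $\le k-1$ off it, and concludes that $\v\inprod\mub=k$ forces zero weight off $\Omega^\x$. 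Your version is marginally more elementary, needing no auxiliary construction; the paper's version buys something extra, namely it exhibits $\Omega^\x$ as the maximizer set of a linear function of $\rhob(\omega)$, which is exactly the ``exposed event'' property reused verbatim in \App{suff:examples} to show that binary-payoff LCMM submarkets admit implicit submarket closing (\Thm{exposed}). There is no gap in your argument; it is a valid, self-contained proof of \Thm{tight}.
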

\begin{proof}
Fix any $\x\in\X_g = \set{0,1}^g$. Let $\Omega^\x\coloneqq\set{\rhob_g=\x}$, and let
$\hull^\x\coloneqq\hull(\Omega^\x)$ be the set of
beliefs consistent with $\Omega^\x$. Let $\mub\in\hull$ be such that $\mub_g=\x$.
We need to show that $\mub\in\hull^\x$.

Since $\mub\in\hull$, we can write $\mub=\sum_{\omega\in\Omega}\lambda_\omega\rhob(\omega)$
for some $\lambda_\omega\ge 0$ such that $\sum_{\omega\in\Omega}\lambda_\omega=1$.
We will argue that the condition $\mub_g=\x$ implies that $\lambda_\omega=0$ for $\omega\not\in\Omega^\x$
and thus in fact $\mub\in\hull^\x$. Essentially we show that $\hull^\x$ is the set of maximizers of a
linear function over $\hull$ and that $\mub$ is one of the maximizers.

The required linear function, $\v \cdot \mub$, is specified by the vector $\v\in\reals^K$ defined as follows:
\[
 v_i=
 \begin{cases}
 1 &\text{if $i\in g$ and $x_i=1$,}
\\
-1 &\text{if $i\in g$ and $x_i=0$,}
\\
0  &\text{if $i\not\in g$.}
\end{cases}
\]
Let $k$ be the number of $1$s in the vector $\x$. From the definition of $\Omega^\x$ we have
that $\v\inprod\rhob(\omega)=k$ for all $\omega\in\Omega^\x$. Let $\omega'\not\in\Omega^\x$,
i.e., $\omega'\in\Omega^{\x'}$ for some $\x'\in\X_g\wo\set{\x}$. Since $\x'\in\set{0,1}^g$ but
$\x'\ne\x$, there exists $i\in g$ such that $x_i=1$ but $x'_i=0$, or such that $x_i=0$ but $x'_i=1$.
Thus, $\v_g\inprod\x'\le k-1$ and hence also $\v\inprod\rhob(\omega')\le k-1$. This yields
\begin{align*}
   \v\inprod\mub
   &=
   \v\inprod
   \Parens{\sum_{\omega\in\Omega}\lambda_\omega\rhob(\omega)}
   =
   \sum_{\omega\in\Omega}\lambda_\omega\BigParens{\v\inprod\rhob(\omega)}
\\
   &\le
   k\Parens{\sum_{\omega\in\Omega^\x}\lambda_\omega}
   +(k-1)\Parens{\sum_{\omega'\in\Omega\wo\Omega^\x}\lambda_{\omega'}}
\\
   &=
   k - \sum_{\omega'\in\Omega\wo\Omega^\x}\lambda_{\omega'}
\enspace.
\end{align*}
However, $\mub_g=\x$ and thus $\v\inprod\mub=k$. Therefore, we must have
$\lambda_{\omega'}=0$ for $\omega'\in\Omega\wo\Omega^\x$, proving the theorem.
\end{proof}


\section{CONDITIONAL PRICE VECTORS}
\label{app:cond:price}

\subsection{CONDITIONAL PRICES FOR LMSR}

In this section we show that conditional price
vectors for LMSR coincide with conditional probabilities.

Recall that for LMSR, we have
the outcomes $\Omega=[K]$, payoffs
$\rho_i(\omega)=\1[\omega=i]$, and prices
\[
  p_i(\q)=\frac{e^{q_i}}{\sum_{j\in[K]} e^{q_j}}
\enspace,
\]
i.e., price vectors are probability distributions
over $i\in[K]$.

The mixed Bregman divergence for LMSR has
the form
\[
  D(\mub\|\q)
  =\sum_{i\in[K]} \mu_i\ln\Parens{\frac{\mu_i}{p_i(\q)}}
  =
  \KL\bigParens{\mub\bigm\|\p(\q)}
\]
where $\KL(\mub\|\nub)\coloneqq\sum_{i\in[K]}\mu_i\ln(\mu_i/\nu_i)$
is the KL divergence defined for any pair of distributions $\mub$,
$\nub$ on $[K]$.
KL divergence is always non-negative, possibly equal to $\infty$,
and equal to zero if and only if $\mub=\nub$.

Let $\q\in\reals^K$ and $\event\subseteq[K]=\Omega$ be a non-null event.
Let $\hmub$ be the probability vector obtained by conditioning $\p(\q)$
on the event $\event$, i.e.,
\[
  \hmu_i=
\begin{cases}
    p_i(\q) / c
    &
    \text{if $i\in\event$,}
\\
    0
    &
    \text{otherwise,}
\end{cases}
\]
where $c=\sum_{i\in\event} p_i(\q)$ is the normalization over $\event$.
Note that $p_i(\q)>0$ for all $i\in[K]$, so $c>0$.
We will now argue that $\p(\event;\q)=\set{\hmub}$.

We appeal to \Eq{p:E} of \Thm{D}. Specifically, we will show that $\hmub$ is the
unique minimizer of $\min_{\mub'\in\hull(\event)} D(\mub'\|\q)$.

First, note that $\hmub\in\hull(\event)$, and from the definition of $\hmub$
\begin{align*}
D(\hmub\|\q)
  &=
  \sum_{i\in\event}
  \hmu_i\ln\Parens{\frac{\hmu_i}{p_i(\q)}}
  =
  \sum_{i\in\event}
  \hmu_i\ln\Parens{\frac{p_i(\q)/c}{p_i(\q)}}
\\
  &=
  \sum_{i\in\event}
  \hmu_i\ln(1/c)
  =
  \ln(1/c)
\enspace.
\end{align*}
Now, let $\mub'\in\hull(\event)$ and
compare the values $D(\mub'\|\q)$ and $D(\hmub\|\q)$:
\begin{align*}
&
  D(\mub'\|\q)-D(\hmub\|\q)
  =
  \Parens{\sum_{i\in\event}
  \mu'_i\ln\Parens{\frac{\mu'_i}{p_i(\q)}}}
  -
  \ln(1/c)
\\
&\quad{}=
  \Parens{\sum_{i\in\event}
  \mu'_i\ln\Parens{\frac{\mu'_i}{p_i(\q)}}}
  -
  \Parens{\sum_{i\in\event}
  \mu'_i\ln(1/c)}
\\
&\quad{}=
  \sum_{i\in\event}
  \mu'_i\ln\Parens{\frac{\mu'_i}{p_i(\q)/c}}
\\
&\quad{}=
  \sum_{i\in\event}
  \mu'_i\ln\Parens{\frac{\mu'_i}{\hmu_i}}
 =\KL(\mub'\|\hmub)
\enspace.
\end{align*}
Thus, we have $D(\mub'\|\q)\ge D(\hmub\|\q)$ with
equality if and only if $\mub'=\hmub$, i.e.,
$\hmub$ is the sole minimizer of $\min_{\mub'\in\hull(\event)} D(\mub'\|\q)$.

\subsection{OPTIMAL TRADING GIVEN \texorpdfstring{$\event$}{E}}

In this section we analyze the prices that result from actions of a
trader optimizing his guaranteed profit from the information
$\omega\in\event$ as in \Def{event-value} in the market with cost
function $C$.   Intuitively, we would like to say that such a trader would
move the market price to a conditional price vector $\hmub \in
\p(\event;\q)$.  However, this may not be possible.  For example,
consider a complete market using LMSR.  In such a market, a trader
can push the market price arbitrarily close to any $\mub \in
\hull(\event)$, but cannot push the price all the way to $\mub$ with
any finite purchase (unless $\event = \Omega$).

Because of this,
instead of reasoning directly about finite purchases, we introduce the
notion of an \emph{optimizing action sequence} and show that in the
limit such a trader would move the market
from a state $\q$ to
states that minimize the Bregman
divergence to conditional price vectors
$\hmub\in\p(\event;\q)$. Then we
argue that for $R$ strictly convex
(such as entropy in case of LMSR),
this implies that the resulting market price vector
approaches the unique conditional price vector in the limit.

We begin by formalizing the optimizing behavior in
\Def{event-value}.
\begin{definition}
\label{def:opt:seq}
We say that $\set{\r_i}_{i=1}^\infty$ is an
\emph{optimizing action sequence}
with respect to a non-null event $\event$ and a state $\q$ if
\[
  \adjustlimits\lim_{i\to\infty}\min_{\omega\in\event}
   \BigBracks{
     \rhob(\omega)\inprod\r_i - C(\q+\r_i) + C(\q)}
   =
   \Value(\event;\q)
\enspace.
\]
We say that $\set{\q_i}_{i=1}^\infty$
is an \emph{optimizing state sequence} with respect to $\event$
and $\q$ if
\[
  \adjustlimits\lim_{i\to\infty}\min_{\omega\in\event}
   \BigBracks{
     \rhob(\omega)\inprod(\q_i-\q) - C(\q_i) + C(\q)}
   =
   \Value(\event;\q)
\enspace.
\]
(Thus, any optimizing action sequence yields an
optimizing state sequence $\q_i=\q+\r_i$
and vice versa.)
\end{definition}

We next show that optimizing state sequences minimize
divergence to conditional price vectors. Specifically,
the divergence between any state sequence and any
conditional price vector tends to zero. Loosely
speaking, this means that the market is moving towards
states whose associated prices, in the limit, include
all conditional price vectors.

\begin{theorem}
\label{thm:opt:conv:D}
Let $\set{\q_i}_{i=1}^\infty$
be an optimizing state sequence with respect to $\event$
and $\q$, and let $\hmub\in\p(\event;\q)$. Then
$D(\hmub\|\q_i)\to 0$ as $i\to\infty$.
\end{theorem}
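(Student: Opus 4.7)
The plan is to rewrite the quantity appearing in \Def{opt:seq} entirely in terms of Bregman divergences and then use the fact that its limit equals $\Value(\event;\q) = D(\hmub\|\q)$.

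First, I would convert the minimization over outcomes into a minimization over $\hull(\event)$. Since $\mub\mapsto\mub\cdot\r$ is linear, for any $\r\in\reals^K$,
\[
  \min_{\omega\in\event}\rhob(\omega)\cdot\r = \min_{\mub'\in\hull(\event)}\mub'\cdot\r,
\]
exactly as in the derivation of \Eq{E:1} in the proof of \Thm{D}. Applying this to $\r=\q_i-\q$ lets me replace $\min_{\omega\in\event}$ by $\min_{\mub'\in\hull(\event)}$ in \Def{opt:seq}.

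Next, I would use the algebraic identity
\[
  \mub'\cdot(\q_i-\q) - C(\q_i) + C(\q) = D(\mub'\|\q) - D(\mub'\|\q_i),
\]
which is just a rearrangement of $D(\mub'\|\q')=R(\mub')+C(\q')-\mub'\cdot\q'$. Together with the previous step, the defining property of an optimizing state sequence becomes
\[
  \lim_{i\to\infty}\min_{\mub'\in\hull(\event)}\bigl[D(\mub'\|\q)-D(\mub'\|\q_i)\bigr] = \Value(\event;\q) = D(\hmub\|\q),
\]
where the last equality is \Thm{D} combined with $\hmub\in\p(\event;\q)$.

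Finally, I would plug the particular choice $\mub'=\hmub$ into the min to obtain the inequality
\[
  \min_{\mub'\in\hull(\event)}\bigl[D(\mub'\|\q)-D(\mub'\|\q_i)\bigr] \leq D(\hmub\|\q)-D(\hmub\|\q_i),
\]
which rearranges to
\[
  D(\hmub\|\q_i) \leq D(\hmub\|\q) - \min_{\mub'\in\hull(\event)}\bigl[D(\mub'\|\q)-D(\mub'\|\q_i)\bigr].
\]
Taking $i\to\infty$, the right-hand side converges to $D(\hmub\|\q)-D(\hmub\|\q)=0$, so $\limsup_i D(\hmub\|\q_i)\leq 0$. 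Since $D\geq 0$ (\Prop{first:order}), this forces $D(\hmub\|\q_i)\to 0$. I do not expect any real obstacle: the argument is purely algebraic once the Bregman identity and the variational characterization of $\Value(\event;\q)$ are in hand; the only subtle point is making sure that the $\min$ over $\mub'$ is bounded above by its value at the specific point $\hmub$, which is immediate.
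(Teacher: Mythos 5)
Your proposal is correct and follows essentially the same route as the paper's proof: converting the minimum over $\event$ to a minimum over $\hull(\event)$, rewriting the objective as $D(\mub'\|\q)-D(\mub'\|\q_i)$, invoking \Thm{D} to identify the limit with $D(\hmub\|\q)$, and then bounding the minimum by its value at $\hmub$ to force $D(\hmub\|\q_i)\to 0$. The only cosmetic difference is that you phrase the final step as a $\limsup$ bound while the paper sandwiches $D(\hmub\|\q)-D(\hmub\|\q_i)$ between the minimum and $D(\hmub\|\q)$; these are the same argument.
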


\begin{proof}
Since the minimized objective in \Def{opt:seq} is linear
in $\rhob(\omega)$, we can without loss of generality
replace minimization over $\rhob(\omega)$ where $\omega\in\event$
by minimization over $\mub'\in\hull(\event)$, and thus assume
that
\begin{equation}
\label{eq:opt:1}
\begin{aligned}
&
\adjustlimits\lim_{i\to\infty}\min_{\mub'\in\hull(\event)}
   \BigBracks{
     \mub'\inprod(\q_i-\q) - C(\q_i) + C(\q)}
\\
&\hspace{0.6\columnwidth}{}=
   \Value(\event;\q)
\,.
\end{aligned}
\end{equation}
The expression in the brackets can be rewritten as
\[
     \mub'\inprod(\q_i-\q) - C(\q_i) + C(\q)
     =
     -D(\mub'\|\q_i) + D(\mub'\|\q)
\,.
\]
Furthermore, by \Thm{D}, we have $\Value(\event;\q)=D(\hmub\|\q)$.
We can therefore rewrite \Eq{opt:1} as
\begin{equation}
\label{eq:opt:2}
\adjustlimits\lim_{i\to\infty}\min_{\mub'\in\hull(\event)}
   \BigBracks{
     D(\mub'\|\q) - D(\mub'\|\q_i)
   }
   =
   D(\hmub\|\q)
\,.
\end{equation}
To get the statement of the theorem, note that for all $i$,
\begin{align}
\label{eq:opt:3}
  D(\hmub\|\q)
  &\ge
  D(\hmub\|\q) - D(\hmub\|\q_i)
\\
\label{eq:opt:4}
  &\ge
  \min_{\mub'\in\hull(\event)}
  \BigBracks{
     D(\mub'\|\q) - D(\mub'\|\q_i)
  }
\end{align}
where \Eq{opt:3} follows by non-negativity of
the divergence, and \Eq{opt:4} because $\hmub\in\hull(\event)$.
Since \Eq{opt:4} converges to $D(\hmub\|\q)$ by \Eq{opt:2},
we obtain that the right hand-side in \Eq{opt:3} must also
converge to $D(\hmub\|\q)$,
i.e., $D(\hmub\|\q_i)\to 0$.
\end{proof}

When $R$ is strictly convex on $\hull$, \Thm{opt:conv:D} can be
strengthened to show that $\p(\q_i)\to\hmub$. Strict convexity of $R$
is equivalent to a certain notion of smoothness of $C$.  It is
stronger than differentiability of $C$~\citep[][Theorem 26.3]{Rockafellar70},
but weaker than the existence of a Lipschitz-continuous gradient for
$C$.~\footnote{%
Proposition 12.60ab,
R.~Tyrrell Rockafellar, Roger J.-B. Wets.
\emph{Variational analysis}.
Springer, 1998.}

\begin{theorem}
\label{thm:opt:conv:p}
Let $\set{\q_i}_{i=1}^\infty$
be an optimizing state sequence with respect to $\event$
and $\q$, and let $\hmub\in\p(\event;\q)$. If $R$
is strictly convex on $\hull$ then
$\p(\q_i)\to\hmub$ as $i\to\infty$.
\end{theorem}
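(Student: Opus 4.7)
The plan is to combine \Thm{opt:conv:D}, which yields $D(\hmub\|\q_i)\to 0$, with strict convexity of $R$ to pin down the prices $\mub_i\in\p(\q_i)$. First, I note that under strict convexity of $R$ on $\hull$, the set $\p(\q_i)=\partial C(\q_i)$ is a singleton: if $\mub,\mub'\in\p(\q_i)$ were distinct, then by \Prop{first:order} both satisfy $D(\cdot\|\q_i)=0$, so $C(\q_i)=\mub\cdot\q_i-R(\mub)=\mub'\cdot\q_i-R(\mub')$, and the midpoint $\bar\mub\in\hull$ would then satisfy $D(\bar\mub\|\q_i)=R(\bar\mub)-\tfrac12(R(\mub)+R(\mub'))<0$ by strict convexity, contradicting non-negativity of $D$. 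Let $\mub_i$ denote this unique element; the goal reduces to showing $\mub_i\to\hmub$.

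The key step is to lower bound $D(\hmub\|\q_i)$ by an expression that does not involve $\q_i$. By \Prop{first:order}, $\q_i\in\partial R(\mub_i)$. Applying the subgradient inequality of $R$ at $\mub_\lambda\coloneqq\lambda\hmub+(1-\lambda)\mub_i$ for any $\lambda\in(0,1)$ yields $\q_i\cdot(\hmub-\mub_i)\le[R(\mub_\lambda)-R(\mub_i)]/\lambda$. Substituting into the identity $D(\hmub\|\q_i)=R(\hmub)-R(\mub_i)-\q_i\cdot(\hmub-\mub_i)$ (which uses $C(\q_i)=\mub_i\cdot\q_i-R(\mub_i)$) gives
\[
   D(\hmub\|\q_i)\;\ge\;\frac{\lambda R(\hmub)+(1-\lambda)R(\mub_i)-R(\mub_\lambda)}{\lambda},
\]
whose numerator is the strict-convexity gap of $R$ along the segment from $\mub_i$ to $\hmub$, and is strictly positive whenever $\mub_i\ne\hmub$.

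A compactness argument then finishes the proof. Since $\hull$ is a compact polytope and $\mub_i\in\hull$, any subsequence of $\{\mub_i\}$ has a further subsequence $\mub_{i_k}\to\mub^\star\in\hull$. By \Prop{cont}, $R$ is continuous on $\hull$, so fixing $\lambda\in(0,1)$ and passing to the limit in the displayed inequality, together with $D(\hmub\|\q_{i_k})\to 0$ from \Thm{opt:conv:D}, yields
\[
   \lambda R(\hmub)+(1-\lambda)R(\mub^\star)-R\bigParens{\lambda\hmub+(1-\lambda)\mub^\star}\;=\;0.
\]
Strict convexity of $R$ on $\hull$ then forces $\mub^\star=\hmub$. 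Hence every convergent subsequence of $\{\mub_i\}$ has limit $\hmub$, and since $\{\mub_i\}$ lies in the compact set $\hull$, the whole sequence converges, giving $\p(\q_i)=\{\mub_i\}\to\hmub$.

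The main obstacle is that $\q_i$ may be unbounded (for instance, in LMSR, where prices approaching the boundary of $\hull$ are realized only as $\|\q_i\|\to\infty$), so one cannot directly pass to the limit in the raw expression $D(\hmub\|\q_i)=R(\hmub)+C(\q_i)-\hmub\cdot\q_i$. The subgradient trick at the intermediate point $\mub_\lambda$ is precisely what eliminates this dependence, trading the linear term $\q_i\cdot(\hmub-\mub_i)$ for values of $R$ on the compact set $\hull$, where continuity and strict convexity can be brought to bear.
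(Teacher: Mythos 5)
Your proof is correct and follows essentially the same route as the paper's: both combine \Thm{opt:conv:D} with the subgradient inequality for $R$ at points on the segment between $\p(\q_i)$ and $\hmub$ (eliminating the possibly unbounded $\q_i$), continuity of $R$ on the polyhedral set $\hull$, and a cluster-point argument in the compact set $\hull$, concluding from strict convexity that the convexity gap along the segment vanishes only if the cluster point equals $\hmub$. The only cosmetic difference is that you show $\p(\q_i)$ is a singleton via a direct midpoint computation with the divergence, whereas the paper invokes the fact that strict convexity of $R$ implies differentiability of $C$ (Rockafellar, Theorem~26.3).
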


\begin{proof}
First note that if $R$ is strictly convex then
$C$ is differentiable~\citep[][Theorem 26.3]{Rockafellar70}, and thus $\p(\q_i)$
is always a singleton.
Next note that the sequence
$\set{\p(\q_i)}_{i=1}^\infty$ is
contained in a compact set $\hull$, so it
must have a cluster point in $\hull$. Pick
an arbitrary cluster point $\mub^\star$ and choose a
subsequence $\set{\q_{i(j)}}_{j=1}^\infty$ such that $\p(\q_{i(j)})\to\mub^\star$
as $j\to\infty$.
We will show that $\mub^\star=\hmub$ and thus all of the cluster
points of the original price sequence $\set{\p(\q_i)}_{i=1}^\infty$
coincide. This implies that the sequence
actually converges to $\hmub$ (again, because it is contained
in a compact set $\hull$).

To simplify writing, let $\q'_j\coloneqq\q_{i(j)}$
and $\mub'_j\coloneqq\p(\q'_j)$. By the choice of the subsequence,
we have $\mub'_j\to\mub^\star$. By \Prop{first:order}, we have $\q'_j\in\partial R(\mub'_j)$
and by convexity of $R$ we have the lower bound
\begin{equation}
\label{eq:opt:lower}
 R(\mub)\ge R(\mub'_j)+(\mub-\mub'_j)\inprod\q'_j
\end{equation}
valid for all $\mub$. We will analyze the limits of this lower bound on the line segment
connecting $\mub^\star$ and $\hmub$ to argue that
$R$ must be linear on this line segment. This will yield
a contradiction unless $\mub^\star=\hmub$.

By \Thm{opt:conv:D} we have that
$D(\hmub\|\q_i)\to 0$ and hence also $D(\hmub\|\q'_j)\to 0$.
To begin the analysis of the lower bound in \Eq{opt:lower},
we rewrite $D(\hmub\|\q'_j)$ as
\begin{align}
\notag
  D(\hmub\|\q'_j)
  &=
  R(\hmub)+C(\q'_j)-\hmub\inprod\q'_j
\\
\label{eq:opt:5}
  &=
  R(\hmub)-R(\mub'_j)+(\mub'_j-\hmub)\inprod\q'_j
\end{align}
where the last equality follows because
$C(\q'_j)=\mub'_j\inprod\q'_j-R(\mub'_j)$ by \Prop{first:order}.
Since $D(\hmub\|\q'_j)\to 0$, \Eq{opt:5} yields
\begin{equation}
\label{eq:opt:6}
  \lim_{j\to\infty}\Bracks{R(\mub'_j)+(\hmub-\mub'_j)\inprod\q'_j}
  =
  R(\hmub)
\enspace.
\end{equation}
Thus, we see that the lower bound of \Eq{opt:lower} at $\mub=\hmub$
is tight as $j\to\infty$.

Next, we note that $R$ is continuous on $\hull$ by \Prop{cont},
because $\hull$ is polyhedral.

We now focus on the line segment
connecting $\mub^\star$ and $\hmub$. Let $\lambda\in[0,1]$ and consider
\Eq{opt:lower} at $\mub'_j(\lambda)\coloneqq (1-\lambda)\mub'_j + \lambda\hmub$:
\begin{align*}
R(\mub'_j(\lambda))
  &\ge
  R(\mub'_j) + (\mub'_j(\lambda)-\mub'_j)\inprod\q'_j
\\
  &=
  R(\mub'_j) + \lambda(\hmub-\mub'_j)\inprod\q'_j
\\
  &=
  (1-\lambda)R(\mub'_j)
\\
  &\qquad{}
  + \lambda\BigParens{R(\mub'_j) + (\hmub-\mub'_j)\inprod\q'_j}
\enspace,
\end{align*}
where the first equality follows from the definition of $\mub'_j(\lambda)$ and
the second by rearranging the terms. Taking $j\to\infty$ and using
\Eq{opt:6} and the continuity of $R$, we obtain
\[
   R\BigParens{(1-\lambda)\mub^\star+\lambda\hmub}
   \ge
   (1-\lambda)R(\mub^\star)
   +\lambda R(\hmub)
\enspace.
\]
However, by convexity we also have
\[
   R\BigParens{(1-\lambda)\mub^\star+\lambda\hmub}
   \le
   (1-\lambda)R(\mub^\star)
   +\lambda R(\hmub)
\enspace,
\]
so indeed $R$ must be linear on the line segment
connecting $\mub^\star$ and $\hmub$, which contradicts
strict convexity of $R$ unless $\mub^\star=\hmub$.
\end{proof}

\section{ROBUST BAYES UTILITY}
\label{app:value}

\ignore{
In \Sec{formalism}, we motivate the definitions of $\Value(\event;\q)$,
$\Value(\mub;\q)$, and $\Value(\mub\given\event;\q)$ from the
perspective of the trader, but it is possible to give a symmetric
treatment from the perspective of the market maker. For instance,
given that the market maker wishes to gather information, and assuming
that the consensus price in the market (following a sequence of trades)
correctly estimates the expected price, the market maker's expected
loss will exactly coincide with the value of belief as defined above.
As the market maker is free to choose $C$, we can conclude that these
quantities capture how much the market maker should value the
consensus price.
} 

In \Sec{formalism}, we motivate the utility for information
as the market maker's willingness to pay for information,
or, equivalently, as the traders' ability to profit
from their information. Another motivation for the same
definitions, pursued in \Sec{bregman}, arises from defining
the utility for information via a
measure of distance, such that the market maker is
willing to pay more for the information more distant
from the current state.

In this section, we give a fourth motivation, showing how our
definitions naturally match up with concepts from robust Bayes
decision theory~\cite{grunwald2004game}.  In \Sec{formalism}, we
adopted the perspective of either an expected-utility-maximizing
trader (for the utility of a belief) or a worst-case trader (for the
utility of an event).  Here we show that if we make a slightly
stronger assumption about the behavior of traders endowed with various
information relevant to the market maker, these two notions can be
unified. Specifically, we will show that assuming that the traders are
robust Bayes decision makers, we obtain the same definitions of the
utility for information.

As before, let $\Omega$ be a finite set of outcomes.
Let $\Delta$ be the set of probability distributions
over $\Omega$. Consider a decision maker trying to
choose an action $a$ from some action set before an
outcome is realized. Given an action $a$ and a
realized outcome $\omega\in\Omega$, the decision maker
receives the utility $u(a,\omega)$. We assume that the
decision maker's information $\I$ is represented as
a non-null subset of $\Delta$, i.e.,
$\emptyset\ne\I\subseteq\Delta$.
The decision maker assumes that the outcome $\omega$
is drawn according to some probability distribution $P$,
but the only information about $P$ is that $P\in\I$. Given
this information, we call the decision maker the
\emph{robust Bayes decision maker} if he is trying
to maximize the worst-case expected utility where the worst case
is over $P\in\I$. The obtained
worst-case expected utility is referred to as the
\emph{robust Bayes utility for $\I$} and defined
as
\[
  \RBV(\I)
  \coloneqq
  \adjustlimits\sup_a\inf_{P\in\I}\E_{\omega\sim P}[u(a,\omega)]
\enspace.
\]

Consider a prediction market with the cost function $C$
and the current state $\q$.
Actions available to a trader
are all possible trades $\r\in\reals^K$, and the utility of the trader is
\[
  u(\r,\omega) = \rhob(\omega)\inprod\r - C(\q+\r) + C(\q)
\enspace.
\]
To see that our utility for information is actually
the robust Bayes utility, define the following information sets:
\begin{gather*}
  \set{\E_P[\rhob]=\mub}
  \coloneqq
  \set{P\in\Delta:\:\E_P[\rhob]=\mub}
\\
  \set{P[\event]=1}
  \coloneqq
  \set{P\in\Delta:\:P[\event]=1}
\enspace.
\end{gather*}
The first corresponds to the probability distributions $P$
that give rise to the expected value $\E_P[\rhob]=\mub$;
the second corresponds to the probability distributions
that put all of their mass on outcomes $\omega\in\event$.
Plugging these information sets into the definition
of the robust Bayes utility, we obtain
\begin{gather*}
  \RBV(\E_P[\rhob]=\mub)=\Value(\mub;\q)
\\
  \RBV(P[\event]=1)=\Value(\event;\q)
\enspace.
\end{gather*}
Thus indeed the market maker's utility for
a belief and for an event is a robust Bayes utility.

While the notion of excess utility is not standard in robust Bayes
decision theory, it can be naturally defined
as follows. Let $\I_1,\I_2\subseteq\Omega$
such that $\I_1\cap\I_2\ne\emptyset$. Then
the \emph{excess robust Bayes utility for $\I_1$ given $\I_2$}
is
\[
  \RBV(\I_1\given\I_2)
  =
  \RBV(\I_1\cap\I_2)
  -
  \RBV(\I_2)
\enspace,
\]
and thus we also obtain
\[
  \RBV\bigl(\E_P[\rhob]=\mub \,\bigm|\, P[\event]=1 \bigr)
  =
  \Value(\mub\given\event;\;\q)
\enspace.
\]

\citet{grunwald2004game} show that whenever the set $\I$ is
closed and convex, the robust Bayes utility $\RBV(\I)$
coincides with the dual
concept of the \emph{maximum (generalized) entropy}, which
seeks to find the distribution of the maximum entropy
that satisfies a given set of constraints (expressed
as $\I$).
We do not go into details here, but simply point out
that the correspondence between the utility of information
and the Bregman divergence (\Thm{D}) is just
a special case of the duality between the robust Bayes
and the maximum entropy.

\ignore{

Recall our definition of the utility for an
event given the current state $\q$:
\begin{align*}
  \label{eq:app-bayes-old-val-event}
  \Value(\event;\q)
  &\coloneqq
  \adjustlimits\sup_{\q'} \min_{\omega\in\event}\BigBracks{
       C(\q) - C(\q') + \rhob(\omega)\cdot(\q'-\q)
  }
  \\
  &=
  \adjustlimits\sup_{\q'} \min_{\mub\in\hull(\event)}\BigBracks{
       C(\q) - C(\q') + \mub\cdot(\q'-\q)
  }
  \\
  &=
  \adjustlimits\sup_{\q'} \min_{P\in\Delta(\event)}\BigBracks{
       \E_{\omega\sim P}[u(\q',\omega)]
  }
\enspace,
\end{align*}
where $\Delta(\event)\subseteq\Delta(\Omega)$ is the set of distributions on $\Omega$ with support contained in $\event$, and $u(\q',\omega) = C(\q) - C(\q') + \rhob(\omega)\cdot(\q'-\q)$ is the utility of the agent after moving the state to $\q'$ and seeing outcome $\omega$.  Written in this way, one could view this agent as being an expected utility maximizer, but with the \emph{worst-case} distribution from some set, in this case $\Delta(\event)$.  This is precisely the robust Bayes approach.

More generally, the robust Bayes value of some subset of distributions $\I\subseteq\Delta(\Omega)$ is simply the worst-case expected utility of any $P\in\I$.  In our setting,
\begin{equation}
  \label{eq:app-bayes-def}
  \RBV(\I;\q) \coloneqq \sup_{\q'} \inf_{ P\in\I} \E_{\omega\sim P}[u(\q',\omega)].
\end{equation}
Hence, as we derived above,
\begin{equation*}
  \label{eq:app-bayes-event}
  \RBV(\Delta(\event);\q) = \Value(\event;\q).
\end{equation*}
Now consider some random variable $Y:\Omega\to\Y$.  We may equally ask what the robust Bayes value of the information $\E[Y]=\mub_Y$ is.  Translating this into a set of distributions gives $\{ P\in\Delta(\Omega) : \E_{\omega\in P}[Y]=\mub_Y\}$.  In particular, as $\E_{\omega\sim P}[u(\q',\omega)]$ depends on $P$ only through $\E_{\omega\in P}[\rhob]$, we have
\begin{align*}
  \label{eq:app-bayes-mu}
  \RBV(\E[\rhob]\=\mub;\q)
  &= C(\q) - C(\q') + \mub\cdot(\q'-\q)
  \\
  &= \Value(\mub;\q).
\end{align*}

The final instantiation of $\Value$ we defined is the conditional value $\Value(\mub\given\event;\q)$.

\raf{...}
\begin{equation*}
  \RBV(\I_1\,|\,\I_2;\q) \coloneqq \RBV(\I_1\cap\I_2;\q) - \RBV(\I_2;\q).
\end{equation*}
\begin{align*}
  \RBV(\E[&\rhob]\=\mub \given \event; \q)\\
  &= \RBV(\E[\rhob]\=\mub;\q) - \RBV(\event;\q)
  \\
  &= \Value(\mub;\q) - \Value(\event;\q)
  \\
  &= \Value(\mub\given \event;\q)
\end{align*}
\begin{equation*}
  \RBV(\E[X]\=\mub_X;\q)
\end{equation*}
\begin{equation*}
  \RBV(\E[\rhob]\=\mub\given\E[X]\=\mub_X;\q)
\end{equation*}

\raf{General statement:} If $\I$ is convex, and $\hull(\I) \coloneqq \conv(\{\E_{\omega\in P}[\rhob]: P\in\I\})$, then
\begin{align*}
  \RBV(\I;\q)
  &=
  \sup_{\q'} \inf_{ P\in\I} C(\q) - C(\q') + \E_{\omega\in P}[\rhob]\cdot(\q'-\q)
  \\
  &=
  \sup_{\q'} \inf_{\mub\in\hull(\I)} C(\q) - C(\q') + \mub\cdot(\q'-\q)
  \\
  &=
  \inf_{\mub\in\hull(\I)} \sup_{\q'} \; C(\q) - C(\q') + \mub\cdot(\q'-\q)
  \\
  &=
  \inf_{\mub\in\hull(\I)} D(\mub\|\q).
\end{align*}

\raf{NEW CONDITIONS; don't we need $\q$?}

Let $\hull(X) \coloneqq \conv \{X(\omega):\omega\in\Omega\}$.

\DecVal\textsc{'}: for all $\mub_X\in\hull(X)$,
\[\tRBV(\E[X]\=\mub_X) \leq \RBV(\E[X]\=\mub_X)\]

\CondVal\textsc{'}: for all $\mub_X\in\hull(X)$, and $\mub=\E_{\omega\in P}[\rhob]$ for some $P$ with $\E_{\omega\in P}[X]=\mub_X$,
\[\tRBV(\E[\rhob]\=\mub\given\E[X]\=\mub_X) = \RBV(\E[\rhob]\=\mub\given\E[X]\=\mub_X)\]

\raf{OLD STUFF\\\hrule} \jenn{Moved to comments but it's all there}

Given the original market maker $C$, we have a natural notion of the
\emph{value of a distribution} $P\in\Delta_\Omega$ over outcomes:
given a distribution $P$ representing a trader's beliefs about the outcome, a trader could make an expected profit of $D(\E_{\omega\sim p}[\rhob(\omega)] \,\|\, \q)$, and as this is a loss for the market maker, we are justified in calling this quantity the value $\Value(\p;\q)$ of $\p$ to the market maker.

Hence, loosely speaking, given any information $I$ we can define $\Value(I;\q) = \inf \{\Value(\p;\q) : \p \text{ is consistent with } I\}$ to be the worst-case value of this information.  Note that $\Value$ is convex in $\p$, and can be thought of as a (negative) entropy function, so this in a sense applies the \emph{maximum entropy principle}.  An alternate and equivalent interpretation (see~\citet{grunwald2004game}) is that traders are \emph{robust Bayesian} agents: endowed with information in the form of a set of distributions $P$, agents maximize their expected utility but according to the worst-case distribution $\p\in P$.  Also, note that roughly speaking $\Value(I;\q)$ will be convex in $I$ whenever the consistency constraint is linear in $\p$ (e.g., an expectation constraint).

The general philosophy above leads to the following definitions of the value of information.  Let $\event \subseteq \Omega$ be an event, and let $Y:\Omega\to\Y$ be an arbitrary random variable.  Let $P_\event = \{\p\in\Delta_\Omega : \forall\omega\in\Omega\setminus\event,\; p_\omega=0\}$ be the distributions with support on $\event$.  Then we define:

\begin{align*}
&  \Value(\event;\q)
\\
&\quad{}= \inf \{ \Value(\p;\q) : \p\in P_\event\}
\\
&  \Value(\E[Y] \= \y;\q)
\\
&\quad{}= \inf \{ \Value(\p;\q) : \p\in \Delta_\Omega,\; \E_{\omega\sim \p}[Y] \= \y\}
\\
&  \Value(\event \wedge \E[Y] \= \y\,;\,\q)
\\
&\quad{}= \inf \{ \Value(\p;\q) : \p\in P_\event,\; \E_{\omega\sim \p}[Y] \= \y \}.
\end{align*}
Using these definitions, we can define the conditional value of information straightforwardly as
\begin{align*}
  &\Value(\E[Y] \= \y \mid \event \,;\, \q)
\\
&\quad{}=
   \Value(\event \wedge \E[Y] \= \y \,;\, \q) - \Value(\event;\q).
\end{align*}

We now apply these definitions to our setting, and show that we recover the definitions from \Sec{formalism}.  Note that to match our previous definitions, $\Value(\mub;\q)$ should be read $\Value(\E[\rhob] \= \mub;\q)$.  First, by definition of $\Value(\p;\q)$, we naturally have
\begin{align*}
&\Value(\mub;\q)
\\
&\quad{}
  = \Value(\E[\rhob] \= \mub;\q)
\\&\quad{}
  = \inf \{ D(\E_{\omega\sim \p}[\rhob(\omega)] \| \q) : \p\in\Delta_\Omega,\; \E_{\omega\sim \p}[\rhob(\omega)] \= \mub \}
\\&\quad{}
  = D(\mub\|\s)~.
\end{align*}

Next, as $X=x$ is just shorthand for the event $\Omega^x$, we have
\begin{align*}
&\Value(X\=x;\q)
\\&\quad{}
  = \inf \{ \Value(\p;\q) : \p \in P_{\Omega^x} \}
\\&\quad{}
  = \inf \{ D(\E_{\omega\sim \p}[\rhob(\omega)] \,\|\, \q) : \p \in P_{\Omega^x} \}
\\&\quad{}
  = \inf \{ D(\mub \| \q) : \mub\in\hull^x\}~.
\end{align*}

Finally, for $\mub\in\hull^x$, we have
\begin{align*}
&
  \Value(X\=x \wedge \E[\rhob] \= \mub \,;\, \q)
\\&\quad{}
  = \inf \{ D(\E_{\omega\sim \p}[\rhob(\omega)] \,\|\, \q) : \p \in P_{\Omega^x},\; \E_{\omega\sim \p}[\rhob(\omega)] \= \mub \}
\\&\quad{}
  = D(\mub\|\q)~,
\end{align*}
giving
\begin{align*}
&
  \Value(\mub \mid X\=x\,;\,\q)
\\&\quad{}
  = \Value(X\=x \wedge \E[\rhob] \= \mub \,;\, \q) - \Value(X\=x;\q)
\\&\quad{}
  = D(\mub\|\q) - \inf \{ D(\mub \| \q) : \mub\in\hull^x\}~.
\end{align*}

Now comparing, we have from Definition~\ref{def:formalism-cond-value} and Theorem~\ref{thm:D} that the above three derivations match our definitions from \Sec{formalism} exactly.  Moreover, using this approach, we may systematically define the value of other quantities, such as $\Value(X\=x \mid Y\=y)$, which could be useful when closing a sub-submarket.  An especially interesting quantity is the value of knowledge about $\E[X]$:
\begin{align*}
&
  \Value(\E[X] \= \x;\q)
\\&\quad{}
= \inf \{ D(\E_{\omega\sim \p}[\rhob(\omega)] \,\|\, \q) : \p\in\Delta_\Omega,\; \E_{\omega\sim \p}[X(\omega)] \= \x \}.
\end{align*}
This corresponds to a trader having probabilistic information about the partial outcome, translating into belief over the possible submarkets.

} 

%
%


\section{SUFFICIENT CONDITIONS AND ROOF EXAMPLES}
\label{app:ex-roof}

Here we explore when we can and cannot achieve implicit
submarket closing, i.e., \ZeroVal, \CondVal, and
\CondPrice simultaneously, in the sudden revelation setting.  We begin
with an example in which implicit submarket closing is not possible,
and then present
sufficient conditions, followed by additional examples.

\subsection{IMPOSSIBILITY EXAMPLE}
\label{app:closing:impossible}

\begin{example}
  \label{ex:one-step-count}
  \raf{IF TIME/SPACE: add figure} 
  Consider the square market introduced in \Ex{formalism-square}
  with the observation function
  $X(\omegab) = \omega_1 + \omega_2 \in \{0,1,2\}$.  We will see
  that for this market,
the condition of Theorem~\ref{thm:one-step-0-profits}
  cannot be satisfied and therefore
  we cannot achieve \CondVal.
  Specifically, we show that there
  exists an $\s$ for which no convex function is
  consistent with $R^\hbb$.

  First note that the observation function gives rise to conditional price spaces $\hull^0 =
  \set{(0,0)}$, $\hull^1 = \conv\set{(1,0),\,(0,1)}=\{(\lambda,1-\lambda) : \lambda\in[0,1]\}$, and
  $\hull^2 = \{(1,1)\}$.
\ignore{
The resulting submarket cost functions $C^x$ are then
\begin{align*}
  C^0(\q) &= 0\enspace,
\\C^1(\q) &= 2\ln(e^{q_1/2}+e^{q_2/2})\enspace,
\\C^2(\q) &=q_1+q_2\enspace.
\end{align*}
}
We examine the value of $R^\hbb$ at three points,
\[
  \mub^0=(0,0)
  \enspace,\quad
  \mub^1=(\tfrac 1 2, \tfrac 1 2)
  \enspace,\quad
  \mub^2=(1,1)
  \enspace.
\]
  By
  Proposition~\ref{prop:one-step-roof-dual}, we have
  $\hb^x=C(\s)-C^x(\s)$, and so
\begin{align*}
  R^\hbb(\mub^{0})
&=
  R(0,0) - [C(\s)-C^0(\s)] = -C(\s),
\\
  R^\hbb(\mub^{2})
  &=
  R(1,1) - [C(\s)-C^2(\s)]
= -C(\s) + s_1 + s_2,
\\
  R^\hbb(\mub^{1})
&=
  R(\tfrac12,\tfrac12) - [C(\s)-C^1(\s)]
\\&=
  -2\ln 2 - C(\s) + 2\ln\bigParens{e^{s_1/2}+e^{s_2/2}}
\\&=
  -2\ln 2 - C(\s)
\\&\qquad{} + 2\ln\Bracks{e^{(s_1+s_2)/4} \Parens{ e^{(s_1-s_2)/4}+e^{(s_2-s_1)/4} }}
\\&=
  -C(\s) + \tfrac{s_1+s_2}{2} +
  2\ln\bigParens{\tfrac{z+z^{-1}}{2}},
\end{align*}
  where $z=e^{(s_1-s_2)/4}$. Note that $\mub^1=(\mub^0+\mub^2)/2$,
  but $R^\hbb(\mub^1)>(R^\hbb(\mub^0)+R^\hbb(\mub^2))/2$
  whenever $z+z^{-1}>2$, i.e., whenever $z>0$ and $z\ne 1$.
  From the definition of $z$ this happens whenever $s_1\ne s_2$,
  so for any such~$\s$, no convex function can
  be consistent with $R^\hbb$.
\end{example}

\subsection{SUFFICIENT CONDITIONS}
\label{app:extend-R-b}

As we saw in Example~\ref{ex:one-step-count}, there is sometimes
tension between satisfying \ZeroVal and \CondVal, and in particular, we
cannot always achieve both.  We now establish \emph{sufficient}
conditions under which we can achieve both of these goals (and hence
\CondPrice as well).  We will do this in a way that focuses on the
geometry of the sets $\hull^x$, and consequently our results will
apply regardless of the choice of $C$ and the transition state $\s$.
This not only simplifies the theory, but has practical advantages as
well; the market designer need not worry about the transition
state, and can choose $C$ independently of concerns about implicit
market closing.

In particular, we will show sufficient conditions for when $(\conv R^\hbb)$
is consistent with $R^{\hbb}$, and then apply Theorem~\ref{thm:one-step-0-profits}.
In fact, we show something stronger, by characterizing when $(\conv R^\b)$ is
consistent with $R^\b$ for \emph{all} vectors~$\b$.

Recall that a \emph{face} of a convex set $S$ is a convex subset $F\subseteq S$
such that any line segment in $S$ whose relative interior intersects $F$,
must have both of its endpoints in $F$. Our sufficient condition
requires that the sets $\hull^x$ be faces of $\hull$. This means that
elements of
$\hull^x$ cannot be obtained as convex combinations including
elements from $\hull^y$ for $y\ne x$ with non-zero weight.

We define simplices $\Delta_\X \coloneqq \{\lub\in\reals_+^\X:\: \sum_x \lambda_x \leq 1\}$
and $\Delta_k \coloneqq \{\lub\in\reals_+^k:\: \sum_i \lambda_i \leq 1\}$ where $\reals_+$ are non-negative reals.
Before
proving the sufficient condition, we state the following alternative characterization of
the face.

\begin{proposition}
\label{prop:face}
Let $F$ and $S$ be convex sets and $F\subseteq S$. Then $F$ is a face of $S$
if and only if for all $\mub\in F$, any decomposition of $\mub$ into a convex combination
over $S$ must put zero weight on points outside $F$; i.e., for
all $k\ge 1$, $\lub\in\Delta_k$ and $\mub^i\in S$
such that $\mub=\sum_{i=1}^k \lambda_i\mub^i$, we
must have that $\lambda_i=0$ for $\mub^i\not\in F$.
\end{proposition}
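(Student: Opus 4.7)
The plan is to prove both directions. The reverse direction is almost immediate: if the decomposition property holds, then for any line segment $[\mub^1, \mub^2] \subseteq S$ whose relative interior contains a point $\mub = \lambda \mub^1 + (1-\lambda)\mub^2 \in F$ with $\lambda \in (0,1)$, applying the hypothesis to this two-point convex combination forces both $\lambda$'s to sit on points in $F$, so $\mub^1, \mub^2 \in F$ and $F$ is a face.

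For the forward direction I would induct on $k$. The base case $k=1$ is trivial, since then $\mub = \mub^1$ forces $\mub^1 \in F$. For the inductive step, first dispose of the case that some $\lambda_i = 0$ by dropping the corresponding term and applying the inductive hypothesis at size $k-1$. So assume all $\lambda_i > 0$. Set $\nub \coloneqq \sum_{i\ge 2}\frac{\lambda_i}{1-\lambda_1}\mub^i$, which lies in $S$ by convexity; then $\mub = \lambda_1\mub^1 + (1-\lambda_1)\nub$ lies on the segment $[\mub^1,\nub] \subseteq S$. If $\mub^1 = \nub$, then $\mub^1 = \mub \in F$ and we may reapply induction to the combination defining $\nub$ (now $k-1$ points summing to $\nub \in F$). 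Otherwise $\mub$ is in the relative interior of the nondegenerate segment $[\mub^1,\nub]$, so the face property gives $\mub^1, \nub \in F$, and induction applied to the $k-1$-term convex combination expressing $\nub$ yields $\mub^i \in F$ for all $i \ge 2$.

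The main obstacle I expect is the bookkeeping around degenerate cases: terms with $\lambda_i = 0$ (handled by reducing $k$), the collapse $\mub^1 = \nub$ (handled by reapplying induction to the inner combination rather than invoking the face property), and the convention in $\Delta_k$ that allows $\sum_i \lambda_i \le 1$ rather than $= 1$. For the latter, the proposition only uses $\lub \in \Delta_k$ through the identity $\mub = \sum_i \lambda_i \mub^i$, and both directions go through unchanged if we either assume $\sum_i \lambda_i = 1$ (standard convex combinations, which is the nontrivial case) or pad with a fictitious zero coefficient. No step of the argument requires a strict equality in the defining constraint of $\Delta_k$.
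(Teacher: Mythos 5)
Your proof is correct, and your reverse direction coincides with the paper's. The forward direction is organized differently: you run the induction on $k$ directly on the statement, peeling off one point at a time by writing $\mub=\lambda_1\mub^1+(1-\lambda_1)\nub$ with $\nub$ the renormalized combination of the remaining points, applying the two-point face definition to the segment $[\mub^1,\nub]$, and then recursing on $\nub\in F$. The paper instead first proves an auxiliary claim (by its own induction on $k$) that $S\wo F$ is closed under convex combinations, and then treats an arbitrary decomposition of $\mub\in F$ in one step by grouping it into an $F$-part and an $(S\wo F)$-part, $\mub=\lambda_F\mub^F+\lambda_{S\wo F}\mub^{S\wo F}$, so that a single application of the face definition forces $\lambda_{S\wo F}=0$. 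Your route avoids the auxiliary lemma and keeps the case analysis local (zero weights, the degenerate collapse $\mub^1=\nub$); the paper's route isolates a reusable fact (convex combinations of points outside a face stay outside it) at the cost of a separate induction. One caution on your closing remark about $\Delta_k$: the argument does use $\sum_i\lambda_i=1$ (e.g., to get $\nub\in S$ by convexity, and in the base case $\mub=\mub^1$), and the literal statement with $\sum_i\lambda_i<1$ allowed is false (take $S=[1,2]$, $F=\set{1}$, and $1=\tfrac12\cdot 2$); the correct reading---the one your proof and the paper's proof both actually use---is that ``decomposition into a convex combination'' means weights summing to one, with the inequality in $\Delta_k$ merely permitting padding by zero-weight points.
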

\begin{proof}
Assume first that $F$ is a face. By convexity of $F$, a convex combination
of any points from $F$ lies in $F$. Also, any convex combination of points
from $S\wo F$ must lie in $S\wo F$. This is true
for $k=2$ points by the definition of the face.
For $k>2$ it follows by induction,
because, assuming $\lambda_1>0$, we can rewrite the convex combination
of $\mub_i\in S\wo F$ as
\begin{multline*}
  \lambda_1\mub_1+\dotsb+\lambda_k\mub_k
\\
{}=
  (1-\lambda_k)\Bracks{\frac{\lambda_1\mub_1+\dotsb+\lambda_{k-1}\mub_{k-1}}{\lambda_1+\dotsb+\lambda_{k-1}}}
  +\lambda_k\mub_k
\enspace.
\end{multline*}
The term in the brackets is in $S\wo F$ by the inductive hypothesis, so
the entire expression is a convex combination of $k=2$ points from $S\wo F$,
and therefore lies in $S\wo F$ by the definition of the face.
Now assume that $\mub\in F$, and consider any decomposition of $\mub$
into a convex combination over $S$. By the above reasoning,
we can collect the terms with $\mub^i\in F$ and $\mub^i\not\in F$ and
write $\mub=\lambda_F\mub^F+\lambda_{S\wo F}\mub^{S\wo F}$ where
$\lambda_F$ and $\lambda_{S\wo F}$ are the respective
sums of weights of $\mub^i\in F$ and $\mub^i\in S\wo F$, and
$\mub^F\in F$ and $\mub^{S\wo F}\in S\wo F$ are their respective
convex combinations. From the definition
of the face, we obtain $\lambda_{S\wo F}=0$.

For the opposite direction, consider any $\mub^1,\mub^2\in S$ and assume
that a point $\mub$ in the relative interior of the connecting line segment
lies in $F$, i.e., $\mub=\lambda_1\mub^1+\lambda_2\mub^2$ with $\lambda_1,\lambda_2>0$.
The condition of the proposition then implies that the endpoints
$\mub^1,\mub^2$ be in $F$, so
$F$ must be a face.
\end{proof}

\begin{proposition}
  \label{prop:extend-R-b}
  For any convex $R$ with $\dom R=\hull$, $(\conv R^\b)$ is consistent with $R^\b$ for all $\b\in\reals^\X$ if and only if
  the sets $\hull^x$ are disjoint faces of $\hull$.~\footnote{%
If $R^\b$ is not well defined, we assume that no function can be consistent with $R^\b$.%
}
\end{proposition}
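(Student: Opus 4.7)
The plan is to prove both directions by leveraging \Prop{face} to translate the face condition into a statement about convex combinations, and then exploit the definition of the convex roof via convex combinations of epigraph points.

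For the ``if'' direction, assume the $\hull^x$ are pairwise disjoint faces of $\hull$. Then $R^\b$ is well defined for every $\b\in\reals^\X$ by disjointness. Since $(\conv R^\b)\le R^\b$ always, it suffices to show the reverse inequality on $\hull^\star$. Fix $\mub\in\hull^{x_0}$. The epigraph of $(\conv R^\b)$ is the convex hull of $\epi R^\b$, so for any $t>(\conv R^\b)(\mub)$ we may write $(\mub,t)=\sum_{i=1}^k \lambda_i(\mub^i,t_i)$ with $\lub\in\Delta_k$, $\mub^i\in\hull^\star$, $t_i\ge R^\b(\mub^i)$. In particular $\mub=\sum_i \lambda_i\mub^i$ is a convex combination over $\hull^\star\subseteq\hull$. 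Because $\hull^{x_0}$ is a face of $\hull$, \Prop{face} forces $\lambda_i=0$ whenever $\mub^i\notin\hull^{x_0}$. Hence every $\mub^i$ with $\lambda_i>0$ lies in $\hull^{x_0}$, so $R^\b(\mub^i)=R(\mub^i)-b^{x_0}$, and convexity of $R$ on $\hull$ gives
\[
   t\;\ge\;\sum_i \lambda_i\bigBracks{R(\mub^i)-b^{x_0}}\;\ge\; R(\mub)-b^{x_0}\;=\;R^\b(\mub).
\]
Taking $t\downarrow(\conv R^\b)(\mub)$ yields $(\conv R^\b)(\mub)\ge R^\b(\mub)$, so consistency holds.

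For the ``only if'' direction, I argue by contrapositive, distinguishing two failure modes. First, if some $\hull^x\cap\hull^y\ne\emptyset$ with $x\ne y$, pick any $\b$ with $b^x\ne b^y$; then $R^\b$ is multiply defined on the intersection, so by the footnote no function is consistent with $R^\b$, and the ``for all $\b$'' claim fails. Second, suppose the $\hull^x$ are pairwise disjoint but some $\hull^{x_0}$ is not a face of $\hull$. By \Prop{face} there exists $\mub\in\hull^{x_0}$ with a decomposition $\mub=\sum_i \lambda_i\mub^i$, $\mub^i\in\hull$, $\lub\in\Delta_k$, such that some $\mub^{i^\star}\notin\hull^{x_0}$ has $\lambda_{i^\star}>0$. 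Since $\hull=\conv\bigl(\bigcup_{x}\hull^x\bigr)=\conv(\hull^\star)$ (each extreme point $\rhob(\omega)$ lies in some $\hull^{X(\omega)}$), we may refine each $\mub^i$ into a convex combination over $\hull^\star$, so without loss of generality every $\mub^i\in\hull^{x_i}$ for some $x_i\in\X$, and $x_{i^\star}\ne x_0$.

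Now choose $\b$ with $b^{x_0}=0$ and $b^{x_{i^\star}}=M$ for a large constant $M>0$ (fill in the other coordinates with $0$); by disjointness $R^\b$ is well defined. Using the chosen decomposition,
\[
   (\conv R^\b)(\mub)\;\le\;\sum_i \lambda_i R^\b(\mub^i)\;=\;\sum_i \lambda_i R(\mub^i)\;-\;\lambda_{i^\star}M\;-\;\sum_{i:\,x_i\notin\{x_0,x_{i^\star}\}}\!\!\lambda_i b^{x_i},
\]
which tends to $-\infty$ as $M\to\infty$, whereas $R^\b(\mub)=R(\mub)$ is finite and independent of $M$. Hence for $M$ large enough $(\conv R^\b)(\mub)<R^\b(\mub)$, so consistency fails. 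The main obstacle is the ``only if'' direction, since one must produce an explicit offending $\b$ and verify the strict inequality; the key idea is to make exactly one conditional slab $\hull^{x_{i^\star}}$ artificially cheap so that the convex roof exploits the non-face decomposition to undercut $R^\b$ at $\mub\in\hull^{x_0}$.
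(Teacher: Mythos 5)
Your proposal is correct and takes essentially the same route as the paper's proof: both directions rest on \Prop{face} together with the convex-combination representation of the roof (you phrase it via $\conv(\epi R^\b)$ plus convexity of $R$ on a block, the paper via the infimum formula condensed within blocks), and your ``only if'' construction---refine a non-face decomposition of some $\mub\in\hull^{x_0}$ over $\hull^\star$ and send the offending block's $b$-value to $+\infty$---is exactly the paper's. The only step you leave implicit is that the refinement of $\mub^{i^\star}$ must retain positive weight outside $\hull^{x_0}$; this follows immediately from convexity of $\hull^{x_0}$ and disjointness of the $\hull^x$, so it is a one-line addition rather than a gap.
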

\begin{proof}
    Suppose that the sets $\hull^x$ are disjoint faces of $\hull$, and $R$ and $\b$ are given. By
    Proposition B.2.5.1 of~\citet{urruty2001fundamentals}, we may use an alternate representation of the convex roof,
  \begin{align*}
    (\conv R^\b)(\mub)
    &
      = \inf\left\{\sum_{i=1}^k \!\lambda_i R^\b(\mub^i):\:
           k\!\ge\!1,\, \mub^i\!\!\in\!\hull^\star,\right.
\\[-4pt]
    &\qquad\quad
    \left. \lub\!\in\!\Delta_k,\,
    \sum_{i=1}^k \!\lambda_i \mub^i\!\!=\! \mub \right\}
.
  \end{align*}
  Intuitively, this expression examines all upper bounds imposed by the convexity constraints from $R^\b$ and defines $(\conv R^\b)$ as the infimum of these upper bounds.  Note that $R^\b$ is convex on each of the sets $\hull^x$ (since it is just a shifted copy of $R$ on $\hull^x$). Therefore, we may condense convex combinations within each $\hull^x$ (which only lowers the corresponding $R^\b$ values), yielding
  \begin{align}
\notag
    (\conv R^\b)(\mub)
    &
      = \inf\left\{\sum_{x\in\X} \!\lambda_x R^\b(\mub^x):\:
      \mub^x\!\!\in\!\hull^x,\right.
    \\[-4pt]
\label{eq:one-step-5}
    &\qquad\quad
    \left. \lub\!\in\!\Delta_\X,\,
    \sum_{x\in\X} \!\lambda_x\mub^x\!\!=\! \mub \right\}
.
   \end{align}
  For a given $y\in\X$, the set $\hull^y$ is a face disjoint from
  all $\hull^x$ for $x\ne y$. Thus, if $\mub\in\hull^y$,
  we obtain by \Prop{face} that the $\lub$ in the
  right hand side of \Eq{one-step-5} must have $\lambda_x=0$
  for $x\ne y$ and $\lambda_y=1$. This immediately yields
  $(\conv R^\b)(\mub) = R^\b(\mub)$.

  For the other direction, first note that if sets $\hull^x$
  are not disjoint then $R^\b$ is not well defined for all
  $\b$ and the theorem holds. Assume
  that sets $\hull^x$ are disjoint, but they are not all faces.
  Therefore, for some $y\in\X$, we have $\mub\in\hull^y$ which
  can be written as a convex combination
  $\mub=\lambda_1\mub^1+\lambda_2\mub^2$
  with $\lambda_1,\lambda_2>0$, $\mub^1, \mub^2 \in \hull$, but $\mub^1\not\in\hull^y$.
  We will argue that this implies that
  $\mub$ can be written as a convex combination
  across $\mub^x\in\hull^x$, putting non-zero weight on some $\mub^z$
  where $z\ne y$. The reasoning is as follows.
  Since $\mub^1,\mub^2\in\hull$, they can be written as convex
  combinations of $\rhob(\omega)$ across $\omega\in\Omega$.
  Collecting $\omega\in\Omega^x$ for $x\in\X$, vectors
  $\mub^1$ and $\mub^2$ can be in fact
  written as convex combinations
\[
  \mub^1 = \sum_{x\in\X} \lambda_{1,x}\mub^{1,x}
\enspace,
\quad
  \mub^2 = \sum_{x\in\X} \lambda_{2,x}\mub^{2,x}
\]
where $\mub^{1,x},\mub^{2,x}\in\hull^x$.
Collecting the matching terms, we can thus write $\mub$ as
\[
  \mub = \sum_{x\in\X} \lambda_x\mub^x
\]
where $\lambda_x=\lambda_1\lambda_{1,x}+\lambda_2\lambda_{2,x}$
and
\[
  \mub^x =
  \frac{\lambda_1\lambda_{1,x}\mub^{1,x} + \lambda_2\lambda_{2,x}\mub^{2,x}}
       {\lambda_1\lambda_{1,x}+\lambda_2\lambda_{2,x}}
  \in \hull^x
\enspace.
\]
Since $\mub^1\not\in\hull^y$, we must have $\lambda_{1,y}<1$, and thus also
$\lambda_y<1$ (because $\lambda_1>0$). Hence, there must exist some $z\ne y$ such that
$\lambda_z>0$.

To show that $(\conv R^\b)$ cannot be consistent with $R^\b$ for all~$\b$,
consider $\b$ with $b^x = 0$ for $x\neq z$ and $b^z$ equal to some large value. Thus, $\sum_x \lambda_x R^\b(\mub^x) = \sum_x \lambda_x R(\mub^x) - \lambda_z b^z$.  We may make this expression as low as desired by increasing $b^z$, and in particular, for a sufficiently large~$b^z$, we have $\sum_x \lambda_x R^\b(\mub^x) < R(\mub) = R^\b(\mub)$, so any function which is consistent with $R^\b$ will not be convex.
\end{proof}

Combining Theorem~\ref{thm:one-step-0-profits} and Proposition~\ref{prop:extend-R-b},
we have the following theorem.

\begin{theorem}
  \label{thm:one-step-all-goals}
  If the sets $\hull^x$ are disjoint faces of $\hull$,
  then \CondPrice, \CondVal, and \ZeroVal are achieved with
  $\AdvanceState$ as the identity and
  $\AdvanceCost$ outputting the conjugate
of $\tR = (\conv R^{\hbb})$.
\end{theorem}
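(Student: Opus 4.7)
The plan is to combine the two preceding results essentially as a two-line corollary. First I would invoke \Prop{extend-R-b} with the specific choice $\b = \hbb$. Since the hypothesis of the theorem assumes that the sets $\hull^x$ are pairwise disjoint faces of $\hull$, the proposition guarantees that the convex roof $(\conv R^\b)$ is consistent with $R^\b$ for \emph{every} vector $\b\in\reals^\X$ for which $R^\b$ is well defined. Disjointness of the $\hull^x$ ensures $R^\b$ is well defined for every $\b$, in particular for $\b = \hbb$ with $\hbb$ defined as in \Eq{one-step-b-0}. Hence $(\conv R^{\hbb})$ is consistent with $R^{\hbb}$.

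Next I would apply the ``if'' direction of \Thm{one-step-0-profits}: whenever $(\conv R^{\hbb})$ is consistent with $R^{\hbb}$, the protocol with $\AdvanceState$ equal to the identity and $\AdvanceCost$ outputting the conjugate of $\tR = (\conv R^{\hbb})$ simultaneously satisfies \CondPrice, \CondVal, and \ZeroVal. Concatenating the two implications yields exactly the conclusion of \Thm{one-step-all-goals}.

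There is essentially no obstacle here beyond verifying that the hypotheses of the two quoted results are met. The only mildly delicate point is well-definedness of $R^{\hbb}$: this requires that each $\mub\in\hull^\star$ belong to a unique $\hull^x$, which is exactly the disjointness part of the hypothesis. Faceness of each $\hull^x$ then supplies the structural property that \Prop{extend-R-b} needs to guarantee that the convex roof does not strictly undercut $R^{\hbb}$ at any point of $\hull^\star$. Since both halves of the hypothesis (disjointness and faceness) are assumed, the proof reduces to citing \Prop{extend-R-b} followed by \Thm{one-step-0-profits}, with no additional calculation required.
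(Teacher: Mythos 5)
Your proposal is correct and matches the paper's own argument, which states the theorem precisely as a corollary obtained by combining \Prop{extend-R-b} (disjoint faces imply $(\conv R^\b)$ is consistent with $R^\b$ for every $\b$, hence for $\hbb$) with the ``if'' direction of \Thm{one-step-0-profits}. Your remark about well-definedness of $R^{\hbb}$ following from disjointness is a fine, correct detail, though the paper treats it implicitly.
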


\subsection{BINARY-PAYOFF LCMMS AND THE SIMPLEX}
\label{app:suff:examples}

Two key examples studied in this paper are the LMSR on the simplex and
LCMMs. In this section, we show that the sufficient condition
introduced in the previous section holds for LCMMs with binary payoffs
when the payoffs of one submarket are observed, as well as for any
observations on a simplex.

We will argue by \Thm{one-step-all-goals}, showing that the sets
$\hull^x$ are exposed faces of $\hull$. Recall that $F$ is an
\emph{exposed face} of a convex set $S$ if $F$ is the set of
maximizers of some linear function over $S$. The exposed face is
always a face~\cite[][page 162]{Rockafellar70}

Instead of working with $\hull^x$, it in fact suffices
to work with $\Omega^x$. Inspired by the definition of an exposed
face, we define an ``exposed event'' as follows.
\begin{definition}
\label{def:exposed}
An event $\event\subseteq\Omega$ is called \emph{exposed} if it is
the set of maximizers of some linear function of $\rhob(\omega)$, i.e.,
if there exists a vector $\v\in\reals^K$ such that
\[
  \event=\argmax_{\omega\in\Omega} \Bracks{\v\inprod\rhob(\omega)}
\enspace.
\]
\end{definition}
It is immediate that if $\Omega^x$ is an exposed event,
then $\hull^x$ is an exposed face disjoint from
$\hull^y$ for any $y\ne x$. Combining this with
\Thm{one-step-all-goals} yields the following theorem.
\begin{theorem}
\label{thm:exposed}
If all events $\Omega^x$ are exposed,
  then \CondPrice, \CondVal, and \ZeroVal are achieved with
  $\AdvanceState$ as the identity and
  $\AdvanceCost$ outputting the conjugate
of $\tR = (\conv R^{\hbb})$.
\end{theorem}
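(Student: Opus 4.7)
The plan is to derive \Thm{exposed} as a direct corollary of \Thm{one-step-all-goals} by verifying that the hypothesis of exposed events forces the sets $\hull^x$ to be pairwise disjoint faces of $\hull$. The paper's preamble to this theorem sketches the argument; I would formalize it in two short steps.

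First, I would argue that each $\hull^x$ is an exposed face of $\hull$. By assumption, there is a vector $\v^x\in\reals^K$ such that $\Omega^x=\argmax_{\omega\in\Omega}\bigBracks{\v^x\inprod\rhob(\omega)}$. Let $M^x$ denote this maximum. Since a linear function on $\hull=\conv\set{\rhob(\omega)}_{\omega\in\Omega}$ attains its maximum exactly on the convex hull of the maximizers of the same linear function over $\set{\rhob(\omega)}_{\omega\in\Omega}$, we obtain
\defline{
   \argmax_{\mub\in\hull}\bigBracks{\v^x\inprod\mub}
   \;=\;
   \conv\set{\rhob(\omega):\:\omega\in\Omega^x}
   \;=\;\hull^x\enspace,
}
so $\hull^x$ is an exposed face of $\hull$, hence a face (see Rockafellar, p.~162, as cited in the preceding paragraph).

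Second, I would check pairwise disjointness. Fix $x\ne y$ and take any $\mub\in\hull^y$. Then $\mub$ is a convex combination of vectors $\rhob(\omega')$ with $\omega'\in\Omega^y\subseteq\Omega\wo\Omega^x$, each of which satisfies $\v^x\inprod\rhob(\omega')<M^x$ strictly by definition of $\Omega^x$ as the \emph{exact} argmax. Taking the convex combination, $\v^x\inprod\mub<M^x$, so $\mub\notin\hull^x$. Hence $\hull^x\cap\hull^y=\emptyset$.

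With these two facts in hand, the sets $\set{\hull^x}_{x\in\X}$ are disjoint faces of $\hull$, and the conclusion follows immediately from \Thm{one-step-all-goals}. I do not expect any real obstacle here: the argument is essentially a bookkeeping exercise transferring the ``exposedness'' from $\Omega^x\subseteq\Omega$ (a finite set) to $\hull^x\subseteq\hull$ (its convex hull), which is routine because linearity commutes with convex combinations.
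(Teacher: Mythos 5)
Your proposal is correct and follows the paper's own route exactly: the paper likewise notes that exposedness of each $\Omega^x$ makes the sets $\hull^x$ exposed (hence genuine) faces of $\hull$ that are pairwise disjoint, and then invokes \Thm{one-step-all-goals}. You have merely written out in full the step the paper calls ``immediate,'' and both halves of your verification (argmax over $\hull$ equals $\hull^x$, and strict suboptimality of $\rhob(\omega')$ for $\omega'\notin\Omega^x$ forcing disjointness) are sound.
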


We next show how \Thm{exposed} can be used to argue
that submarket closing is possible in binary-payoff LCMMs
and on a simplex.

\begin{example}\emph{Submarket closing in binary-payoff LCMMs.}
We need to argue that the events corresponding
to submarket observations in a binary-payoff
($\rhob(\omega)\in\set{0,1}^K$ for all $\omega\in\Omega$) LCMMs  are exposed. We use the same construction
as in the proof of \Thm{tight}.
Let $g$ be a submarket in a binary-payoff LCMM.
Let $\x\in\X_g$ and $\Omega^\x\coloneqq\set{\rhob_g=\x}$.
We need to show that $\Omega^\x$ is exposed.
Consider $\v\in\reals^K$ with the components
\[
 v_i=
 \begin{cases}
 1 &\text{if $i\in g$ and $x_i=1$,}
\\
-1 &\text{if $i\in g$ and $x_i=0$,}
\\
0  &\text{if $i\not\in g$.}
\end{cases}
\]
Let $k$ be the number of $1$s in $\x$.
Now, as in the proof of \Thm{tight},
we have $\v\inprod\rhob(\omega)=k$ for
$\omega\in\Omega^\x$ and
$\v\inprod\rhob(\omega)\le k-1$ for
$\omega\not\in\Omega^\x$.
Thus
indeed $\Omega^\x$ is exposed, and
therefore, by \Thm{exposed}, implicit
submarket closing is always possible.
\end{example}

\begin{example}\emph{Submarket closing on a simplex.}
We show that all events on a simplex are
exposed and thus any random variable
allows implicit submarket closing
by \Thm{exposed}.
Recall that in a market on a simplex,
such as LMSR, we have $\Omega=[K]$ and
$\rho_i(\omega)=\1[i=\omega]$. Let
$\event\subseteq\Omega$ be an arbitrary
event. To see that $\event$ is exposed,
consider $\v\in\reals^K$ with the
components $v_i=\1[i\in\event]$. We have
\[
  \v\inprod\rhob(\omega)=v_\omega=\1[\omega\in\event]
\enspace.
\]
Thus,
$\v\inprod\rhob(\omega)=1$ for
$\omega\in\event$ and
$\v\inprod\rhob(\omega)=0$ for
$\omega\not\in\event$, showing
that $\event$ is exposed.
\end{example}

\subsection{WHEN THE SETS \texorpdfstring{$\hull^x$}{Mx} ARE NOT FACES}
\label{app:ex-roof-nec}

It is worth noting that the condition in Theorem~\ref{thm:one-step-all-goals}
that requires the sets $\hull^x$ to be disjoint faces is merely sufficient and not necessary.
In Figure~\ref{fig:triangle} we give a pictorial example in
two-dimensional price space in which
one of the sets, $\hull^4$,
is not a face of $\hull$, but it is still possible to achieve \CondPrice, \CondVal, and \ZeroVal.

\begin{figure}
  \centering
  \scalebox{1.2}{%
  \begin{tikzpicture}[scale=.8,node distance=1cm]
    \draw (-1,0) node {$\bullet$} -- ++(0.5,0.5) node {$\bullet$};
    \draw (3,0) node {$\bullet$} -- ++(-0.5,0.5) node {$\bullet$};
    \draw (0.5,-2) node {$\bullet$} -- ++(1,0) node {$\bullet$};
    \node at (1.4,-0.5) {$\bullet \; \hull^4$};
    \node at (-1,0.5) {$\hull^1$};
    \node at (3.1,0.5) {$\hull^2$};
    \node at (1,-2.3) {$\hull^3$};
  \end{tikzpicture}%
  }
  \caption{Example showing that the conditions of Theorem~\ref{thm:one-step-all-goals} are not always necessary.}
  \label{fig:triangle}
\end{figure}
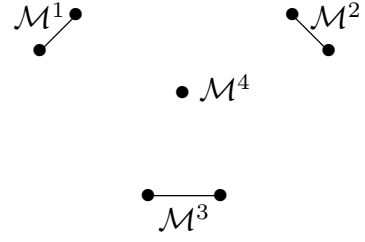


Consider first a market with conditional price spaces $\hull^1$,
$\hull^2$, and $\hull^3$ as shown, but \emph{not} $\hull^4$.  The three sets
$\hull^1,\hull^2,$ and $\hull^3$ are disjoint faces of $\hull$ (the convex
hull of these sets), and hence Theorem~\ref{thm:one-step-all-goals}
applies and \CondPrice, \CondVal and \ZeroVal are satisfied by setting
the new cost function to the conjugate of $\tR = (\conv R^{(\hb^1,\hb^2,\hb^3)})$.
 By construction of $\hbb$ and $\tR$,
the points $(\hmub^x,\tR(\hmub^x))$ for $x\in\set{1,2,3}$ lie on the tangent of $R$
with the slope $\s$, and this same hyperplane is
also a tangent of $\tR$ with the slope $\s$.

Now consider a market with conditional price spaces $\hull^1$,
$\hull^2$, $\hull^3$, and $\hull^4$, as in the figure.  We will argue
that \CondPrice, \CondVal, and \ZeroVal are satisfied for this market
using the conjugate of the same function $\tR$ used
above. First observe that the geometry of $\hull^1,\hull^2,\hull^3,$
and $\hull^4$ implies that regardless of the specific conditional
price vectors $\hmub^x\in\hull^x$ for $x\in\set{1,2,3}$, we always
have that $\hmub^4$ is in the convex hull of $\hmub^1,\hmub^2,$ and $\hmub^3$.
Now by convexity of $\tR$,
the fact that the tangent to $\tR$ with slope $\s$ contains
$(\hmub^x,\tR(\hmub^x))$ for $x\in\set{1,2,3}$
implies that this tangent must also contain the point
$(\hmub^4,\tR(\hmub^4))$. Thus, setting $b^4=R(\hmub^4)-\tR(\hmub^4)$,
we obtain that $\tR$ is consistent with $R^{(\hb^1,\hb^2,\hb^3,b^4)}$
(for the same $\hb^1$, $\hb^2$, and $\hb^3$ as above) which by
\Lem{one-step-condval} guarantees \CondPrice and \CondVal. Since
$(\hmub^4,\tR(\hmub^4))$ is on the tangent, \ZeroVal holds too.
\jenn{In the long version, we may want a lemma about \ZeroVal and the
  tangent since we refer to this informally in lots of places.}

\miro{Killing additional examples.
\\
~\\
An example central to our motivation is the LMSR on the simplex.  One can easily apply Proposition~\ref{prop:extend-R-b} to this case, since any partial information $X$ will simply partition the simplex.  The following example illustrates our construction in this case.
\begin{example}
  \label{ex:one-step-lmsr}
  For the LMSR in Example~\ref{ex:formalism-lmsr}, we might consider
  some $X:\Omega\to[k]$, where $[k]\coloneqq\{1,\ldots,k\}$, which defines a
  $k$-partition of the outcomes into $\Omega^1,\ldots,\Omega^k$.  This
  would correspond to some ``partial'' observation, such as revealing
  the first digit of the winning lottery number.  In this case, we
  can leverage Proposition~\ref{prop:one-step-roof-dual} to write
  \begin{align*}
    \tC(\q) &=\ln\sum_{\omega\in\Omega} e^{s_\omega} + \max_{x\in[k]}
    \left[\ln\sum_{\omega\in\Omega^x} e^{q_\omega} -
      \ln\sum_{\omega\in\Omega^x} e^{s_\omega}\right].
  \end{align*}
  The roof construction essentially breaks the market into several
  LMSR markets, one for each $\Omega^x$.  In fact, one can check that
  $\hmub^x\coloneqq \tpb(X\=x;\s)$ is the vector with $\hmu_\omega =
  e^{s_\omega}/\sum_{\omega'\in\Omega^x}e^{s_{\omega'}}$ for
  $\omega\in\Omega^x$ and 0 otherwise, which is just the probability
  vector $\p(\s)$ conditioned (in the usual probabilistic sense) on
  $X\=x$.  Moreover, the starting price immediately after the
  transition to $\tC$ is the ``superposition'' described by $\tpb(\s)
  = \conv\{\hmub^1,\ldots,\hmub^x\}$.  As in the previous example, a
  trader can collapse the price to $\mub\in\hull^x$ by making a small
  purchase of a security for any outcome $\omega\in\Omega^x$, thus
  implicitly closing the submarket for $X$.
\end{example}
To conclude we remark that it is tempting to think that the $\S\cap\hull^\star=\emptyset$ condition holds whenever the sets $\hull^x$ are extreme in $\hull$, meaning $\hull\setminus\hull^x$ is convex for each $x\in \X$.  This is not the case, as the following example shows: take $\rho(\omega) = \omega$ for $\omega \in \{1,2,3,4\}$, and $X$ such that
\[
  X(\omega)=
\begin{cases}
  0&\text{if $\omega\in\set{1,2}$,}
\\
  1&\text{if $\omega\in\set{3,4}$,}
\end{cases}
\]
i.e., $\hull^0 = [1,2]$ and $\hull^1 = [3,4]$. Both of these sets are extreme in $\hull = [1,4]$, yet not only $\S\cap\hull^\star\ne\emptyset$, but any strictly convex function on $\hull$ will fail the consistency condition of Theorem~\ref{thm:one-step-0-profits} whenever $\p(\s) \in [1,2)\cup(3,4]$.
\\
~\\
{}[[XOR, count example]]\\
Finally, we note that in some situations, achieving \ZeroVal while maintaining \CondVal may not be possible, such as in the following example [[count example on the square]].  In these situations, it may be beneficial to relax \CondVal to \CondPrice, allowing one to still obtain \ZeroVal via the convex roof construction [[illustrate what this looks like for the count example]].  As we saw above, this approach still preserves the worst-case loss bound.}


\section{BOUNDS ON WORST-CASE LOSS}
\label{app:wcl}

In this section, we show that the mechanisms studied in this paper
maintain an important feature of cost-function-based market makers: a
finite bound on the loss of the market maker which is guaranteed to
hold no matter what trades are executed or which outcome $\omega$
occurs.  In particular, we show that the worst-case loss bound of a
market maker using the initial cost function ($C$ for sudden
revelation market makers, $\C(\cdot;t_0)$ for gradual decrease market
makers) is maintained.\footnote{We actually show something slightly
  stronger: for every outcome $\omega$, the worst case loss of the
  market maker conditioned on the true outcome being $\omega$ is
  maintained.}

\miro{Maybe strengthen reasoning below, so that our theorems depend on
  the final $\omega$?}\jenn{Couldn't decide, so I just added a
  footnote for now.}

For a standard cost-function-based market maker with cost function $C$,
the worst-case market maker loss is simply
\begin{align*}
&\WCL(C;\s^\ini)
\\&\quad{}
   \coloneqq
   \!\!
   \sup_{\omega\in\Omega,\r \in \reals^K}
   \bigBracks{\rhob(\omega)\inprod\r - C(\s^\ini+\r) + C(\s^\ini)}
       \label{eq:worst-case-loss}
\end{align*}
where $\s^\ini$ is the initial state of the market.  The term inside
the supremum is the difference between the amount the market maker
must pay traders and the amount collected from traders by the market
maker when the cumulative trade vector is $\r$ and the outcome is
$\omega$.  Our assumption that $\dom R=\hull$, where $R$ is the
conjugate of $C$, guarantees that
$\WCL(C;\s^\ini)$ is always finite~\citep{Abernethy13}.  In
particular, it is easy to see from \Eq{D:mu} of \Thm{D} that
\[
\WCL(C;\s^\ini) = \max_{\omega \in \Omega} D(\rhob(\omega), \s^\ini)
\enspace.
\]
We show that the mechanisms introduced
in Sections~\ref{sec:one-step} and~\ref{sec:gradual} maintain this bound.

\subsection{SUDDEN REVELATION MARKET MAKERS}
\label{app:wcl:one-step}

For sudden revelation market makers (see \Prot{switch}), the
worst-case market maker loss is
\begin{align}
&
  \WCL(C,\AdvanceCost,\AdvanceState;\s^\ini)
\notag
\\
&\quad{}\coloneqq
  \!\!
  \sup_{\omega\in\Omega,\r\in\reals^K,\trb\in\reals^K}\bigl[
     \rhob(\omega)\inprod(\r+\trb)
     -C(\s^\ini+\r)
\notag
\\
&\qquad\qquad\qquad{}
     +C(\s^\ini)-\tC(\tsb+\trb)+\tC(\tsb)
     \bigr]
\label{eq:worst-case-loss-twiddle}
\end{align}
where $\tC = \AdvanceCost(\s^\ini + \r)$ and $\tsb =
\AdvanceState(\s^\ini + \r)$.  Note that $\tC$ and $\tsb$
depend on $\r$ although we do not write this dependence explicitly.
The worst-case loss does not depend on the switch time $t$.

We now bound this worst case loss for our construction in
\Sec{one-step}, with $\tsb$ equal to the state $\s$ at the switch
time and $\tC$ defined to be the conjugate of $\tR = (\conv R^{\hbb})$,
where $\hbb$ depends on $\s$.  We show that the loss of this
market maker is no worse than that of a market maker using the initial
cost function $C$.

\newcommand{\myqi}{\s^{\ini}}
\newcommand{\myqf}{\tsb^{\fin}}
\begin{theorem}
  \label{thm:one-step-wcl}
  If $\AdvanceState(\s)=\s$ and $\AdvanceCost(\s)$ is defined as in \Thm{one-step-0-profits},
  then for any bounded-loss, no-arbitrage cost function $C$ and any
  initial state $\myqi$,
 \[  \thinskips
     \WCL(C\!,\AdvanceCost,\AdvanceState;\myqi)
     \le
     \WCL(C;\myqi)\,
     .\]
\end{theorem}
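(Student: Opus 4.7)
The plan is to reduce the worst-case loss of the adaptive market to a pointwise comparison between $\tR$ and $R$ on the payoff vectors $\rhob(\omega)$. Two ingredients are needed: (i) the identity $\tC(\s) = C(\s)$, which follows from Proposition~\ref{prop:one-step-roof-dual} because $\hb^x = C(\s)-C^x(\s)$ makes every term in the max expression $\tC(\s)=\max_x[\hb^x+C^x(\s)]$ equal to $C(\s)$; and (ii) the non-negativity $\hb^x = D(\hmub^x\|\s)\ge 0$ from \Eq{one-step-b-0}.

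Starting from \Eq{worst-case-loss-twiddle} with $\tsb = \s = \s^{\ini}+\r$, the terms $-C(\s^{\ini}+\r) + \tC(\tsb)$ cancel by (i). For each fixed $\omega$ and $\r$, I change variables to $\y \coloneqq \r+\trb$ and then to $\z \coloneqq \s^{\ini}+\y$, so the inner supremum over $\trb$ evaluates by conjugacy of $\tC$ and $\tR$ to
\[
 \sup_{\trb\in\reals^K}\BigBracks{\rhob(\omega)\cdot(\r+\trb) - \tC(\s^{\ini}+\r+\trb)} = \tR(\rhob(\omega)) - \rhob(\omega)\cdot\s^{\ini}.
\]
Note that $\tR$ still depends on $\r$ through the choice of $\hbb$ at the switch state $\s^{\ini}+\r$, but this dependence is controlled uniformly in the next step. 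The worst-case loss therefore reduces to
\[
 \WCL(C,\AdvanceCost,\AdvanceState;\s^{\ini}) = C(\s^{\ini}) + \sup_{\omega,\r}\BigBracks{\tR(\rhob(\omega))-\rhob(\omega)\cdot\s^{\ini}}.
\]

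Since $\rhob(\omega)\in\hull^{X(\omega)}$, the definition of $R^{\hbb}$ gives $R^{\hbb}(\rhob(\omega)) = R(\rhob(\omega)) - \hb^{X(\omega)} \le R(\rhob(\omega))$ by (ii), and the convex roof $\tR = (\conv R^{\hbb})$ satisfies $\tR\le R^{\hbb}$ pointwise, so $\tR(\rhob(\omega)) \le R(\rhob(\omega))$ irrespective of $\r$. Substituting yields
\[
 \WCL(C,\AdvanceCost,\AdvanceState;\s^{\ini}) \le \sup_{\omega}\BigBracks{R(\rhob(\omega)) + C(\s^{\ini}) - \rhob(\omega)\cdot\s^{\ini}} = \sup_\omega D(\rhob(\omega)\|\s^{\ini}) = \WCL(C;\s^{\ini}).
\]
The main obstacle is just the bookkeeping around the state-dependence of $\tC$ and $\hbb$ on $\s^{\ini}+\r$; once the identity $\tC(\s)=C(\s)$ telescopes the $\r$-dependent cost terms, the remainder is routine conjugacy plus the convex-roof inequality. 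Notably, this argument does not invoke consistency of $\tR$ with $R^{\hbb}$ (which was needed in Theorem~\ref{thm:one-step-0-profits} for \ZeroVal), so the worst-case loss bound holds unconditionally.
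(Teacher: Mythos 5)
Your proof is correct and follows essentially the same route as the paper's: cancel the switch-time terms via $\tC(\s)=C(\s)$ from \Prop{one-step-roof-dual}, evaluate the remaining supremum by conjugacy to obtain $\tR\bigParens{\rhob(\omega)}$, and bound it by $R\bigParens{\rhob(\omega)}$ using $\hb^x\ge 0$, yielding $\WCL(C,\AdvanceCost,\AdvanceState;\s^\ini)\le\max_{\omega}D\bigParens{\rhob(\omega)\|\s^\ini}=\WCL(C;\s^\ini)$. Your remark that only the pointwise inequality $\tR\le R^{\hbb}\le R$ is needed (rather than the equality $\tR\bigParens{\rhob(\omega)}=R\bigParens{\rhob(\omega)}-\hb^x$ written in the paper, which tacitly presumes consistency) is a small but valid sharpening of the same argument, not a different one.
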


\begin{proof}
  Let $\myqf$ be the final state of the market and $\s$ be the
  market state at the switch time $t$, as in
  \Prot{switch}.  Then from Proposition~\ref{prop:one-step-roof-dual},
  $\tC(\tsb) = \tC(\s)=C(\s)$ and
  \begin{align*}
  &
    \WCL(C,\AdvanceCost,\AdvanceState;\myqi)
  \\&\quad{}
    =
    \adjustlimits\max_{\omega\in\Omega} \sup_{\myqf
      \in \reals^K} \; \Bigl[ \rhob(\omega) \cdot (\myqf-\myqi)
\\[-0.5\baselineskip]
&\qquad\qquad\qquad\qquad\qquad{}
+ C(\myqi) - \tC(\myqf) \Bigr]
\enspace.
  \end{align*}
By conjugacy we have
\[
\sup_{\myqf \in \reals^K} \Bracks{
  \rhob(\omega) \cdot \myqf -
  \tC(\myqf)} = \tR\bigParens{\rhob(\omega)}
\enspace.
\]
By the definition of $\hbb$,
\[
\tR\bigParens{\rhob(\omega)} = R\bigParens{\rhob(\omega)} - D(\hmub^x\|\s) \leq
  R\bigParens{\rhob(\omega)}
\]
for some
$\hmub^x\in\p(\Omega^x;\s)$ where
$x \in \X$ is such that $\omega \in \Omega^x$.  Putting this together, we obtain
the bound
\begin{align}
\notag
&
     \WCL(C,\AdvanceCost,\AdvanceState;\myqi)
\\
\notag
&\quad{}
    \leq
    \max_{\omega\in\Omega} \Bracks{R\bigParens{\rhob(\omega)} + C(\myqi)
    - \rhob(\omega)\cdot \myqi}
\\
\tag*{\qed}
&\quad{}
    = \WCL(C;\myqi)
\enspace.
\end{align}
\renewcommand{\qed}{}
\end{proof}

\subsection{GRADUAL DECREASE LCMMS}
\label{app:wcl:gradual}

For gradual decrease market makers (see \Prot{time-sensitive}), the worst-case
market maker loss can be written as
\begin{align}
&
  \WCL(\C,\AdvanceState;\s^0,t^0)
\notag
\\
&\quad{}\coloneqq
  \!\!
  \sup_{\substack{
       \omega\in\Omega,N\ge0,\set{\r^i}_{i=1}^N,\set{t^i}_{i=1}^N\\
       \text{with }t^0\le t^1\le\dotsb\le t^N}}
     \left[
     \sum_{i=1}^N\bigl[
       \rhob(\omega)\inprod\r^i
     \right.
\notag
\\
&\qquad\qquad\qquad{}
       -\C(\tsb^{i-1}+\r^i;t^i)+\C(\tsb^{i-1};t^i)
     \bigr]
     \Bigr]
  \label{eq:wcl:gradual}
\end{align}
where $\tsb^{i-1} = \AdvanceState(\s^{i-1}; t^{i-1}, t^i)$.

We next show that the worst-case loss of the gradual decrease LCMM
developed in \Sec{gradual} is no worse than that of a market maker
using the initial cost function $\C(\cdot;t^0)$.

\begin{theorem}
  For the gradual decrease LCMM with corresponding function
  \AdvanceState\ and cost $\C$ and any differentiable non-increasing
  information-utility schedules $\beta_g$, for any initial state $\s_0$
  and time $t_0$,
\[
 \WCL(\C,\AdvanceState;\s^0,t^0)
 \le
 \WCL(C^0;\s^0)
\]
where $C^0\coloneqq\C(\cdot;t^0)$.
\end{theorem}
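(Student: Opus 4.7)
I would fix an outcome $\omega\in\Omega$ and bound the conditional loss (the bracketed expression inside the supremum in \Eq{wcl:gradual} for this $\omega$) by $D^{t^0}(\rhob(\omega)\,\|\,\s^0)$ uniformly over $N$, the times $\set{t^i}$, and the trades $\set{\r^i}$; the theorem then follows by maximizing over $\omega$. Introduce the single-outcome potential $\phi(\q;t)\coloneqq\rhob(\omega)\inprod\q-\C(\q;t)$. Substituting $\r^i=\s^i-\tsb^{i-1}$, the conditional loss rewrites as $\sum_{i=1}^N[\phi(\s^i;t^i)-\phi(\tsb^{i-1};t^i)]$, which I rearrange into
\[
\phi(\s^N;t^N)-\phi(\s^0;t^0)-\sum_{i=1}^N\bigBracks{\phi(\tsb^{i-1};t^i)-\phi(\s^{i-1};t^{i-1})}\enspace.
\]

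The heart of the proof is showing that each state update leaves $\phi$ non-decreasing, i.e.\ $\phi(\tsb^{i-1};t^i)\ge\phi(\s^{i-1};t^{i-1})$. For a single transition from $(\s,t)$ to $(\tsb,\ttt)$, Step~2 in the proof of \Thm{time-sensitive} gives that the arbitrage minimizer $\etab^\star$ at $(\s,t)$ remains optimal at $(\tsb,\ttt)$, and the construction of $\AdvanceState$ enforces $\tsb_g+\deltab_g^\star=\talpha_g(\s_g+\deltab_g^\star)$. Combined with the identity $\ttC_g(\talpha_g\u_g)=\talpha_g C^t_g(\u_g)$ (derived in the proof of \Thm{time-sensitive}) and the LCMM representation from \Thm{lcmm}, a direct expansion yields
\[
\phi(\tsb;\ttt)-\phi(\s;t)=\sum_{g\in\G}(1-\talpha_g)\BigBracks{C^t_g(\s_g+\deltab_g^\star)-\rhob_g(\omega)\inprod(\s_g+\deltab_g^\star)}\enspace.
\]
Each factor $(1-\talpha_g)\ge 0$ since $\beta_g$ is non-increasing, and by Fenchel's inequality the bracket is bounded below by $-R^t_g(\rhob_g(\omega))=-\beta_g(t)R_g(\rhob_g(\omega))$, which equals $0$ under the standard vertex normalization $R_g(\rhob_g(\omega))=0$ (LMSR, its multi-binary variant, and the piecewise-linear cost all assign zero generalized entropy to vertex payoffs, since a point mass has zero entropy). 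Hence every bracket is non-negative and monotonicity follows.

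Given monotonicity, the conditional loss is at most $\phi(\s^N;t^N)-\phi(\s^0;t^0)$. Fenchel's inequality gives $\phi(\s^N;t^N)\le R^{t^N}(\rhob(\omega))=\sum_g\beta_g(t^N)R_g(\rhob_g(\omega))$, which under the same vertex normalization equals $0=R^{t^0}(\rhob(\omega))$. Therefore the conditional loss is at most $R^{t^0}(\rhob(\omega))+\C(\s^0;t^0)-\rhob(\omega)\inprod\s^0=D^{t^0}(\rhob(\omega)\,\|\,\s^0)$, and maximizing over $\omega$ (noting uniformity in the trade schedule) yields $\WCL(\C,\AdvanceState;\s^0,t^0)\le\WCL(C^0;\s^0)$. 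The hard part is establishing the monotonicity equality for $\phi(\tsb;\ttt)-\phi(\s;t)$: it hinges on the engineered property from \Thm{time-sensitive} that $\AdvanceState$ preserves the optimal arbitrage bundle $\etab^\star$ across the cost-function switch, together with the conventional normalization of each $R_g$ at its vertex payoffs.
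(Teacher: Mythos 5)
Your telescoping identity for $\phi(\q;t)=\rhob(\omega)\inprod\q-\C(\q;t)$ and the per-switch expansion $\phi(\tsb;\ttt)-\phi(\s;t)=\sum_{g\in\G}(1-\talpha_g)\bigBracks{C^t_g(\s_g+\deltab^\star_g)-\rhob_g(\omega)\inprod(\s_g+\deltab^\star_g)}$ are both correct (they use exactly the facts from Step~2 of \Thm{time-sensitive} and \Thm{lcmm} that you cite). The gap is the monotonicity claim itself: you conclude each bracket is non-negative by invoking a ``standard vertex normalization'' $R_g(\rhob_g(\omega))=0$, and you use the same normalization again to equate $R^{t^N}(\rhob(\omega))$ with $R^{t^0}(\rhob(\omega))$ at the end. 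The theorem makes no such assumption: the $C_g$ are arbitrary bounded-loss, arbitrage-free cost functions, and their conjugates need not vanish (or even be $\le 0$) at vertex payoffs --- e.g.\ replacing $C_g$ by $C_g+c$ leaves all trade prices unchanged but shifts $R_g$ by $-c$, and Fenchel's inequality then only gives the bracket $\ge -\beta_g(t)R_g(\rhob_g(\omega))$, which can be strictly negative and is attained in the limit as the submarket price approaches $\rhob_g(\omega)$. So $\phi(\tsb^{i-1};t^i)\ge\phi(\s^{i-1};t^{i-1})$ is simply false for general LCMMs, and as written your proof only covers a normalized subclass.

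The fix is small, and carrying it out essentially reproduces the paper's argument. Instead of claiming each bracket is $\ge 0$, keep the Fenchel lower bound: $\Delta_i\coloneqq\phi(\tsb^{i-1};t^i)-\phi(\s^{i-1};t^{i-1})\ge-\sum_g\bigParens{\beta_g(t^{i-1})-\beta_g(t^i)}R_g(\rhob_g(\omega))$, which telescopes over $i$ to $-\sum_g\bigParens{1-\beta_g(t^N)}R_g(\rhob_g(\omega))$; combined with $\phi(\s^N;t^N)\le R^{t^N}(\rhob(\omega))=\sum_g\beta_g(t^N)R_g(\rhob_g(\omega))$, the normalization-dependent terms cancel exactly and the conditional loss is bounded by $\Rx(\rhob(\omega))+\C(\s^0;t^0)-\rhob(\omega)\inprod\s^0=D^0(\rhob(\omega)\|\s^0)$, as needed. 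Note that $D^t(\rhob(\omega)\|\q)=R^t(\rhob(\omega))-\phi(\q;t)$, so this corrected bookkeeping is the paper's proof in disguise: the paper works directly with the divergences, which are invariant to the additive shifts that break your argument, and its key step is the shift-invariant inequality $D^{i+1}(\mub\|\tsb^i)\le D^i(\mub\|\s^i)$ for $\mub\in\hull$ (nonnegative submarket divergences scaled by $\talpha_g\le1$, per \Thm{time-sensitive}), applied with $\mub=\rhob(\omega)$ and telescoped.
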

\begin{proof}
In the context of \Prot{time-sensitive}, let $C^i$
denote $\C(\cdot;t^i)$, and $R^i$ and $D^i$ denote
the corresponding conjugate and divergence. First, note that
by \Thm{time-sensitive}, for any $i$ and any $\mub\in\hull$, for
suitable $\deltab^\star$ and $\etab^\star$,
\begin{align}
&
  D^{i+1}(\mub\|\tsb^i)
\notag
\\
&\quad{}
  =
  \sum_{g\in\G}
  \frac{\beta_g(t^{i+1})}{\beta_g(t^i)}
  D^i_g(\mub_g\|\s^i_g+\deltab_g^\star)
  +(\A^\top\mub-\b)\inprod\etab^\star
\notag
\\
&\quad{}
  \le
  \sum_{g\in\G}
  D^i_g(\mub_g\|\s^i_g+\deltab_g^\star)
  +(\A^\top\mub-\b)\inprod\etab^\star
\notag
\\
&\quad{}
  =
  D^i(\mub\|\s^i)
\enspace.
\label{eq:WCL:3}
\end{align}
The last equality follows from \Thm{lcmm}\ref{lcmm:D}.

We can bound the expression inside the supremum in \Eq{wcl:gradual} as
{\thinskips%
\begin{align}
&
\sum_{i=1}^N
     \BigBracks{
       \rhob(\omega)\inprod\r^i
       -C^i(\tsb^{i-1}\!+\r^i)+C^i(\tsb^{i-1})
     }
\notag
\\
&\quad{}=
\sum_{i=1}^N
     \Bigl[
       R^i\bigParens{\rhob(\omega)} + C^i(\tsb^{i-1})
       -\rhob(\omega)\inprod\tsb^{i-1}
\notag
\\
&\qquad\qquad{}
       -R^i\bigParens{\rhob(\omega)} - C^i(\tsb^{i-1}\!+\r^i)
       +\rhob(\omega)\inprod(\tsb^{i-1}\!+\r^i)
     \Bigr]
\notag
\\
&\quad{}=
\sum_{i=1}^N
     \BigBracks{
       D^i(\rhob(\omega)\|\tsb^{i-1})
       -D^i(\rhob(\omega)\|\tsb^{i-1}\!+\r^i)
     }
\notag
\\
&\quad{}=
\sum_{i=1}^N
     \BigBracks{
       D^i(\rhob(\omega)\|\tsb^{i-1})
       -D^i(\rhob(\omega)\|\s^i)
     }
\notag
\\
&\quad{}=
D^1(\rhob(\omega)\|\tsb^0)
+ \sum_{i=1}^{N-1}
     \BigBracks{
       D^{i+1}(\rhob(\omega)\|\tsb^i)
       -D^i(\rhob(\omega)\|\s^i)
     }
\notag
\\
&\qquad\qquad{}
-D^N(\rhob(\omega)\|\s^N)
\notag
\\
&\quad{}\le
D^0(\rhob(\omega)\|\s^0)
\enspace
\notag
\end{align}}%
where the last inequality follows by applications of \Eq{WCL:3} to the
first two terms
and the positivity of $D^N(\cdot\|\cdot)$. Taking the supremum,
we obtain
\begin{align}
&
  \WCL(\C,\AdvanceState;\s^0,t^0)
\notag
\\
&\quad{}
  \le
  \max_{\omega\in\Omega} D^0(\rhob(\omega)\|\s^0)
  =
  \WCL(C^0;\s^0)
\enspace.
\tag*{\qed}
\end{align}
\renewcommand{\qed}{}
\end{proof}



\end{document}